\DeclareMathOperator{\Tr}{Tr}
\let\Re\undefined
\DeclareMathOperator{\Re}{Re}
\let\Im\undefined
\DeclareMathOperator{\Im}{Im}
\DeclareMathOperator{\sgn}{sgn}
\DeclareMathOperator{\diag}{diag}
\newcommand{\Cl}{Cl}
\newtheorem{definition}{Definition}
\newtheorem{theorem}{Theorem}
\newtheorem{lemma}{Lemma}
\begin{document}

\title{Topological classification of non-Hermitian Hamiltonians with frequency dependence}

\author{Maximilian Kotz}
\email{maximilian.kotz@tu-dresden.de}
\affiliation{Institute of Theoretical Physics, Technische Universit\"at Dresden, 01062 Dresden, Germany}

\author{Carsten Timm}
\email{carsten.timm@tu-dresden.de}
\affiliation{Institute of Theoretical Physics, Technische Universit\"at Dresden, 01062 Dresden, Germany}
\affiliation{W\"urzburg-Dresden Cluster of Excellence ct.qmat, Technische Universit\"at Dresden, 01062 Dresden, Germany}

\date{April 13, 2023}

\begin{abstract}
We develop a topological classification of non-Hermitian effective Hamiltonians that depend on momentum and frequency. Such effective Hamiltonians are in one-to-one correspondence to single-particle Green's functions of systems that satisfy translational invariance in space and time but may be interacting or open. We employ \textit{K}-theory, which for the special case of noninteracting systems leads to the well-known tenfold-way topological classification of insulators and fully gapped superconductors. Relevant theorems for \textit{K}-groups are reformulated and proven in the more transparent language of Hamiltonians instead of vector bundles. We obtain 54 symmetry classes for frequency-dependent non-Hermitian Hamiltonians satisfying anti-unitary symmetries. Employing dimensional reduction, the group structure for all these classes is calculated. This classification leads to a group structure with one component from the momentum dependence, which corresponds to the non-Hermitian generalization of topological insulators and superconductors, and two additional parts resulting from the frequency dependence. These parts describe winding of the effective Hamiltonian in the frequency direction and in combined momentum-frequency space.
\end{abstract}

\maketitle

\section{Introduction}
\label{sc:introduction}

In recent years, topology has been established as an important organizing principle in condensed-matter physics, besides symmetry. A major breakthrough was the topological classification of insulators and fully gapped superconductors in ten symmetry classes (the ``tenfold way'') \cite{Zirnbauer, PhysRevB.55.1142, Schnyder_2008, Schnyder_2009, Kitaev_2009, Ryu_2010, Ludwig_2015}. This classification focused on effectively noninteracting Hamiltonians with translational invariance. For such models, one can construct the Bloch Hamiltonian $H(\mathbf{k})$, which is a map from the \textit{d}-dimensional unit cell in reciprocal space into the set of Hermitian matrices.
A vast body of literature on the topological classification of such Bloch Hamiltonians exists; for reviews see, for example, Refs.~\cite{Fruchart_2013, Ludwig_2015, RevModPhys.88.035005}.

The central question is whether two Bloch Hamiltonians with an energy gap, in $d$ spatial dimensions and satisfying certain symmetry constraints, can be continuously deformed into each other. The Hamiltonians for which this is possible form equivalence classes, which are, by construction, topological invariants. The equivalence classes form a  group. This group turns out to depend on the spatial dimension $d$ and on the symmetries. Unitary symmetries lead to a block-diagonal structure of the Hamiltonian. Each of the remaining blocks can have further anti-unitary symmetries, namely time-reversal symmetry or particle-hole symmetry. Taking into account whether these are absent or present, whether they square to the identity or minus the identity, and how they can be combined, one ends up with the aforementioned ten symmetry classes \cite{Zirnbauer, PhysRevB.55.1142}. For these classes, one can calculate the homotopy group depending on the spatial dimension $d$ \cite{Schnyder_2008, Schnyder_2009, Kitaev_2009, Ryu_2010, Ludwig_2015}. The resulting periodic table provides an organizing principle for gapped topological systems and naturally accommodates several known cases, for example the integer quantum Hall effect \cite{PhysRevLett.45.494, RevModPhys.58.519}, while dramatically broadening the scope.

Nontrivial values of topological invariants were found to be related to the existence of states localized at the $(d-1)$-dimensional surface of the system and to associated anomalous response to external fields \cite{PhysRevB.23.5632, PhysRevB.84.125132}. The quantized Hall conductivity is again a prime example.

The topological classification has been extended in various ways, e.g., to systems without an energy gap, where a classification of Fermi surfaces and nodal lines and points is constructed \cite{PhysRevLett.95.016405, PhysRevLett.110.240404, Matsuura_2013, PhysRevB.90.024516, RevModPhys.88.035005}. Another obvious question pertains to the extension to interacting systems. It turns out that the latter extension of the concept of topologically nontrivial systems is ambiguous, i.e., different notions of what makes a system nontrivial lead to different classifications. For example, one can \emph{define} topologically nontrivial states by their anomalous response to external fields \cite{PhysRevB.78.195424}. Other generalizations are based on surface states \cite{RevModPhys.88.035005, PhysRevB.83.075103}, entanglement \cite{1DEntange}, and the excitation gap in the bulk~\cite{PhysRevB.81.134509, PhysRevB.83.075103}.

One idea for the topological classification of interacting systems is to define an effective single-particle Hamiltonian $H_{\text{eff}}(\mathbf{k},\omega)$ that coincides with the Bloch Hamiltonian in the case of noninteracting systems. To that end, one can introduce the (retarded) single-particle Green's function~\cite{Mahan, altland_simons_2010}
\begin{align}
G^R_{nn'}(\mathbf{k},\omega) &= -i \int d^3r \int_0^\infty dt\,
  e^{-i\mathbf{k}\cdot\mathbf{r} + i\omega t} \nonumber \\
&\quad{}\times \big\langle \{ \Psi_n(\mathbf{r},t), \Psi^\dagger_{n'}(0,0) \} \big\rangle ,
\label{eq:DefGR}
\end{align}
where $\Psi_n(\mathbf{r},t)$ is a fermionic quantum-field operator at position $\mathbf{r}$ and time $t$ and the subscript $n$ contains all local degrees of freedom, such as spin, orbital, and basis site. $\{\bullet,\bullet\}$ is the anticommutator. The Green's function is a matrix on the space of the local degrees of freedom. This matrix can always be written in terms of an effective single-particle Hamiltonian as~\cite{PhysRevX.2.031008, BTS13, PhysRevB.98.035141, TRSPHS, PhysRevResearch.1.012003, Yoshida_2020, Lessnich_2021, RevModPhys.93.015005}
\begin{equation}
G^R(\mathbf{k},\omega) = \left[ \omega - H^R_\mathrm{eff}(\mathbf{k},\omega) \right]^{-1}
\label{eq:DefEffHam}
\end{equation}
(an identity matrix is suppressed). The advantage is that the Green's function remains well defined in the presence of interactions or driving, leading to an effective Hamiltonian $H^R_\mathrm{eff}$ that can be classified. In contrast to the Hamiltonian of a noninteracting system, this effective Hamiltonian is generally non-Hermitian. Additionally, it generally depends on frequency, not only on momentum. Previously, the frequency dependence has been ignored by considering $H^R_\mathrm{eff}$ at $\omega=0$ only. For closed interacting systems, all eigenvalues of the retarded effective Hamiltonian $H^R_\mathrm{eff}(\mathbf{k},\omega)$ have non-positive imaginary parts. This means that the only possible scenario is that of a real line gap (to be defined below).

Non-Hermitian Hamiltonians also appear in the context of open and driven quantum systems~\cite{Hatano_1996, Bender_2007, Rotter_2009, Longhi_2015, PhysRevX.8.031079, McDonald_2018, Ramos_2021, RevModPhys.93.015005, McDonald_2022, PhysRevB.106.L121102, roccati2023hermitian}. Driven systems permit non-Hermitian Hamiltonians with eigenvalues with positive or negative imaginary parts, even in the absence of explicit interactions. Positive (negative) imaginary parts correspond to amplification (damping) of eigenstates. A famous example is the Hatano--Nelson model \cite{Hatano_1996, RevModPhys.93.015005, Longhi_2015, PhysRevB.106.L121102}, which describes asymmetric hopping along a chain. This property of the spectrum also applies to the effective Hamiltonian constructed from the retarded Green's function \cite{McDonald_2018, McDonald_2022} and it persists in the presence of interactions \cite{PhysRevB.106.L121102}. To the best of our knowledge, the frequency dependence of the effective Hamiltonian has not been studied in this context.

In this paper, we study the consequences of the frequency dependence of $H^R_\mathrm{eff}(\mathbf{k},\omega)$. The frequency dependence does not simply increase the dimension from $d$ to $d+1$ since frequency and momentum behave differently under the relevant global symmetries. This has two consequences: First, there is a larger number of possible symmetry classes, and, second, nontrivial topology is possible due to the frequency dependence alone, the momentum dependence alone, and the combined dependence. This leads to a much richer topological classification.

The remainder of this paper is organized as follows. In Sec.\ \ref{sc:ClsymmClass}, we define and calculate the \textit{K}-groups of effective Hamiltonians. The calculation is based on existing ideas but uses the more transparent language of matrix-valued functions instead of vector bundles. We then consider the consequences of the frequency dependence for the symmetry classes, review the flattening procedure for non-Hermitian Hamiltonians as well as the dimensional reduction. We then present the resulting classification and discuss a few examples. For illustration, we analyze a toy model that exhibits nontrivial topology due to the frequency dependence in Sec.\ \ref{sc:example}. In Sec.\ \ref{sc:concl}, we summarize the results and draw some conclusions.

\section{Topological group structure}
\label{sc:ClsymmClass}

The goals of this section are to obtain the possible symmetry classes of frequency-dependent non-Hermitian Hamiltonians and to characterize the topological properties of such Hamiltonians belonging to these symmetry classes. The method involves several different mathematical concepts, which are discussed in the following.

\subsection{General properties of \textit{K}-groups}
\label{sc:Kgroup}

In this section, the \textit{K}-groups of Hermitian Hamiltonians are defined and some of their properties are presented. The general case of non-Hermitian Hamiltonians can be reduced to this case, as we will see below. The definition of the groups follows the definitions leading to the well-known topological insulators and superconductors. The term ``Hamiltonian'' is inspired by the free case but in this context can refer to any matrix-valued function, e.g., a correlation function, Green's function, or effective Hamiltonian as in Eq.~(\ref{eq:DefEffHam}). The mathematical part of this section is mainly inspired by Ref.\ \cite{Karoudi} but instead of dealing with \textit{K}-groups of vector bundles like in Kitaev's original approach \cite{Kitaev_2009}, we directly define the group for Hamiltonians to make the calculation more transparent.

\begin{definition}[Category $\mathcal{H}(X, \Omega)$]
$\mathcal{H}(X, \Omega)$ is a category. Its objects (called ``Hamiltonians'') are smooth maps from the manifold $X$ to Hermitian matrices with an energy gap (no eigenvalue equal to $0$ on $X$) and symmetries given by the set $\Omega$. Morphisms are smooth maps between the Hamiltonians.
\end{definition}

The category $\mathcal{H}(X, \Omega)$ is additive induced by the direct sum $\oplus$ of two Hamiltonians:
\begin{equation}
    H_1(\mathbf{k},\omega) \oplus  H_1(\mathbf{k},\omega) = \begin{pmatrix}
            H_1(\mathbf{k},\omega) & 0 \\
            0 &H_2(\mathbf{k},\omega) 
            \end{pmatrix} .
\end{equation}

\begin{definition}[Isomorphic Hamiltonians]
\label{definition:isomorphism}
Two objects of $\mathcal{H}(X, \Omega)$ are called isomorphic (denoted by ``$\,\approx$'') if they are homotopy equivalent and the corresponding homotopy function $H(\mathbf{k},\omega,t)$ for all $t$ has an energy gap and respects the symmetries in $\Omega$.
\end{definition}

Two Hamiltonians are generally not isomorphic if they are point-wise unitarily equivalent. 
In the following, just the symbol $\mathcal{H}$ is used, keeping in mind the dependence on $X$ and $\Omega$.

\begin{lemma}[Flattened Hamiltonian]
Every object of $\mathcal{H}$ is isomorphic to a Hamiltonian with eigenvalues $\pm 1$ \cite{Kitaev_2009}. 
\end{lemma}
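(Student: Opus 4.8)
The plan is to construct the isomorphism explicitly via the continuous functional calculus. Given an object $H\in\mathcal{H}$, i.e.\ a smooth map $(\mathbf{k},\omega)\mapsto H(\mathbf{k},\omega)$ into gapped Hermitian matrices, define the \emph{flattened Hamiltonian}
\begin{equation}
  Q(\mathbf{k},\omega) := \sgn\!\big(H(\mathbf{k},\omega)\big) = P_+(\mathbf{k},\omega) - P_-(\mathbf{k},\omega),
\end{equation}
where $P_\pm$ are the spectral projections of $H(\mathbf{k},\omega)$ onto its positive and negative eigenspaces. Then $Q$ is Hermitian and $Q^2 = 1$, so its eigenvalues are $\pm 1$. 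I would show $H\approx Q$ by exhibiting the straight-line homotopy
\begin{equation}
  H_t(\mathbf{k},\omega) := (1-t)\,H(\mathbf{k},\omega) + t\,Q(\mathbf{k},\omega), \qquad t\in[0,1],
\end{equation}
and verifying the three conditions of Definition~\ref{definition:isomorphism}: smoothness in all arguments (including $t$), an energy gap for every $t$, and compatibility with all symmetries in $\Omega$.

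The gap is the easy part. Since $H$ commutes with $P_\pm$, so does $H_t$; on the range of $P_+$ (resp.\ $P_-$) the restriction of $H_t$ has eigenvalues $(1-t)\lambda + t$ with $\lambda > 0$ (resp.\ $(1-t)\lambda - t$ with $\lambda < 0$), none of which vanish for $t\in[0,1]$. Hence every $H_t$ is gapped, and $H_1 = Q$ has eigenvalues $\pm 1$, which is the claim once the other two conditions are checked.

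Smoothness requires a little care, because individual eigenvalues of $H(\mathbf{k},\omega)$ may cross one another as $(\mathbf{k},\omega)$ varies, so the separate eigenprojections are not smooth. The gap saves the argument: $0$ never lies in the spectrum, so I would write the negative-energy projection as a Riesz projection,
\begin{equation}
  P_-(\mathbf{k},\omega) = \frac{1}{2\pi i}\oint_{\Gamma}\big(z - H(\mathbf{k},\omega)\big)^{-1}\,dz,
\end{equation}
with $\Gamma$ a contour encircling the negative part of the spectrum while staying in the resolvent set. Such a contour exists locally by continuity of $H$ and the gap, and globally one either uses a single fixed large contour around the lower half-plane or patches local choices together, using that the value of $P_-$ is independent of the particular contour. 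Since the resolvent depends smoothly on $H$ on the resolvent set, $P_-$, hence $Q = 1 - 2P_-$, hence $H_t$, are smooth in $(\mathbf{k},\omega,t)$.

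For the symmetries, each element of $\Omega$ imposes a relation of the schematic form $U\,H(\mathbf{k},\omega)^{\sharp}\,U^{-1} = \eta\,H\!\big(\sigma(\mathbf{k},\omega)\big)$, with $U$ unitary, $\sharp$ either absent or complex conjugation, $\eta=\pm 1$ (for chiral/particle-hole versus time-reversal), and $\sigma$ an involution on $X$ (possibly flipping $\mathbf{k}$ and/or $\omega$). Because $\sgn$ is an odd function with real coefficients, it commutes with conjugation by $U$, with complex conjugation on Hermitian matrices, and with multiplication by $\eta$; therefore $Q$ satisfies exactly the same relation as $H$. Since that relation is $\mathbb{R}$-linear in $H$, the convex combination $H_t$ satisfies it for every $t$, so $H_t$ is a path in $\mathcal{H}$ and $H\approx Q$. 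The main obstacle I anticipate is precisely the smoothness/globalization step — establishing that $P_-$ depends smoothly on $(\mathbf{k},\omega)$ over the whole manifold in spite of eigenvalue crossings — which is exactly where the gap hypothesis does the work.
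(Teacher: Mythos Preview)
Your proof is correct and takes essentially the same approach as the paper—linearly interpolating the eigenvalues of $H$ to $\sgn E_i$ while holding the eigenspaces fixed. Your Riesz-projection argument for the smoothness of $Q$ is in fact more careful than the paper's version, which writes $H=U\,\diag(E_i)\,U^\dagger$ and interpolates in that representation, tacitly assuming a smooth global diagonalization that need not exist at eigenvalue crossings.
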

    
\begin{proof}
Every Hamiltonian $H(\mathbf{k},\omega)$  can be represented as
\begin{equation}
H(\mathbf{k},\omega) = U(\mathbf{k},\omega) 
         \begin{pmatrix}
           E_1(\mathbf{k},\omega) & 0 & \dots\\
           0 & E_2(\mathbf{k},\omega) & \dots \\
           \vdots & \vdots & \ddots
         \end{pmatrix}
       U^\dagger(\mathbf{k},\omega) .
\end{equation}
Use as homotopy function a Hermitian matrix in the same representation but with eigenvalues  interpolating from their original values $E_i(\mathbf{k},\omega)$ to $\sgn E_i(\mathbf{k},\omega) = \pm 1$. It is straightforward to show that this homotopy function satisfies the conditions of Definition \ref{definition:isomorphism}~\cite{Kitaev_2009}.
\end{proof}

In this way, the dependence on $\mathbf{k}$ and $\omega$ is absorbed into the eigenvectors, while the eigenvalues have trivial behavior. The intuitive justification for this procedure is that the eigenvalues are real and thus cannot accumulate a nontrivial phase when $(\mathbf{k},\omega)$ is moved along a closed path. In the following, we deal with flattened Hamiltonians. These Hamiltonians square to the identity matrix.

\begin{definition}[Trivial elements]
The constant diagonal Hamiltonians with eigenvalues $\pm 1$ are called trivial and denoted by $0$.
\end{definition}

Note that the number of positive and negative eigenvalues is not of interest for this definition. However, some symmetry classes restrict the trivial elements to have an equal number of positive and negative eigenvalues.

\begin{lemma}
\label{lemma:consttriv}
Every constant Hamiltonian $H(\mathbf{k},\omega)=H$ over a connected space is isomorphic to a trivial element. 
\end{lemma}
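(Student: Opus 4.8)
The plan is to reduce to a flattened Hamiltonian, diagonalize the single matrix value by a constant unitary, and then contract that unitary to the identity inside the subgroup of unitaries compatible with $\Omega$. Since $X$ is connected, ``$H(\mathbf{k},\omega)=H$'' really means a single fixed matrix $H$, so there is no $(\mathbf{k},\omega)$-dependence to deform; by the Flattened-Hamiltonian Lemma we may in addition take $H^2=\mathbb{1}$, with eigenvalue multiplicities $p$ (for $+1$) and $q$ (for $-1$). Choose a constant unitary $U$ and a diagonal reference $H_0$ with these eigenvalues such that $H=UH_0U^\dagger$, and use as candidate isomorphism the homotopy
\begin{equation}
H(\mathbf{k},\omega,t)=U(t)\,H_0\,U(t)^\dagger ,
\end{equation}
with $U(t)$ a path of unitaries from $U(0)=\mathbb{1}$ to $U(1)=U$. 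This homotopy is constant in $(\mathbf{k},\omega)$ and, because $U(t)$ is unitary, Hermitian with spectrum $\{\pm1\}$ for every $t$, so the energy-gap condition of Definition~\ref{definition:isomorphism} holds automatically.

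What remains is to keep the symmetries in $\Omega$ satisfied along the path. For a constant Hamiltonian the symmetry conditions are algebraic relations: $H$ commutes with the unitary symmetries, commutes or anticommutes with the anti-unitary operators (acting as $u\,(\cdot)^{*}\,u^{-1}$), and anticommutes with a chiral operator if one is present. A short computation shows that if $H_0$ is chosen to obey these same relations and if $U$, hence the entire path $U(t)$, lies in the group $G_\Omega$ of unitaries commuting with every operator of $\Omega$, then each $H(t)=U(t)H_0U(t)^\dagger$ again obeys them. The claim is thereby reduced to the group-theoretic statement that $U$ can be joined to $\mathbb{1}$ by a path inside $G_\Omega$. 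For the complex symmetry classes $G_\Omega$ is a product of unitary groups and is path-connected, so the proof is complete there.

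The main obstacle is the real classes, where $G_\Omega$ can be an orthogonal or compact symplectic group and need not be connected. The resolution is that one does not actually need $G_\Omega$ itself to be connected, only that $H_0$ and the conjugate $UH_0U^\dagger=H$ be joinable through symmetric flattened Hamiltonians; equivalently, that the identity component of the relevant structure group already act transitively on the space of constant flattened Hamiltonians of the given symmetry type and signature. For the Grassmannian-type configuration spaces (such as $O(n)/(O(p)\times O(q))$ and its quaternionic and unitary analogues) this space is connected and contains $H_0$, so $H\approx H_0$. In the remaining cases the configuration space has two components (type $O(2m)/U(m)$); there one takes $H_0$ to be the diagonal $\pm1$ reference lying in the same component as $H$ --- still a trivial element, since the Definition deliberately collects \emph{all} constant diagonal $\pm1$ matrices under the symbol $0$ --- and contracts within that component, invoking connectedness of $U(n)$, $SO(n)$ or $Sp(n)$ as appropriate. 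Concretely I would organize the write-up as: use connectedness of $X$ and flatten; pass to a standard basis of the class in which the symmetry operators and $H_0$ are in canonical form; block-diagonalize $H$ compatibly with the unitary part of $\Omega$; then finish block by block with the connectedness facts just listed, which is where essentially all of the case analysis sits.
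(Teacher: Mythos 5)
Your proposal follows essentially the same route as the paper's proof: flatten, diagonalize the constant matrix by a fixed unitary, and contract that unitary to the identity along a path (the paper takes $U(t)=e^{itY}$ with $e^{iY}=U$), so that $H(t)=U(t)\,H_0\,U(t)^\dagger$ interpolates between $H$ and a diagonal $\pm 1$ reference while keeping the gap. The additional care you invest---demanding that the path stay inside the symmetry-compatible unitaries and handling the disconnected real groups through connectedness of the configuration space, choosing the diagonal reference in the component of $H$---is precisely the step the paper compresses into the one-line assertion that the homotopy ``has the same symmetries as $H$,'' so yours is a more carefully justified version of the same argument rather than a different one.
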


\begin{proof} The flattened Hamiltonian is written in a diagonal form:
\begin{equation}
H = U     \begin{pmatrix}
            \pm 1 & 0     & \dots\\
            0     & \pm 1 & \dots \\
            \vdots & \vdots & \ddots
          \end{pmatrix}
        U^\dagger.
\end{equation}
We note that the diagonal matrix has to be everywhere the same because the space is connected.
There is a matrix $Y$ with the property $e^{iY} = U$. Now the  homotopy function can be set to
\begin{equation}
H(t) = U(t) 
          \begin{pmatrix}
            \pm 1 & 0     & \dots\\
            0     & \pm 1 & \dots \\
            \vdots & \vdots & \ddots
          \end{pmatrix}
        U(t)^\dagger,
\end{equation}
with $e^{itY} = U(t)$ and $t \in [0,1]$. The function has the same symmetries as $H$, is smooth, and is trivial for \mbox{$t=0$}.
\end{proof}

Note that the proof fails in the case of disconnected spaces because the number of positive and negative eigenvalues can differ in each connected component. Furthermore, it fails for $
(\mathbf{k},\omega)$-dependent Hamiltonians because there is no guarantee for the existence of a smooth function $Y(\mathbf{k},\omega)$. This can be seen by thinking about the logarithm of complex numbers.

\begin{lemma}
For every Hamiltonian $H(\mathbf{k},\omega) \in \mathcal{H}$: $H \oplus -H \approx 0$ \cite{Kitaev_2009}. 
\end{lemma}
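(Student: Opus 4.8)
The plan is to reduce to the flattened case and then exhibit an explicit homotopy that ``rotates'' $H \oplus (-H)$ into a constant Hamiltonian, after which the preceding Lemma~\ref{lemma:consttriv} does the rest. First I would invoke the Flattened Hamiltonian lemma and assume $H^2 = I$, so that $H \oplus (-H)$ also squares to the identity. Suppressing the arguments $(\mathbf{k},\omega)$, I would then take as homotopy function
\begin{equation}
H(t) = \begin{pmatrix} H\cos t & I\sin t \\ I\sin t & -H\cos t \end{pmatrix},
\qquad t \in [0,\pi/2],
\end{equation}
where $I$ is the identity matrix of the appropriate size. This is smooth in $t$ and in $(\mathbf{k},\omega)$; it is Hermitian because $H$ is and $\cos t,\sin t$ are real; and, since $H^2 = I$ and $H$ commutes with scalar multiples of the identity, a one-line computation gives $H(t)^2 = I$, so every $H(t)$ has an energy gap. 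At the endpoints, $H(0) = H \oplus (-H)$ and $H(\pi/2) = \big(\begin{smallmatrix} 0 & I \\ I & 0 \end{smallmatrix}\big)$, which is a constant Hamiltonian.

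It remains to verify that $H(t)$ respects the symmetries in $\Omega$ for all $t$. The symmetry operators are lifted to the doubled space as direct sums of their single-copy versions, with the one caveat that a symmetry $S$ under which $H$ is odd ($SHS^{-1} = -H$, e.g.\ a chiral or particle--hole symmetry) should be lifted to act as $-S$ on the second copy, i.e.\ as $S\otimes\sigma_z$; both summands are then still admissible symmetry operators (for $H$ and for $-H$, respectively), so the lift is an admissible symmetry of $H\oplus(-H)$. With these lifts one checks directly that $H(t)$ transforms under the lifted operators just as $H$ transforms under the original ones; the key point is that the interpolating block $I\sin t$ is real and constant, so conjugation by a (possibly antiunitary) symmetry operator can at most flip its sign, and it does so in lockstep with the $H$-blocks. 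Hence the homotopy never leaves the symmetry class.

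Combining the two parts, transitivity of $\approx$ together with Lemma~\ref{lemma:consttriv} (a constant Hamiltonian over a connected space is isomorphic to a trivial element) yields $H \oplus (-H) \approx H(\pi/2) \approx 0$. The step I expect to be the main obstacle is the symmetry bookkeeping: one has to run through the symmetry types that can occur in $\Omega$ and check, in each case, both that the chosen lift acts correctly on $H\oplus(-H)$ and that it is compatible with the interpolating term. This is also where it becomes essential that $-H$, equipped with the correspondingly chosen operators, is itself a legitimate object of $\mathcal{H}$ --- which is exactly what makes it an additive inverse of $H$.
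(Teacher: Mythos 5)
Your proof is correct and follows essentially the same route as the paper: a rotation homotopy taking the flattened $H\oplus(-H)$ to a constant, purely off-diagonal involution (the paper uses $\bigl(\begin{smallmatrix}0 & iI\\ -iI & 0\end{smallmatrix}\bigr)$, you use $\bigl(\begin{smallmatrix}0 & I\\ I & 0\end{smallmatrix}\bigr)$), with the gap protected because the two matrices square to the identity and anticommute, followed by Lemma~\ref{lemma:consttriv} for constant Hamiltonians. Your explicit symmetry bookkeeping --- lifting operators of symmetries under which $H$ is odd with a relative sign ($S\otimes\sigma_z$) on the second copy, so that the real interpolating block transforms in lockstep with the $H$-blocks --- is precisely the step the paper leaves implicit in asserting that ``every interpolation satisfies the symmetry condition,'' and it checks out.
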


\begin{proof}
We have
\begin{align}
H(\mathbf{k},\omega) \oplus -H(\mathbf{k},\omega)
     &= \begin{pmatrix}
          H(\mathbf{k},\omega) & 0 \\
          0   & -H(\mathbf{k},\omega) 
        \end{pmatrix} \nonumber \\
&\approx \begin{pmatrix}
          0 & i I \\
          -i I & 0 
        \end{pmatrix} ,
\end{align}
with identity matrices $I$, where the isomorphism holds because every interpolation between these matrices satisfies the symmetry condition. Furthermore, the energy gap does not close because both matrices square to the identity and anticommute. By Lemma \ref{lemma:consttriv} we then have
\begin{equation}
H(\mathbf{k},\omega) \oplus -H(\mathbf{k},\omega) \approx 0 .
\end{equation}
\end{proof}

\begin{definition}[The set $\Phi(\mathcal{H})$]
Let $H$ be an object of $\mathcal{H}$.
The isomorphism class of $H$ with respect to the isomorphism from Definition \ref{definition:isomorphism} is denoted by $\dot{H}$. 
The set of all such classes is called $\Phi(\mathcal{H})$.
\end{definition}

$\Phi(\mathcal{H})$ is an Abelian group induced by the additive structure of $\mathcal{H}$. The neutral element is given by the class containing the trivial Hamiltonian. The inverse of $\dot{H}$ is~$-\dot{H}$.

\begin{definition}[The set $K(\mathcal{H})$]
\label{definition:KH}
$K(\mathcal{H}) = \Phi(\mathcal{H})/{\sim}$ with the equivalence relation $\sim$ defined as follows: For two isomorphism classes $\dot{H}$ and $\dot{H'}$,
\begin{equation}
\dot{H} \sim \dot{H'} 
\end{equation}
iff there exist trivial Hamiltonians $T_1$, $T_2$ so that
\begin{equation}
H \oplus T_1 \approx H' \oplus T_2 ,
\end{equation}
where $H$ and $H'$ are arbitrary elements of $\dot{H}$ and $\dot{H'}$, respectively. Elements of $K(\mathcal{H})$ are denoted by $[H]$.
\end{definition}

This definition means that flat bands can be added to the Hamiltonian without changing the topological properties of interest. The definition also simplifies further derivations. $K(\mathcal{H})$ is an Abelian group since $\Phi(\mathcal{H})$ is. In the following, we denote the group by $K(X)$ to show the explicit dependence on the underlying space, which is the main focus of this section.

We aim to calculate $K(X)$ for the various symmetry classes and suitable spaces $X$. The space $X$ is the product of momentum space, which is the \textit{d}-dimensional torus $T^{d}$, and the frequency space, which is isomorphic to the real numbers $\mathbb{R}$. We replace the torus $T^d$ by the sphere $S^d$, thereby excluding the description of weak topological insulators and superconductors \cite{Fu_2007a, Kitaev_2009, Ryu_2010}. The spheres $S^d$ can also be motivated by starting from the momentum space $\mathbb{R}^d$ for continuum models and compactifying the infinite to a single point \cite{Ryu_2010, Budich_2013}. The frequency axis can also be compactified to a circle $S^1$ provided that for $\omega \to \pm \infty$ the effect of interactions becomes small so that the effective Hamiltonian approaches the free one \footnote{If this assumption is not satisfied one can disregard the frequency dependence, which leads to the theory in Ref.~\cite{TRSPHS}.}, perhaps except for a frequency-independent Hartree term. The frequency axis is naturally compactified for a time-periodic Hamiltonian within Floquet theory \cite{ASENS_1883_2_12__47_0, GRIFONI1998229, RevModPhys.89.011004}.
On the other hand, momentum and frequency are physically distinct quantities and there is no justification for compactifying the momentum-frequency space to $S^{d+1}$. The relevant spaces are thus $X = S^{d} \times S^{1}$. For quantum dots ($d=0$), the first set is not $S^0$ but contains only a single point and can be neglected. Note that quantum dots can still have a nontrivial frequency dependence. An example for this is shown in Sec.~\ref{sc:example}.

\begin{theorem}
\label{th:1}
 Let X be a compact space and A a compact connected retractable subspace of X. The following sequence is split exact \cite{Husemoller,Karoudi}:
\begin{equation}
0 \rightarrow K(X/A) \stackrel{\tilde{\beta}}{\longrightarrow} K(X) \stackrel{\tilde{\alpha}}{\longrightarrow} K(A) \rightarrow 0 ,
\end{equation}
where $X/A$ is the quotient space (one-point compactification) of $X$ and $A$ and the functors $\tilde{\alpha}$ and $\tilde{\beta}$ are induced by the natural inclusion $\alpha$ and restriction $\beta$,
\begin{equation}
A \stackrel{\alpha}{\longrightarrow} X \stackrel{\beta}{\longrightarrow} X/A .
\end{equation}
\end{theorem}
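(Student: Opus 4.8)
The plan is to verify, in order: (i) that $\tilde{\alpha}$ and $\tilde{\beta}$ are well-defined group homomorphisms; (ii) that $\tilde{\alpha}$ is surjective and the sequence splits; (iii) that $\operatorname{im}\tilde{\beta}\subseteq\ker\tilde{\alpha}$; and (iv) the two genuinely topological facts, exactness in the middle ($\ker\tilde{\alpha}\subseteq\operatorname{im}\tilde{\beta}$) and injectivity of $\tilde{\beta}$. For step (i): restriction $H\mapsto H|_A$ (precomposition with $\alpha$) and pull-back $H'\mapsto H'\circ\beta$ both send gapped symmetric Hamiltonians to gapped symmetric Hamiltonians, commute with $\oplus$, take trivial Hamiltonians to trivial Hamiltonians, and carry a homotopy of the type in Definition~\ref{definition:isomorphism} to one of the same type; hence they descend to homomorphisms $\tilde{\alpha}\colon K(X)\to K(A)$ and $\tilde{\beta}\colon K(X/A)\to K(X)$. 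For step (iii): $\beta\circ\alpha\colon A\to X/A$ is the constant map onto the collapsed point, so $(H'\circ\beta)|_A$ is a constant Hamiltonian over the connected space $A$ and hence isomorphic to a trivial element by Lemma~\ref{lemma:consttriv}, giving $\tilde{\alpha}\tilde{\beta}=0$. (This is the point at which connectedness of $A$ is used.)

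For step (ii) I would use that $A$ is a retract of $X$: fixing a retraction $r\colon X\to A$ with $r\circ\alpha=\operatorname{id}_A$, the argument of step (i) makes $H\mapsto H\circ r$ a homomorphism $\tilde{r}\colon K(A)\to K(X)$, and $(H\circ r)|_A=H\circ(r\circ\alpha)=H$ gives $\tilde{\alpha}\circ\tilde{r}=\operatorname{id}_{K(A)}$. Hence $\tilde{\alpha}$ is onto and $\tilde{r}$ splits the sequence, which settles exactness at $K(A)$.

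Step (iv) is the heart of the matter, and here I would transcribe into Hamiltonian language the clutching argument underlying the cited results~\cite{Husemoller,Karoudi}. For $\ker\tilde{\alpha}\subseteq\operatorname{im}\tilde{\beta}$: given $[H]$ with $[H|_A]=0$, add a trivial Hamiltonian to $H$ (changing neither $[H]$ nor the hypothesis) so that $H|_A\approx T_A$ for a constant trivial $T_A$. Since $A$ is closed in the compact space $X$, both the isomorphism $H|_A\approx T_A$ and the constant Hamiltonian $T_A$ extend over an open neighborhood $U\supseteq A$ — a Hamiltonian given on a closed subspace of a compact space extends over a neighborhood, e.g.\ by Tietze-extending a representative and then flattening its spectrum, which is legitimate near $A$ — so that $H|_U\approx T_A$ with $T_A$ now constant on $U$. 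Using this isomorphism together with a cutoff in the $U$-direction, one deforms $H$ to an isomorphic $H_1$ that equals the constant $T_A$ on a smaller neighborhood of $A$; such an $H_1$ factors as $H_1=H_1'\circ\beta$ for a unique $H_1'\in\mathcal{H}(X/A,\Omega)$ (continuous on $X/A$ because $\beta$ is a quotient map, with the constancy near $A$ taking care of the collapsed point), whence $[H]=\tilde{\beta}[H_1']$. Injectivity of $\tilde{\beta}$ is the most delicate point: if $\tilde{\beta}[H']=0$ then $H'\circ\beta\approx T_X$ on $X$ via a homotopy whose two endpoints are already constant on $A$, and the same extension-and-cutoff construction applied to this homotopy relative to its endpoints turns it into a homotopy that is constant near $A$ for every parameter value and hence descends to $X/A$, yielding $[H']=0$; alternatively one may quote the cofiber-sequence argument of Refs.~\cite{Husemoller,Karoudi}, in which the retraction $r$ makes the connecting homomorphism into $K(X/A)$ vanish.

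The step I expect to be the main obstacle is making these extension-and-cutoff constructions equivariant with respect to all the (anti-unitary) symmetries in $\Omega$ without ever closing the energy gap: the Tietze extension and the spectral flattening have to be performed so as to respect the symmetry constraints — which is possible because these are closed, (anti-)linear conditions, and one can symmetrize the extension — and the relative version needed for injectivity of $\tilde{\beta}$, namely deforming a homotopy rel its endpoints to be constant near $A$, is the fiddliest piece of bookkeeping. The remaining verifications are routine.
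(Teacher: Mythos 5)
Your proposal is correct and follows essentially the same route as the paper's own proof: the retraction furnishes the section giving surjectivity of $\tilde{\alpha}$ and the splitting, the constancy of $\beta\circ\alpha$ together with Lemma~\ref{lemma:consttriv} gives $\mathrm{Im}\,\tilde{\beta}\subseteq\mathrm{Ker}\,\tilde{\alpha}$, deforming $H$ to be constant on a neighborhood of $A$ and descending to $X/A$ gives $\mathrm{Ker}\,\tilde{\alpha}\subseteq\mathrm{Im}\,\tilde{\beta}$, and a homotopy chosen constant near $A$ (hence with a preimage on $X/A$) gives injectivity of $\tilde{\beta}$. Your additional care about symmetry-equivariant extension and the relative (rel endpoints) deformation only makes explicit points the paper leaves implicit.
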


\begin{proof}
There are five statements to be proven:
\begin{itemize}
  \item[(i)] $\tilde{\beta}$ is injective,
  \item[(ii)] $\tilde{\alpha}$ is surjective,
  \item[(iii)] $\mathrm{Im}(\tilde{\beta})\subseteq \mathrm{Ker}(\tilde{\alpha})$,
  \item[(iv)] $\mathrm{Ker}(\tilde{\alpha})\subseteq \mathrm{Im}(\tilde{\beta})$,
  \item[(v)] splitting of the groups, i.e., $K(X) = K(X/A) \oplus K(A)$.
\end{itemize}

(i) Injectivity is equivalent to $\tilde{\beta}([H]) = 0 \Rightarrow [H] = 0$. Because of Definition \ref{definition:KH} there is in $K(X)$ a homotopy function between the Hamiltonian and the trivial element, perhaps after adding flat bands. One can choose a homotopy function that is constant in a neighborhood of $A$ because both $\tilde{\beta}([H])$ and the trivial element $0$ are constant on $A$. This implies that this homotopy function has a preimage in $K(X/A)$. This homotopy function interpolates in $K(X/A)$ between $H$ and the trivial element. 

(ii) Because $A$ is a retract of $X$ there is a  smooth function $\gamma$ from $X$ to $A$ that induces a map from $K(A)$ to $K(X)$. Furthermore, $\gamma \circ \alpha  = \mathrm{id}_{A}$ so that $\tilde{\gamma} \circ \tilde{\alpha} = \mathrm{id}_{K(A)}$. This implies the surjectivity of $\tilde{\alpha}$.

(iii) We consider the composition
\begin{equation}
\tilde{\alpha} \circ\tilde{\beta} ([H]) = [H\circ\beta\circ\alpha] .
\end{equation}
Since $\beta\circ\alpha$ is a constant function the map $H\circ\beta\circ\alpha$  is also constant. A constant Hamiltonian is isomorphic to a trivial Hamiltonian based on Lemma~\ref{lemma:consttriv}.

(iv) We choose a Hamiltonian $H' \in K(X)$ in the kernel of $ \tilde{\alpha}$:
\begin{equation}
\tilde{\alpha}([H']) = 0 .
\end{equation}  
One can use the preimage of the homotopy function in $K(A)$ to see that $H'|_{A}\approx 0$. Because of the smoothness of the Hamiltonians there is a neighborhood $V$ of $A$ on which $H'|_{V} \approx 0$. This implies that there is a homotopy function that interpolates between the Hamiltonian and the trivial element in this region. This homotopy-equivalent function $\tilde{H}$ is isomorphic to the original $H'$ but is constant on $V$. This new Hamiltonian has a preimage in $K(X/A)$.
 
(v) $\tilde{\gamma} \circ \tilde{\alpha} = \mathrm{id}_{K(A)}$ is also the condition for the splitting lemma \cite{Split}, which implies the splitting of the groups if the sequence is exact.
\end{proof}

We now choose $X=S^d \times S^1$ and $A=S^d \times {\omega_0} \cup {k_0} \times S^1$, where $k_0 \in S^d$ and $\omega_0 \in S^1$ are arbitrary but fixed elements. This implies $X/A=S^{d+1}$ \cite{Husemoller,Karoudi}. With Theorem \ref{th:1}, one obtains
\begin{equation}
K(S^d \times S^1) = K(S^{d+1}) \oplus K(S^d \times {\omega_0} \cup {k_0} \times S^1) ,
\label{KKK.5}
\end{equation}
where we can interpret $S^d$ as compactified momentum space and $S^1$ as compactified frequency space~\cite{DiffTopandQFT}.

\begin{theorem}
\label{th:2}
Let $X$ and $Y$ be compact connected spaces and $x_0 \in X$ and $y_0 \in Y$. Then the following relation holds \cite{DiffTopandQFT, Husemoller}: 
\begin{equation}
K(X\times {y_0} \cup {x_0} \times Y) = K(X) \oplus K(Y) .
\end{equation}
\end{theorem}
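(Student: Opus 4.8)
The plan is to obtain Theorem~\ref{th:2} as a direct corollary of Theorem~\ref{th:1}. Write $Z := X\times\{y_0\}\cup\{x_0\}\times Y$ for the space in question, viewed as a closed subspace of $X\times Y$, and take $A := \{x_0\}\times Y \cong Y$. Since $X$ and $Y$ are compact, $Z$ is a finite union of compact sets and hence compact, so it can play the role of ``$X$'' in Theorem~\ref{th:1}; and $A\cong Y$ is compact and connected, matching the hypotheses on the subspace there. (One should keep in mind that $Z$ and $A$ are here glued unions of manifolds rather than manifolds, but this is exactly the kind of space already used in the paper, e.g.\ the set $A$ appearing just after Eq.~(\ref{KKK.5}), so the ambient notion of ``smooth map'' applies verbatim.)

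Next I would exhibit $A$ as a retract of $Z$, which is the remaining hypothesis of Theorem~\ref{th:1}. Define $\gamma\colon Z\to A$ by $\gamma(x_0,y)=(x_0,y)$ on $\{x_0\}\times Y$ and $\gamma(x,y_0)=(x_0,y_0)$ on $X\times\{y_0\}$. The two prescriptions agree on the overlap $\{(x_0,y_0)\}$, so $\gamma$ is well defined, continuous, smooth on each of the two pieces, and satisfies $\gamma|_A=\mathrm{id}_A$. Hence $A$ is a compact connected retract of $Z$, and Theorem~\ref{th:1} yields the split direct sum
\begin{equation}
K(Z)=K(Z/A)\oplus K(A).
\end{equation}

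It then remains to identify the two summands. Since $A\cong Y$, we have $K(A)=K(Y)$. For the quotient, observe that $A=\{x_0\}\times Y$ meets the rest of $Z$, namely $X\times\{y_0\}$, only in the single point $(x_0,y_0)$; collapsing $A$ to a point therefore leaves $X\times\{y_0\}$ unchanged except that its distinguished point $(x_0,y_0)$ is relabeled as the collapsed point. Consequently $Z/A$ is homeomorphic to $X$ (with $x_0$ corresponding to the image of $A$), so $K(Z/A)=K(X)$. Combining, $K(Z)=K(X)\oplus K(Y)$, which is the claim. (Choosing instead $A=X\times\{y_0\}$ and retracting the other factor gives the same group with the roles of $X$ and $Y$ interchanged.)

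I do not expect a genuine obstacle: all the work is in checking that Theorem~\ref{th:1} applies, and the two points needing a little care are (a) that the piecewise-defined $\gamma$ is an admissible retraction in the same (mildly informal) smooth category used elsewhere in the paper, and (b) the homeomorphism $Z/A\cong X$, which is intuitively obvious but should be written out carefully so as not to conflate collapsing the ``large'' subspace $\{x_0\}\times Y$ with collapsing only the basepoint. Both are routine once spelled out, and for the spaces actually used later ($X=S^d$, $Y=S^1$) there are no point-set subtleties at all.
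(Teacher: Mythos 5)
Your proof is correct at the paper's level of rigor, but it takes a genuinely different route from the paper. The paper proves Theorem~\ref{th:2} directly: the restriction maps to the two arms give the map $K(X\times\{y_0\}\cup\{x_0\}\times Y)\to K(X)\oplus K(Y)$, and for the converse it takes representatives on $X$ and on $Y$, uses Lemma~\ref{lemma:consttriv} (the exponential interpolation) to deform each to be constant, hence trivial, in a neighborhood of $x_0$ resp.\ $y_0$, and then glues them (up to added flat bands) into a single Hamiltonian on the wedge. You instead obtain the statement as a corollary of Theorem~\ref{th:1}, taking the wedge $Z$ itself as the total space and $A=\{x_0\}\times Y\cong Y$ as the subspace: your collapse map is an honest retraction of $Z$ onto $A$ (unlike the paper's own later application of Theorem~\ref{th:1}, where the wedge is not literally a retract of the product and the splitting really comes from pullbacks of projections), and the identification $Z/A\cong X$ then yields $K(Z)=K(X)\oplus K(Y)$. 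What your approach buys is economy and completeness: injectivity and surjectivity of the comparison map both come packaged in the split exact sequence, and the retraction hypothesis is satisfied literally. What the paper's approach buys is that it avoids feeding the non-manifold wedge into Theorem~\ref{th:1} as the \emph{ambient} space (only as a subspace or as the object whose $K$-group is computed), and it makes the isomorphism explicit by exhibiting the glued representative, which is the construction actually reused later. Both arguments ultimately rest on Lemma~\ref{lemma:consttriv} — yours indirectly through the proof of Theorem~\ref{th:1}, the paper's directly — and both correctly require connectedness, consistent with the failure for $S^0$.
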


\begin{proof}
It is obvious that every element of $K(X\times {y_0} \cup {x_0} \times Y)$ induces an element in $K(X) \oplus K(Y)$.
For the opposite direction, the problem could be that a representative Hamiltonian of $K(X)$ at $x_0$ and a representative Hamiltonian of $K(Y)$ at $y_0$ are generally different. The Hamiltonians can be deformed to be constant in a neighborhood of $x_0$ and $y_0$. In this region, the exponential function can be used to interpolate between the original Hamiltonian and the trivial Hamiltonian at the point $x_0$ or $y_0$, see Lemma \ref{lemma:consttriv}.  
Now both Hamiltonians are equal, up to adding flat bands, and induce an element in $K(X\times {y_0} \cup {x_0} \times Y)$.
\end{proof}

Note that Theorem \ref{th:2} does not hold for $d=0$ since $S^0$ is not connected. From Eq.\ (\ref{KKK.5}) and Theorem \ref{th:2}, we obtain
\begin{equation}
K(S^d \times S^1) = K(S^d) \oplus K(S^1) \oplus K(S^{d+1})
\label{eq:splitK}
\end{equation}
for $d\ge 1$. Our classification leads to a term $K(S^d)$, which is well-known from the case without frequency dependence \cite{Kitaev_2009, Ryu_2010, Budich_2013}. The second term is new and results from the one-dimensional frequency space. The final term is also new and can lead to additional invariants from the $(d+1)$-dimensional space of momentum and frequency. These two terms result from interactions.

Note that most arguments use the fact that constant Hamiltonians are trivial, so locally every Hamiltonian looks trivial. There are two intuitive reasons why a Ha\-mil\-to\-ni\-an need not be trivial globally: (1) If $X$ is not connected, the different number of eigenvalues $\pm 1$ between the connected components is a topological number. (For Hamiltonians with particle-hole symmetry, there is always an equal number of positive and negative eigenvalues. However, in this case there is another characteristic number for unconnected spaces, see Ref.\ \cite{PhysRevB.90.165114}.) (2) In dimensions $d>0$, there is the problem that the inverse of the exponential function and thus $Y$ in the proof of Lemma \ref{lemma:consttriv} are not uniquely defined on the whole space of unitary matrices. Hence, although locally the Hamiltonian looks trivial, the whole map can be nontrivial. This is measured by winding and Chern numbers~\cite{PhysRevB.90.165114, Budich_2013}.

\subsection{Symmetry classes}
\label{sc:symClasses}

As noted in Sec.\ \ref{sc:introduction}, two global anti-unitary symmetries lead to the tenfold-way classification of frequency-independent Hermitian Hamiltonians \cite{Schnyder_2008, Schnyder_2009, Kitaev_2009, Ryu_2010, Ludwig_2015}. These symmetries and resulting classes  are now generalized to frequency-dependent and non-Hermitian Hamiltonians.

We follow the ideas of Ref.\ \cite{PhysRevX.9.041015}, where non-Hermitian Hamiltonians were treated but the frequency dependence was ignored. We also keep the notations for the symmetry classes used there. It is necessary to determine the possible sign change of the frequency argument $\omega$ of $H(\mathbf{k},\omega)$ under the symmetry transformation. Because of Eq.\ (\ref{eq:DefEffHam}) it is sufficient to analyze the frequency argument of the Green's function. For noninteracting fermions, one finds that $\omega$ changes sign under the particle-hole and chiral/sub-lattice transformations but not under time reversal \cite{Gurarie_2011}. The derivation is presented in some more detail in Appendix \ref{app:SymFreq}. The natural extension to interacting systems is to demand that the same relations persist. It is then easy to show using Eq.\ (\ref{eq:DefEffHam}) that the effective Hamiltonian satisfies the same symmetry relations as the Green's function.

The main difference between Hermitian and non-Hermitian Hamiltonians in the context of anti-unitary symmetries is that for non-Hermitian matrices $H$ we have $H \neq H^\dagger$ and $H^T \neq H^*$. This leads to a splitting of the symmetry conditions. Table \ref{tab:SymW} shows the possible global symmetries for frequency-dependent non-Hermitian Hamiltonians.

\begin{table*}[bt]
\caption{\label{tab:SymW}%
Symmetry conditions for non-Hermitian Hamiltonians with frequency dependence.
}
\begin{ruledtabular}
\begin{tabular}{lll}
pseudo-Hermiticity &    $\text{pH}$           & $U_{\text{pH}}\, H(\mathbf{k},\omega)^\dagger\, U_{\text{pH}}^\dagger = H(\mathbf{\mathbf{k}},\omega)$  \\
time-reversal symmetry &    $\text{TRS}$          & $U_{\text{TRS}}\, H(\mathbf{k},\omega)^*\, U_{\text{TRS}}^\dagger = H(-\mathbf{\mathbf{k}},\omega)$  \\
&    $\text{TRS}^\dagger$  & $U_{\text{TRS}^\dagger} H(\mathbf{k},\omega)^T\, U_{\text{TRS}^\dagger}^\dagger = H(-\mathbf{k},\omega)$  \\
particle-hole symmetry &    $\text{PHS}^\dagger$  & $U_{\text{PHS}^\dagger} H(\mathbf{k},\omega)^*\, U_{\text{PHS}^\dagger}^\dagger = -H(-\mathbf{k},-\omega)$  \\
&    $\text{PHS}$          & $U_{\text{PHS}}\, H(\mathbf{k},\omega)^T\, U_{\text{PHS}}^\dagger = -H(-\mathbf{k},-\omega)$  \\
chiral symmetry &     $\text{CS}$           & $U_{\text{CS}}\, H(\mathbf{k},\omega)^\dagger\, U_{\text{CS}}^\dagger = -H(\mathbf{k},-\omega)$  \\
sub-lattice symmetry &    $\text{SLS}$          & $U_{\text{SLS}}\, H(\mathbf{k},\omega)\, U_{\text{SLS}}^\dagger = -H(\mathbf{k},-\omega)$  
  \end{tabular}
\end{ruledtabular}
\end{table*}

In the absence of a frequency dependence, there is a unification of the $\text{TRS}$ and $\text{PHS}^\dagger$ symmetries \cite{PhysRevX.9.041015, TRSPHS}. The reason is that if $H(\mathbf{k})$ satisfies $\text{TRS}$, then $iH(\mathbf{k})$ satisfies $\text{PHS}^\dagger$ and vice versa. Hence, the classes containing $\text{TRS}$ and $\text{PHS}^\dagger$ are trivially related. In particular, the multiplication by $i$ just rotates the spectrum in the complex plane and leaves the eigenvectors unchanged. Analogously, $\text{pH}$ and $\text{CS}$ are unified. This leads to 38 fundamentally different symmetry classes \cite{PhysRevX.9.041015, TRSPHS, PhysRevB.99.235112}.
For frequency-dependent Hamiltonians, these unifications are not possible.

Symmetry classes can be distinguished by the number $N$ of anti-unitary generators taken from $\{\text{TRS}$, $\text{TRS}^\dagger$, $\text{PHS}^\dagger$, $\text{PHS}\}$. These four symmetries generalize the two anti-unitary symmetry of the Hermitian case and can each square to $\pm 1$ \cite{PhysRevX.9.041015}. The number $N$ can assume the following values:
\begin{itemize}
\item $N=0$: $6$ classes; no symmetry, one of $\text{pH}$, $\text{SLS}$, and $\text{CS}$, or all three. Two out of three is not possible because two of the symmetries generate the third. In the case of all three symmetries, $\text{SLS}$ and $\text{CS}$ may commute or anticommute, leading to a splitting into two classes.
\item $N=1$: $8$ classes; one of the four symmetries $\text{TRS}$, $\text{TRS}^\dagger$, $\text{PHS}^\dagger$, and $\text{PHS}$ is present and can square to $\pm 1$. The symmetries $\text{pH}$, $\text{SLS}$, and $\text{CS}$ cannot be present.
\item $N=2$: $6 \times 2^2 = 24$ classes; there are six ways to choose two of the four symmetries $\text{TRS}$, $\text{TRS}^\dagger$, $\text{PHS}^\dagger$, and $\text{PHS}$, and each one can square to $\pm 1$. The symmetries $\text{pH}$, $\text{SLS}$, and $\text{CS}$ are determined by the anti-unitary symmetries.
\item $N=3$: $2^4 = 16$ classes; if three of the symmetries $\text{TRS}$, $\text{TRS}^\dagger$, $\text{PHS}^\dagger$, and $\text{PHS}$ are present they generate the fourth one. Hence, the number of generators from this set is $N=3$ but all of them are present. The square of each of the four can be $\pm 1$ independently. The symmetries $\text{pH}$, $\text{SLS}$, and $\text{CS}$ are all present.
\item $N=4$: no additional classes since three of the anti-unitary symmetries $\text{TRS}$, $\text{TRS}^\dagger$, $\text{PHS}^\dagger$, and $\text{PHS}$ already generate all four.
\end{itemize}
In total, there is a $6+8+24+16 = 54$-fold classification. We denote the classes by their common symbols \cite{Schnyder_2008, Schnyder_2009, Kitaev_2009, Ryu_2010, Ludwig_2015}, adding a dagger ``$\dagger$'' if $\text{TRS}^\dagger$ or $\text{PHS}^\dagger$ are meant, and write the additional symmetry explicitly with the signs of the commutation relations in the subscript. If the class has two symmetries than the first sign in the subscript refers to $\text{TRS}^{(\dagger)}$ and the second one to $\text{PHS}^{(\dagger)}$. Table \ref{tab:symClasses} explicitly shows our conventions for the symmetry classes.

\begin{table*}
\caption{\label{tab:symClasses}Symmetry classification of frequency-dependent non-Hermitian Hamiltonians. Here, $0$ denotes that a symmetry is absent and $\pm 1$ means means that it is present and squares to plus or minus the identity. The classes $\text{AIII} + \text{SLS}_+$ and 
$\text{AIII} + \text{SLS}_-$ are distinguished by whether $\text{CS}$ and $\text{SLS}$ commute or anticommute, which is not reflected by the table entries.}
\begin{ruledtabular}
  \begin{tabular}{llllllll}
    Class    &  $\text{TRS}$    & $\text{TRS}^\dagger$ & $\text{PHS}$ & $\text{PHS}^\dagger$ & $\text{pH}$  & $\text{CS}$   & $\text{SLS}$ \\ \hline
    $\text{A}$     &  0        & 0             & 0     & 0             &  0    & 0      & 0     \\
    $\text{A} + \text{pH}$&  0        & 0             & 0     & 0             &  1    & 0      & 0     \\
    $\text{A} + \text{SLS}$&  0       & 0             & 0     & 0             &  0    & 0      & 1     \\
    $\text{AIII}$  &  0        & 0             & 0     & 0             &  0    & 1      & 0     \\ 
    $\text{AIII} + \text{SLS}_+$&  0  & 0             & 0     & 0             &  1    & 1      & 1     \\ 
    $\text{AIII} + \text{SLS}_-$&  0  & 0             & 0     & 0             &  1    & 1      & 1     \\ \hline
    $\text{AI}$     &  $+1$       & 0             & 0     & 0             & 0     & 0      & 0     \\
    $\text{D}$      &  0       & 0             & $+1$    & 0             & 0     & 0      & 0    \\
    $\text{AII}$    &  $-1$    & 0             & 0     & 0             & 0     & 0      & 0     \\
    $\text{C}$      &  0       & 0             & $-1$    & 0             & 0     & 0      & 0    \\
    $\text{AI}^\dagger$&  0    & $+1$           & 0     & 0             & 0     & 0      & 0     \\
    $\text{D}^\dagger$&  0     & 0             & 0     & $+1$            & 0     & 0      & 0    \\
    $\text{AII}^\dagger$& 0    & $-1$          & 0     & 0             & 0     & 0      & 0    \\ 
    $\text{C}^\dagger$&  0     & 0             & 0     & $-1$            & 0     & 0      & 0    \\ \hline
    $\text{BDI}$   &  $+1$      & 0             & $+1$    & 0             & 0     & 1      & 0     \\
    $\text{DIII}$  &  $-1$     & 0             & $+1$   & 0             & 0     & 1      & 0    \\
    $\text{CII}$   &  $-1$     & 0             & $-1$    & 0             & 0     & 1      & 0     \\
    $\text{CI}$    &  $+1$     & 0             & $-1$    & 0             & 0     & 1      & 0    \\
    $\text{BDI}^\dagger$&  0   & $+1$           & 0     & $+1$            & 0     & 1      & 0     \\
    $\text{DIII}^\dagger$& 0   & $-1$          & 0     & $+1$          & 0     & 1      & 0    \\
    $\text{CII}^\dagger$& 0    & $-1$          & 0     & $-1$            & 0     & 1      & 0     \\
    $\text{CI}^\dagger$&  0    & $+1$           & 0     & $-1$            & 0     & 1      & 0    \\
    $\text{AI}+\text{pH}_+$&  $+1$     & $+1$           & 0     & 0             & 1     & 0      & 0     \\
    $\text{AI}+\text{pH}_-$&  $+1$    & $-1$          & 0     & 0             & 1     & 0      & 0     \\
    $\text{AI}+\text{SLS}_+$&  $+1$    & 0             & 0     & $+1$            & 0     & 0      & 1     \\
    $\text{AI}+\text{SLS}_-$&  $+1$    & 0             & 0     & $-1$            & 0     & 0      & 1     \\
    $\text{D}+\text{pH}_+$&  0        & 0             & $+1$   & $+1$          & 1     & 0      & 0    \\
    $\text{D}+\text{pH}_-$&  0        & 0             & $+1$   & $-1$            & 1     & 0      & 0    \\
    $\text{D}+\text{SLS}_+$&  0       & $+1$           & $+1$    & 0             & 0     & 0      & 1    \\
    $\text{D}+\text{SLS}_-$&  0       & $-1$          & $+1$   & 0             & 0     & 0      & 1    \\
    $\text{AII}+\text{pH}_+$&  $-1$   & $-1$          & 0     & 0             & 1     & 0      & 0     \\
    $\text{AII}+\text{pH}_-$&  $-1$   & $+1$           & 0     & 0             & 1     & 0      & 0     \\
    $\text{AII}+\text{SLS}_+$&  $-1$  & 0             & 0     & $-1$            & 0     & 0      & 1     \\
    $\text{AII}+\text{SLS}_-$&  $-1$  & 0             & 0     & $+1$         & 0     & 0      & 1     \\
    $\text{C}+\text{pH}_+$& 0         & 0             & $-1$    & $-1$            & 1     & 0      & 0    \\
    $\text{C}+\text{pH}_-$& 0         & 0             & $-1$    & $+1$          & 1     & 0      & 0    \\
    $\text{C}+\text{SLS}_+$&  0       & $-1$            & $-1$    & 0             & 0     & 0      & 1    \\
    $\text{C}+\text{SLS}_-$&  0       & $+1$          & $-1$    & 0             & 0     & 0      & 1    \\ \hline
    $\text{BDI}+\text{pH}_{++}$& $+1$  & $+1$          & $+1$    & $+1$          & 1     & 1      & 1     \\
    $\text{BDI}+\text{pH}_{+-}$& $+1$  & $+1$           & $+1$   & $-1$            & 1     & 1      & 1     \\
    $\text{BDI}+\text{pH}_{-+}$& $+1$  & $-1$            & $+1$   & $+1$           & 1     & 1      & 1     \\
    $\text{BDI}+\text{pH}_{--}$& $+1$  & $-1$            & $+1$   & $-1$            & 1     & 1      & 1     \\
    $\text{DIII}+\text{pH}_{++}$&  $-1$ & $-1$            & $+1$   & $+1$          & 1     & 1      & 1    \\
    $\text{DIII}+\text{pH}_{+-}$&  $-1$ & $-1$            & $+1$   & $-1$            & 1     & 1      & 1    \\
    $\text{DIII}+\text{pH}_{-+}$&  $-1$ & $+1$          & $+1$  & $+1$         & 1     & 1      & 1    \\
    $\text{DIII}+\text{pH}_{--}$&  $-1$ & $+1$          &$+1$   & $-1$            & 1     & 1      & 1    \\
    $\text{CII}+\text{pH}_{++}$ &  $-1$ & $-1$            & $-1$    & $-1$            & 1     & 1      & 1     \\
    $\text{CII}+\text{pH}_{+-}$ &  $-1$ & $-1$            & $-1$    & $+1$         & 1     & 1      & 1     \\
    $\text{CII}+\text{pH}_{-+}$ &  $-1$ & $+1$         & $-1$    & $-1$            & 1     & 1      & 1     \\
    $\text{CII}+\text{pH}_{--}$ &  $-1$ & $+1$         & $-1$    & $+1$         & 1     & 1      & 1     \\
    $\text{CI}+\text{pH}_{++}$  & $+1$ & $+1$       & $-1$    & $-1$            & 1     & 1      & 1    \\
    $\text{CI}+\text{pH}_{+-}$  & $+1$ & $+1$         & $-1$    & $+1$          & 1     & 1      & 1    \\
    $\text{CI}+\text{pH}_{-+}$  & $+1$ & $-1$            & $-1$    & $-1$            & 1     & 1      & 1    \\
    $\text{CI}+\text{pH}_{--}$  & $+1$ & $-1$           & $-1$    & $+1$         & 1     & 1      & 1    
  \end{tabular}
\end{ruledtabular}
\end{table*}

\subsection{Flattening to Hermitian Hamiltonians}
\label{sc:flatt}

We use the standard definitions for different types of energy gaps \cite{PhysRevX.9.041015}:
For a point gap, the eigenvalues $E_i(\mathbf{k},\omega)$ of $H(\mathbf{k},\omega)$ satisfy $E_i(\mathbf{k},\omega) \neq 0$ for all $i$, $\mathbf{k}$, and $\omega$.
For a real line gap, $\Re E_i(\mathbf{k},\omega) \neq 0$ for all $i$, $\mathbf{k}$, and $\omega$.
For an imaginary line gap, $\Im E_i(k,\omega) \neq 0$ for all $i$, $\mathbf{k}$, and $\omega$.
Our strategy is to deform Hamiltonians with these different types of gaps into unitary, Hermitian, or anti-Hermitian ones, as shown in Fig.~\ref{pic:nonHermFlat}.
The cases with point and line gaps are discussed separately.

\begin{figure}
(a) Point gap\\
\includegraphics[height=4.5cm]{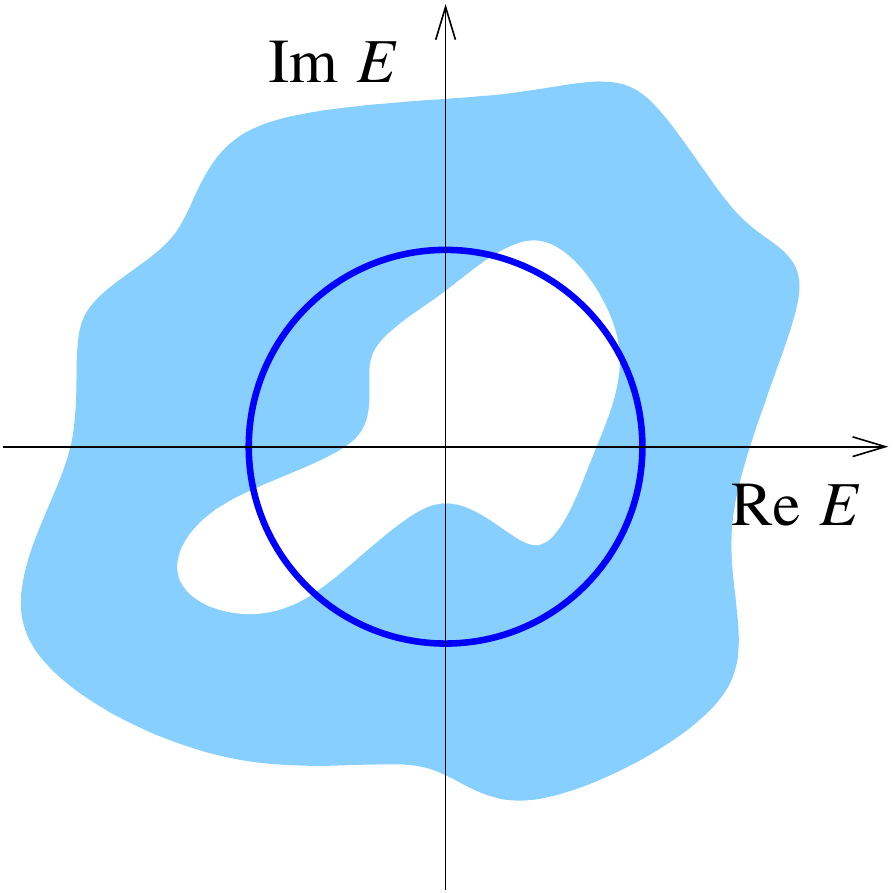}\\[1ex]
(b) Real line gap\\
\includegraphics[height=4.5cm]{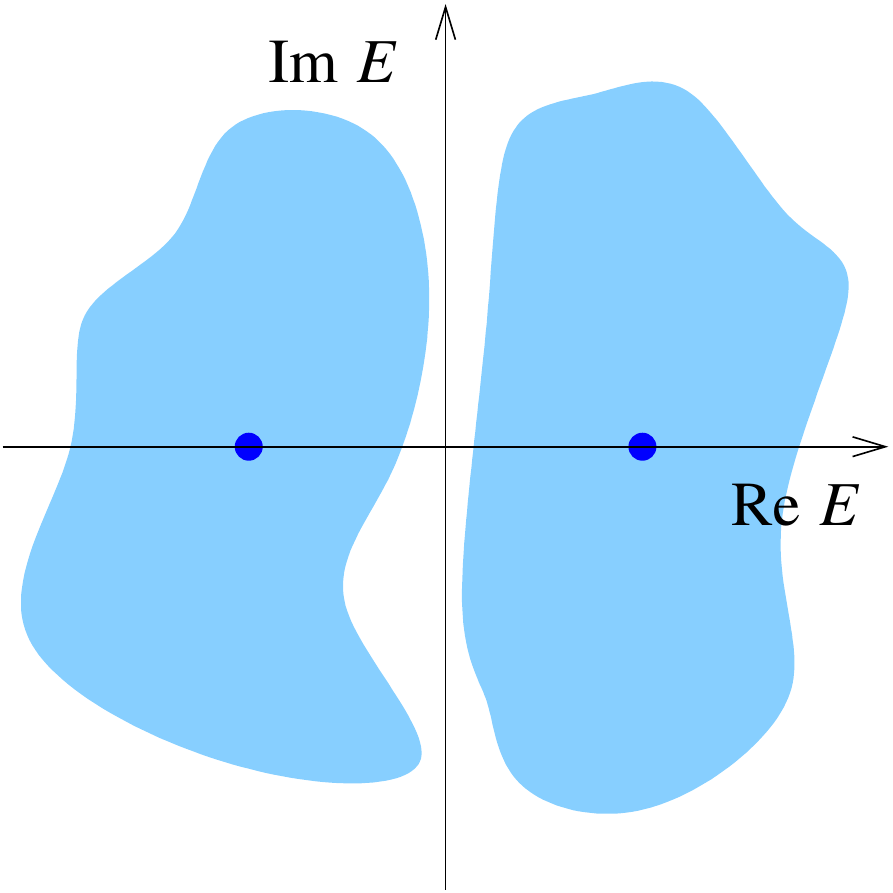}\\[1ex]
(c) Imaginary line gap\\
\includegraphics[height=4.5cm]{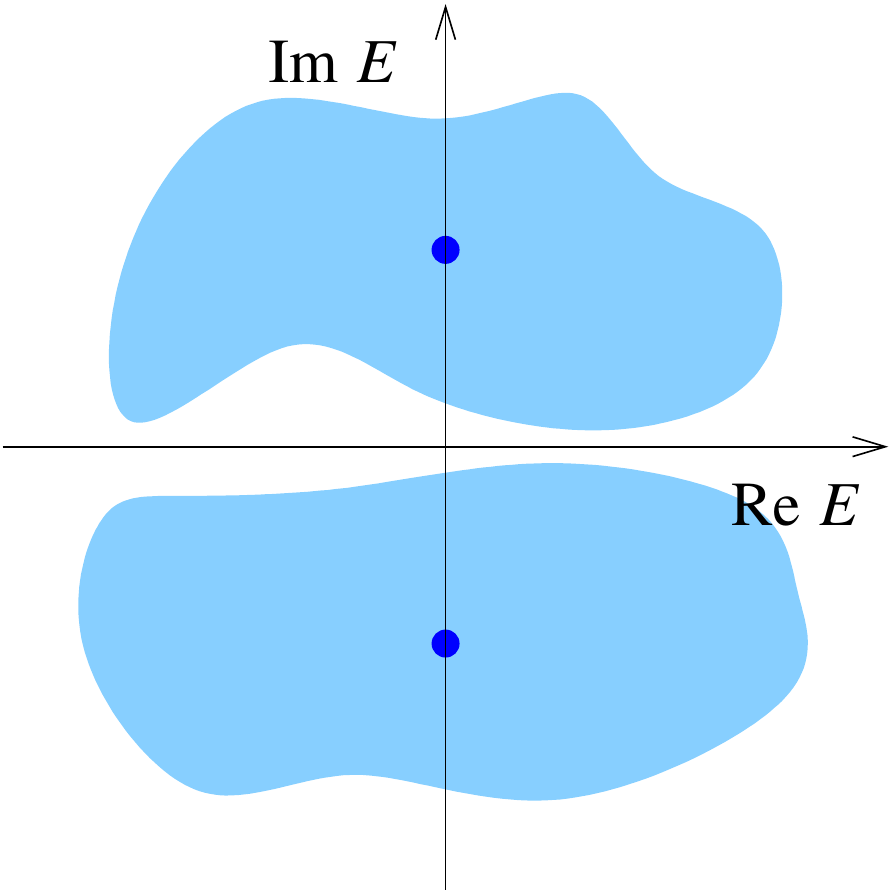}
\caption{\label{pic:nonHermFlat}Different types of gaps and flattening procedures for non-Hermitian Hamiltonians. (a) Flattening of a Hamiltonian with a point gap to a unitary one. (b) Flattening of a Hamiltonian with a real line gap to a Hermitian one with eigenvalues $\pm 1$. (c) Flattening of a Hamiltonian with a imaginary line gap to an anti-Hermitian one with eigenvalues~$\pm i$.}
\end{figure}


\begin{theorem}[Unitary flattening for point gaps]
Any Hamiltonian $H(\mathbf{k}, \omega)$ with a point gap is isomorphic to a unitary matrix $U(\mathbf{k}, \omega)$ with the same symmetries.
\label{theo:uniFlat}
\end{theorem}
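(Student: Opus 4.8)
The plan is to construct the isomorphism from the polar decomposition, followed by a straight-line homotopy onto the unitary factor. Since $H(\mathbf{k},\omega)$ has a point gap it is invertible for every $(\mathbf{k},\omega)$, so one can write $H(\mathbf{k},\omega)=U(\mathbf{k},\omega)\,P(\mathbf{k},\omega)$ with $P(\mathbf{k},\omega)=\big(H(\mathbf{k},\omega)^{\dagger}H(\mathbf{k},\omega)\big)^{1/2}$ the unique positive-definite square root and $U(\mathbf{k},\omega)=H(\mathbf{k},\omega)\,P(\mathbf{k},\omega)^{-1}$ unitary. The function $H^{\dagger}H$ is smooth and takes values in the open cone of positive-definite matrices, on which the principal square root is smooth (indeed real-analytic); hence $P$, and with it $U$, depend smoothly on $(\mathbf{k},\omega)$, so that $U$ is a bona fide object of $\mathcal{H}$.

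Next I would take the homotopy $H_{t}(\mathbf{k},\omega)=U(\mathbf{k},\omega)\big[(1-t)\,P(\mathbf{k},\omega)+t\,I\big]=(1-t)\,H(\mathbf{k},\omega)+t\,U(\mathbf{k},\omega)$ for $t\in[0,1]$. For each $t$ the Hermitian factor $(1-t)\,P+t\,I$ is positive definite, so $H_{t}$ is a product of invertible matrices and keeps the point gap; $H_{t}$ is smooth in $(\mathbf{k},\omega,t)$; and $H_{0}=H$, $H_{1}=U$. By Definition~\ref{definition:isomorphism} it then suffices to check that $H_{t}$ respects every symmetry in $\Omega$ for all $t$, which gives $H\approx U$.

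The symmetry check rests on the uniqueness of the polar decomposition, which makes the ``unitary part'' assignment $H\mapsto U$ intertwine with all operations occurring in Table~\ref{tab:SymW}. For unitary $W$ one has $W H W^{\dagger}=(W U W^{\dagger})(W P W^{\dagger})$ with $W P W^{\dagger}$ positive definite; $-H=(-U)\,P$; $H^{\dagger}=U^{\dagger}\,(U P U^{\dagger})$; $H^{*}=U^{*}\,P^{*}$ with $P^{*}$ positive definite; and $H^{T}=P^{T}U^{T}$, which is the right polar decomposition of $H^{T}$ and hence has unitary factor $U^{T}$, since the left and right polar decompositions share the unitary factor and $P^{T}=\overline{P}$ is positive definite. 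Thus the unitary part of $H$ transforms under $\dagger$, $*$, $T$, sign reversal, and unitary conjugation exactly as $H$ does, and the relabelling of the $\mathbf{k}$ and $\omega$ arguments in the symmetry relations is immaterial; applied row by row to Table~\ref{tab:SymW}, this shows that $U$ satisfies precisely the same symmetry conditions as $H$. Finally, each such condition is $\mathbb{R}$-linear in the Hamiltonian (the maps $H\mapsto H,\,H^{\dagger},\,H^{T},\,H^{*}$ and $H\mapsto W H^{(\bullet)} W^{\dagger}$ are additive and homogeneous under real scalars), so the Hamiltonians obeying a fixed symmetry form a real-linear, hence convex, set; since its endpoints $H$ and $U$ lie in it, so does the whole segment $H_{t}$. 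This completes the proof.

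I expect the main obstacle to be the symmetry bookkeeping, in particular tracking how the polar decomposition behaves under transposition --- relevant to $\text{TRS}^{\dagger}$ and $\text{PHS}$ --- where one must pass between the left and right polar forms; the accompanying remark that the principal square root is smooth on the positive-definite cone (so that $U$ qualifies as an object of $\mathcal{H}$) is routine but worth stating explicitly.
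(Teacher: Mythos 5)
Your proof is correct and follows essentially the same route as the paper, which defers to Refs.~\cite{PhysRevX.9.041015, PhysRevB.99.235112}: the polar decomposition $H=UP$ with a smooth deformation of the positive part to the identity, plus the observation that the unitary factor inherits all symmetries of Table~\ref{tab:SymW}. Your straight-line homotopy $(1-t)H+tU$ in place of, e.g., $H(H^\dagger H)^{-t/2}$ is an immaterial variation, and your explicit symmetry bookkeeping fills in details the paper only cites.
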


The proof of this theorem is a straightforward generalization of the proof from Refs.\ \cite{PhysRevX.9.041015, PhysRevB.99.235112}. A different proof is presented in Ref.\ \cite{PhysRevX.8.031079}. The intuitive idea is that one can smoothly shift the eigenvalues in the complex plane so that they end up with unit modulus, see Fig.~\ref{pic:nonHermFlat}.

One can now define a \emph{Hermitian} Hamiltonian $\tilde{H}$ that can be classified~\cite{PhysRevX.9.041015, PhysRevB.99.235112},
\begin{equation}
    \tilde{H}(\mathbf{k},\omega) \equiv \left(\begin{array}{cc}
    0 & U(\mathbf{k},\omega) \\
    U^{\dagger}(\mathbf{k},\omega) & 0
    \end{array}\right).
    \label{eq:Huni}
\end{equation}
Clearly, $\tilde{H}^{2}(\mathbf{k},\omega)=1$.

If the original point-gapped Hamiltonian $H$ satisfies symmetries from Table \ref{tab:SymW}, the Hamiltonian $\tilde{H}$ satisfies the following symmetries~\cite{PhysRevX.9.041015}:
\begin{align}
\tilde{U}_{\text{pH}}\, \tilde{H}(\mathbf{k},\omega)\, \tilde{U}_{\text{pH}}^\dagger
  &= \tilde{H}(\mathbf{k},\omega),
\label{eq:sympoint.pH} \\
\mbox{with } \tilde{U}_{\text{pH}} &\equiv \left(\begin{array}{cc}
    0 & U_{\text{pH}} \\
    U_{\text{pH}} & 0
    \end{array}\right), \\
\tilde{U}_{\text{TRS}}\,\tilde{H}(\mathbf{k},\omega)^*\,\tilde{U}_{\text{TRS}}^\dagger
  &= \tilde{H}(-\mathbf{k},\omega) , \\
\mbox{with } \tilde{U}_{\text{TRS}} &\equiv \left(\begin{array}{cc}
    U_{\text{TRS}} & 0 \\
    0 & U_{\text{TRS}}
    \end{array}\right), \\
\tilde{U}_{\text{TRS}^\dagger}\,\tilde{H}(\mathbf{k},\omega)^*\,\tilde{U}_{\text{TRS}^\dagger}^\dagger
  &= \tilde{H}(-\mathbf{k},\omega) ,
\label{eq:sympoint.TRSd} \\
\mbox{with } \tilde{U}_{\text{TRS}^\dagger} &\equiv \left(\begin{array}{cc}
    0 & U_{\text{TRS}^\dagger} \\
    U_{\text{TRS}^\dagger} & 0
    \end{array}\right), \\
\tilde{U}_{\text{PHS}^\dagger}\,\tilde{H}(\mathbf{k},\omega)^*\,\tilde{U}_{\text{PHS}^\dagger}^\dagger
  &= -\tilde{H}(-\mathbf{k},-\omega) , \\
\mbox{with } \tilde{U}_{\text{PHS}^\dagger} &\equiv \left(\begin{array}{cc}
    U_{\text{PHS}^\dagger} & 0 \\
    0 & U_{\text{PHS}^\dagger}
    \end{array}\right), \\
\tilde{U}_{\text{PHS}}\,\tilde{H}(\mathbf{k},\omega)^*\,\tilde{U}_{\text{PHS}}^\dagger 
  &= -\tilde{H}(-\mathbf{k},-\omega) ,
\label{eq:sympoint:PHS} \\
\mbox{with } \tilde{U}_{\text{PHS}} &\equiv \left(\begin{array}{cc}
    0 & U_{\text{PHS}} \\
    U_{\text{PHS}} & 0
    \end{array}\right), \\
\tilde{U}_{\text{CS}}\,\tilde{H}(\mathbf{k},\omega)\,\tilde{U}_{\text{CS}}^\dagger 
  &= -\tilde{H}(\mathbf{k},-\omega) ,
\label{eq:sympoint:CS} \\
\mbox{with } \tilde{U}_{\text{CS}} &\equiv \left(\begin{array}{cc}
    0 & U_{\text{CS}} \\
    U_{\text{CS}} & 0
    \end{array}\right) , \\
\tilde{U}_{\text{SLS}}\,\tilde{H}(\mathbf{k},\omega)\,\tilde{U}_{\text{SLS}}^\dagger 
  &= -\tilde{H}(\mathbf{k},-\omega) , \\
\mbox{with } \tilde{U}_{\text{SLS}} &\equiv \left(\begin{array}{cc}
    U_{\text{SLS}} & 0 \\
    0 & U_{\text{SLS}}
    \end{array}\right),
\end{align}
and in all cases
\begin{align}
\tilde{U}_{\Sigma}\, \tilde{H}(\mathbf{k},\omega)\, \tilde{U}_{\Sigma}^\dagger 
  &= -\tilde{H}(\mathbf{k},\omega) ,
\label{eq:sigmaSym} \\
\mbox{with } \tilde{U}_{\Sigma} &\equiv \left(\begin{array}{cc}
    1 & 0 \\
    0 & -1
    \end{array}\right) .
\end{align}
In Eqs.\ (\ref{eq:sympoint.pH}), (\ref{eq:sympoint.TRSd}), (\ref{eq:sympoint:PHS}), and (\ref{eq:sympoint:CS}), we have used that $\tilde{H}(\mathbf{k},\omega)$ is Hermitian. Equation \eqref{eq:sigmaSym} guarantees that the mapping between $U(\mathbf{k},\omega)$ and $\tilde{H}(\mathbf{k},\omega)$ in Eq.~\eqref{eq:Huni} is bijective~\cite{PhysRevX.9.041015}.


\begin{theorem}[Unitary flattening for line gaps]
Any Hamiltonian $H(\mathbf{k},\omega)$ with a real line gap is isomorphic to a Hermitian Hamiltonian $\tilde{H}(\mathbf{k},\omega)$ with the same symmetries. Also, any Hamiltonian $H(\mathbf{k},\omega)$ with an imaginary line gap is isomorphic to an anti-Hermitian Hamiltonian $\check{H}(\mathbf{k},\omega)$ with the same symmetries.
\label{theo:lineFlat}
\end{theorem}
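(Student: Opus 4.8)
The plan is to imitate the unitary flattening for point gaps (Theorem~\ref{theo:uniFlat}), but now to push the spectrum toward the real axis in the real-line-gap case and toward the imaginary axis in the imaginary-line-gap case. Take first $H(\mathbf{k},\omega)$ with a real line gap, so that $\Re E_i(\mathbf{k},\omega)\neq 0$ for all $i$, $\mathbf{k}$, and $\omega$. At each point the spectrum splits into a part lying in the open right half-plane and a part in the open left half-plane; let $P_+(\mathbf{k},\omega)$ and $P_-(\mathbf{k},\omega)=1-P_+(\mathbf{k},\omega)$ be the corresponding Riesz spectral projections, i.e.\ contour integrals of $(z-H)^{-1}$ enclosing each half of the spectrum. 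Since the gap along the imaginary axis never closes, the $P_\pm$ are smooth in $(\mathbf{k},\omega)$ and satisfy $P_\pm^2=P_\pm$ and $P_+P_-=P_-P_+=0$.

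First I would flatten the eigenvalues while holding the projections fixed: using the holomorphic functional calculus, deform $H$ through $H_t$ acting as $z\mapsto z^{1-t}$ on the right-half-plane part and as $z\mapsto-(-z)^{1-t}$ on the left-half-plane part, $t\in[0,1]$. A power with exponent in $[0,1]$ never carries a spectrum across the imaginary axis, so the real line gap stays open throughout, and the endpoint is the involution $Q\equiv P_+-P_-$ with $Q^2=1$. Next I would make $Q$ Hermitian. The matrix $M\equiv P_+^\dagger P_++P_-^\dagger P_-$ is positive definite and obeys $MP_\pm=P_\pm^\dagger M$, hence $MQ=Q^\dagger M$; therefore $Q_s\equiv M^sQM^{-s}$ is an involution for every $s\ge 0$, and $\tilde H\equiv M^{1/2}QM^{-1/2}$ is Hermitian with eigenvalues $\pm1$. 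The family $\{Q_s\}_{s\in[0,1/2]}$ is smooth with $Q_s^2=1$, so the gap again never closes, and composing the two homotopies yields $H\approx\tilde H$ in the sense of Definition~\ref{definition:isomorphism}.

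The step I expect to be the main obstacle is checking that every symmetry in $\Omega$ is respected at each stage. A symmetry relates $P_\pm(\mathbf{k},\omega)$ to $P_\pm$ or $P_\mp$ evaluated at a transformed argument: the unitary symmetry $\mathrm{pH}$ and the $\omega$-preserving anti-unitary symmetries $\mathrm{TRS}$, $\mathrm{TRS}^\dagger$ send $P_\pm\mapsto P_\pm$, whereas the symmetries acting as $H\mapsto-H(\ldots)$, namely $\mathrm{PHS}$, $\mathrm{PHS}^\dagger$, $\mathrm{CS}$, and $\mathrm{SLS}$, interchange $P_+\leftrightarrow P_-$. One then has to choose the eigenvalue-flattening paths consistently — symmetric, resp.\ antisymmetric, under $z\mapsto-z$ according to whether the two spectral subspaces are preserved or swapped — so that the first homotopy is symmetric, and to observe that $M$, being built covariantly from the $P_\pm$, transforms so that $M^s$ intertwines the symmetry operations appropriately, making the second homotopy symmetric as well. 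This bookkeeping has to be carried out for all seven symmetry types in Table~\ref{tab:SymW} and their admissible combinations, with particular care in the anti-unitary cases, where conjugation sends $M$ to $M^*$ or $M^T$ and the argument $(\mathbf{k},\omega)$ may be reflected.

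Finally, the imaginary-line-gap statement follows by rotating the complex plane. Now $\Im E_i(\mathbf{k},\omega)\neq0$, so the spectrum splits by the sign of $\Im E$; repeating the three steps with the corresponding projections, and deforming the eigenvalues to $\pm i$ instead of $\pm1$, produces the anti-Hermitian matrix $\check H\equiv M^{1/2}\,i(P_+-P_-)\,M^{-1/2}$ with $\check H^2=-1$. Equivalently, $iH$ has a real line gap, so one may apply the first part to $iH$ and set $\check H=-i\tilde H$; either way the symmetry analysis mirrors the real-line-gap case, which would complete the proof.
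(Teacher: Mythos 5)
Your first stage is fine and is genuinely different from the paper's route: the Riesz projections $P_\pm(\mathbf{k},\omega)$ are globally smooth, your flattening function is odd and commutes with complex conjugation, so it preserves all symmetry types of Table~\ref{tab:SymW} at once and keeps the line gap open; this avoids the local-patching step of the diagonalization argument that the paper takes over from Ref.~\cite{PhysRevX.9.041015}. The gap is in your second stage, and it sits exactly where the nontrivial content of the theorem lies. For the symmetries that involve $H^\dagger$ or $H^T$ (namely $\mathrm{pH}$, $\mathrm{TRS}^\dagger$, $\mathrm{PHS}$, $\mathrm{CS}$ --- the generic ones in the non-Hermitian setting), covariance does not send $M$ to itself: e.g.\ $\mathrm{pH}$ gives $U P_\pm U^\dagger = P_\pm^\dagger$ and hence $U M U^\dagger = M' \equiv P_+P_+^\dagger + P_-P_-^\dagger \neq M$. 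Using $Q^\dagger = MQM^{-1} = M'^{-1}QM'$, the requirement that $Q_s = M^s Q M^{-s}$ satisfy $\mathrm{pH}$ is equivalent to
\begin{equation}
\bigl[\, Q,\; M'^{\,s} M^{\,s} \,\bigr] = 0 \quad \text{for all } s \in \left[0,\tfrac{1}{2}\right],
\end{equation}
including the endpoint $s=\tfrac12$ that defines $\tilde H$. Your relations $MP_\pm = P_\pm^\dagger M$ and $M'P_\pm^\dagger = P_\pm M'$ only give this at $s=1$, where the cross terms in $M'M = P_+P_+^\dagger P_+ + P_-P_-^\dagger P_-$ cancel; nothing analogous is available for fractional powers, because $M$ and $M'$ do not commute. (Two-band checks are misleading here: for rank-one projections $M'M \propto 1$, so everything commutes trivially.) The same unproven identity is needed for $\mathrm{TRS}^\dagger$, $\mathrm{PHS}$, and $\mathrm{CS}$, whereas $\mathrm{TRS}$, $\mathrm{PHS}^\dagger$, and $\mathrm{SLS}$ do go through as you say, since they map $M(\mathbf{k},\omega)$ to $M$ at the transformed argument without daggering the projections. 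So what you defer as ``bookkeeping'' is in fact the heart of the proof, and with your particular homotopy it rests on an operator identity you have not established; note also that by Definition~\ref{definition:isomorphism} a symmetric endpoint alone would not suffice --- the symmetry must hold along the whole path $Q_s$.

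For comparison, the paper handles this by following Ref.~\cite{PhysRevX.9.041015}: diagonalize, flatten the eigenvalues, and unitarize the (generally non-unitary) diagonalizing matrices by polar decomposition, the acknowledged nontrivial step being the matching of local regions into one global, symmetry-respecting deformation. If you want to keep your global strategy, a manifestly covariant Hermitization exists: the unitary polar factor $W = Q\,(Q^\dagger Q)^{-1/2}$ of the flattened involution is automatically Hermitian with $W^2 = 1$ (from $Q^2=1$ one gets $QQ^\dagger = (Q^\dagger Q)^{-1}$), and since polar decomposition commutes with every operation in Table~\ref{tab:SymW}, both $W$ and the interpolation $Q_t = Q\,(Q^\dagger Q)^{-t/2}$ inherit all symmetries identically; what would then remain --- and what your write-up would still owe --- is a proof that the relevant line gap stays open along that path, which is precisely the part of the analysis supplied in Ref.~\cite{PhysRevX.9.041015}.
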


The proof of this theorem is again a straightforward generalization of the proof from Ref.\ \cite{PhysRevX.9.041015}. The first step is to diagonalize the Hamiltonian and flatten the eigenvalues to $\pm 1$ or $\pm i$, respectively, see Fig.\ \ref{pic:nonHermFlat}. Since the matrices needed for the diagonalization are generally not unitary it has to be shown that it is possible to deform the diagonalizing matrices into unitary ones. For this task, polar decomposition is used again. The nontrivial part of the proof is that this procedure is applicable only locally. However, it is shown in Ref.\ \cite{PhysRevX.9.041015} how to match different momentum regions to get a global deformation. The proof is not affected by the fact that the frequency $\omega$ transforms differently than the momentum $\mathbf{k}$. 

The flattened Hamiltonians have the same symmetries as the original ones. In the case of a real gap, one directly ends up with a Hermitian Hamiltonian $\tilde{H}(\mathbf{k},\omega)$ with the same symmetries as $H(\mathbf{k},\omega)$. In the case of an imaginary gap, one multiplies the anti-Hermitian Hamiltonian $\check{H}(\mathbf{k},\omega)$ by $i$ to obtain a Hermitian Hamiltonian $\tilde{H}(\mathbf{k},\omega) = i\check{H}(\mathbf{k},\omega)$ with symmetries that are trivially related to the symmetries of $\check{H}(\mathbf{k},\omega)$ and thus of $H(\mathbf{k},\omega)$.
We note in passing that one can analogously treat gaps shifted away from zero in the real (imaginary) direction if the symmetry class is compatible with adding a real (imaginary) constant times the identity operator to the effective Hamiltonian \cite{PhysRevX.8.031079}.

We see that the classification of non-Hermitian Hamiltonians can be reduced to a classification of flattened \emph{Hermitian} Hamiltonians for all three types of gaps. This is the reason why it was sufficient to consider Hamitian Hamiltonians in Sec.~\ref{sc:Kgroup}.

\subsection{Dimensional reduction}
\label{sc:Dred}

In this section, we calculate the group structure of Hamiltonians $H(\mathbf{k},\omega)$ that respect the global symmetries discussed in Sec.\ \ref{sc:flatt}. We formally collect these symmetries in the set $\Omega$.

As discussed in Sec.\ \ref{sc:Kgroup}, the base space of $\mathbf{k}$ is the \textit{d}-dimensional sphere $S^d$, where $d$ is the spatial dimension. The frequency space is compactified to $S^1$. Such Hamiltonians, or rather their equivalence classes, form the group $K(S^d\times S^1, \Omega)$. Using Theorems \ref{th:1} and \ref{th:2}, one obtains Eq.~(\ref{eq:splitK}). 

Our first goal is to relate the classes $K(S^D,\Omega)$ with $D = 1,d,d+1$ to the classes $K(S^0,\tilde{\Omega})$ describing zero-dimensional space and generally different symmetries $\tilde{\Omega}$. In this context, it is not important whether one dimension (i.e., frequency) behaves differently from the others under transformations. Hence, if we speak of momentum this includes the case that one of the components is actually a frequency.

The main idea is dimensional reduction \cite{PhysRevB.78.195424, Kitaev_2009, Ryu_2010, Stone_2010, PhysRevB.90.165114}. Dimensional reduction permits a physical interpretation on the basis of Dirac Hamiltonians. On the other hand, there is a more formal way of defining bijections between Hamiltonians in different dimensions. These approaches are strongly connected \cite{Stone_2010}. We here follow the more formal way because in the case with a frequency dependence the interpretation in terms of Dirac Hamiltonian could be confusing.

In order to determine the group $K(S^{D+1},\Omega)$, we parameterize the $(D+1)$-dimensional momentum $\mathbf{K} \in S^{D+1}$ by a polar angle $\theta \in \left[0,\pi\right]$ and a \textit{D}-dimensional vector $\mathbf{k}$, which contains all other components. If the inverse of the momentum is taken, $\mathbf{K} \to -\mathbf{K}$, there are two fixed points, $\mathbf{K}=0$ and $\mathbf{K}=\mathbf{K}_\infty$. Here, $\mathbf{K}_\infty$ is the compactified momentum onto which all boundary point of the unit cell $T^{D+1}$ of reciprocal space are mapped. The points $\theta=0,\pi$ correspond to the poles of the $(D+1)$-dimensional sphere $S^{D+1}$. Other than in Refs.\ \cite{Teo_2010, PhysRevB.90.165114}, we do not choose these poles to correspond to $\mathbf{K}=0$ and $\mathbf{K}=\mathbf{K}_\infty$ but rather choose them to lie in the equatorial hyperplane, i.e., they have $\theta = \pi/2$. Momentum inversion also inverts $\theta \to \pi-\theta$.

We define the action~\cite{Teo_2010, PhysRevB.90.165114}
\begin{equation}
S[H(\mathbf{k},\theta)] = \int d\theta\, d^{d}k\,
  \Tr (\partial_\theta H \partial_\theta H) .
\label{eq:varProb}
\end{equation}
The corresponding Euler--Lagrange equation with the constraint $H^2=1$ for the flattened Hamiltonian is
\begin{equation}
\partial_{\theta}^{2} H + m^2 H = 0 .
\label{eq:ELGvar}
\end{equation}
The action $S[H]$ is a positive semidefinite ``height'' functional of $H$. 
Going downhill from any flattened Hamiltonian results in a continuous trajectory connecting it to some flattened Hamiltonian $H$ that minimizes $S[H]$ and thus satisfies the Euler--Lagrange equation (\ref{eq:ELGvar}). The trajectory describes a continuous deformation so that the original and final Hamiltonians are isomorphic in the sense of Definition~\ref{definition:isomorphism}.

Equation (\ref{eq:ELGvar}) is solved by
\begin{equation}
H(\mathbf{k},\theta) = H_{1}(\mathbf{k}) \sin m \theta + H_{0} \cos m \theta ,
\end{equation}
where the constraint $H^2=1$ requires
\begin{align}
H_{0}^{2} = H_{1}(\mathbf{k})^{2} &= 1 ,
\label{eq:stdFormBed.a} \\
\left\{H_{0}, H_{1}(\mathbf{k})\right\} &= 0 .
\label{eq:stdFormBed.b}
\end{align}
The condition that $\theta = 0,\pi$ correspond to single $\mathbf{K}$ points implies that $H_0$ is a constant matrix and that $m$ is an integer. In the case of $m=0$, there is no $\mathbf{K}$ dependence so that the Hamiltonian is trivial. Trivial Hamiltonians are mapped to trivial ones during dimensional reduction. Solutions with $m<0$ can be mapped to solutions with $m>0$ by inverting the sign of $H_1(\mathbf{k})$. Values $m>1$ correspond to multiple windings in the $\theta$ direction. In these cases, one can add a flat band to the Hamiltonian and continuously deform the resulting Hamiltonian in such a way that the winding in the original part is reduced by unity, while the added part obtains this winding. The continuous deformation does not change the topological properties. By repeating this procedure, one can reduce the Hamiltonian to the case $m=1$.

Consequently, every Hamiltonian from $K(S^{D+1},\Omega)$ with nontrivial $\theta$ dependence can be written as
\begin{equation}
H(\mathbf{k},\theta) = H_{1}(\mathbf{k}) \sin \theta + H_{0} \cos \theta ,
    \label{eq:stdForm}
\end{equation}
up to isomorphism in the sense of Definition \ref{definition:isomorphism}. The Hamiltonian $H_1(\mathbf{k})$ is a representative of an element $[H_1(\mathbf{k})]$ of $K(S^D,\tilde{\Omega})$. The symmetries of $H_1(\mathbf{k})$ are collected in $\tilde{\Omega}$. They are generally different from the symmetries $\Omega$ of $H(\mathbf{k})$ for two reasons. First, Eqs.\ \eqref{eq:stdFormBed.a} and \eqref{eq:stdFormBed.b} impose additional constraints on $H_1(\mathbf{k})$. Second, the coordinate $\theta$, which parameterizes either a momentum component or frequency, may change sign under some of the symmetries satisfied by $H(\mathbf{k})$. This leads to $H_1(\mathbf{k})$ satisfying a different symmetry. Moreover, this new symmetry of $H_1(\mathbf{k})$ may be a conventional unitary one, which permits block diagonalization of $H_1(\mathbf{k})$. One then has to check which symmetries are satisfied for each block. We will give an example for this case below.

This construction defines a map from $K(S^{D+1},\Omega)$ to $K(S^D,\tilde{\Omega})$. Applying this map to the homotopy function connecting two elements from $K(S^{D+1},\Omega)$, we obtain a homotopy function connecting their images in $K(S^D,\tilde{\Omega})$. This implies that the  map respects the group structure.

It is straightforward to construct the inverse map
\begin{equation}
    K(S^D,\tilde{\Omega}) \xtofrom[B]{A} K(S^{D+1},\Omega) .
\end{equation}
This map induces an isomorphism \cite{Teo_2010}.
To prove this, note first that the map $A$ is well defined by Eq.\ (\ref{eq:stdForm}). Second, if $[H], [H'] \in K(S^D,\tilde{\Omega})$ are homotopic to each other, then the image of their homotopy function is a homotopy function in $K(S^{D+1},\Omega)$. Hence, $A$ respects the group structure. Third, one finds that $B \circ A = \mathrm{id}_{S^D}$ since the image of $A$ satisfies the Euler--Lagrange equation. Finally, $A \circ B = \mathrm{id}_{S^{D+1}}$ by construction.

To obtain $K(S^0,\tilde{\Omega})$, this procedure is repeated $D$ times. The benefit is that $K(S^0,\tilde{\Omega})$ can easily be related to known groups. The base space $S^0$ consists of two disconnected points, at each of which the Hamiltonian is just a constant matrix with symmetry constraints.

\subsection{Clifford algebras and classifying spaces}
 
After dimensional reduction, the Hamiltonian and the symmetry operations act locally at the two points of $S^0$. One can block diagonalize the Hamiltonian according to the conventional unitary symmetries. It is then sufficient to consider the Hamiltonian in a representative block. This Hamiltonian may have a number of symmetry constraints in the form of \emph{anticommuting} operators that also anticommute with the Hamiltonian. We choose a basis in such a way that the Hamiltonian at one of the two points of $S^0$ is the diagonal Hamiltonian $\diag(+1,\dots,-1,\dots)$. We are interested in all realizations of the Hamiltonian at the other point which are compatible with the symmetries. This is a problem of extending the Clifford algebra of the anticommuting symmetry operators by that Hamiltonian. All possible extensions of a Clifford algebra form its classifying space. Hence, the set of all possible Hamiltonians stands in one-to-one correspondence to the classifying space. Finally, we ask how many topologically distinct classes of such Hamiltonians exist, in the sense of there being no smooth deformation connecting them. This question is answered by the group $\pi_0$ of disconnected components of the classifying space. These groups are known \cite{PhysRevB.88.125129, PhysRevB.90.165114} and our procedure generalizes the one of Ref.\ \cite{PhysRevX.9.041015} to the 54 symmetry classes.

The complex Clifford algebra $\Cl_n$ [or $\Cl_n(\mathbb{C})$] has $n$ generators $e_i$ which satisfy
\begin{equation}
    \left\{e_{i}, e_{j}\right\}=2 \delta_{ij} .
\end{equation}
These generators constitute a basis $\{e_{1}^{n_{1}} \otimes e_{2}^{n_{2}} \otimes \ldots \otimes e_{n}^{n_{n}}\}_{n_{i}=0,1}$ of a vector space over $\mathbb{C}$. The maps
\begin{equation}
    \Cl_{n}=\left\{e_{1}, \ldots, e_{n}\right\} \rightarrow \Cl_{n+1}
    = \left\{e_0, e_{1}, \ldots, e_{n}\right\} ,
\end{equation}
where $e_0$ anticommutes with all $e_i$, define the so-called classifying spaces $\mathcal{C}_{n}$. Formally, the classifying space is the set of all possible representations of $e_0$ for given representations of $e_1$, \dots, $e_n$~\cite{PhysRevB.88.125129}.

The real Clifford algebra $\Cl_{p,q}$ [or $\Cl_{p,q}(\mathbb{R})$] has $p+q$ generators with the properties
\begin{align}
\left\{e_{i}, e_{j}\right\} = 0 \quad &\mbox{for } i\neq j, \\
e_{i}^{2} = -1 \quad &\mbox{for } i=1, \ldots, p , \\
e_{i}^{2} = +1 \quad &\mbox{for } i=p+1, \ldots, p+q .
\end{align}
These generators again constitute a basis of a vector space but in this case over $\mathbb{R}$.
The maps
\begin{align}
&\Cl_{0,q} = \left\{e_{1}, \ldots\right\} \rightarrow \Cl_{0,q+1}
  = \left\{e_0, e_{1}, \ldots \right\}, \\
&\mbox{with } e_0^{2} = +1
\end{align}
and $e_0$ anticommuting with all $e_i$, define the classifying spaces $\mathcal{R}_q$. One can then show \cite{Karoudi, Kitaev_2009} that the maps
\begin{align}
&\Cl_{p,q} = \left\{e_{1}, \ldots\right\} \rightarrow \Cl_{p+1,q}
  = \left\{e_0, e_{1}, \ldots \right\}, \\
&\mbox{with } e_0^{2} = -1
\end{align}
give $\mathcal{R}_{p+2-q}$ and the maps
\begin{align}
&\Cl_{p, q} = \left\{e_{1}, \ldots\right\} \rightarrow \Cl_{p, q+1}
  = \left\{e_0, e_{1}, \ldots\right\}, \\
&\mbox{with } e_0^{2} = +1
\end{align}
give $\mathcal{R}_{q-p}$ (in both cases with $e_0$ anticommuting with all $e_i$).
The classifying spaces have the Bott periodicity $\mathcal{C}_{p} \approx \mathcal{C}_{p+2}$ and $\mathcal{R}_{p} \approx \mathcal{R}_{p+8}$. The relevance of the classifying spaces is that their  $\pi_0$-groups are known \cite{Kitaev_2009, Ryu_2010, PhysRevB.88.125129, PhysRevB.90.165114}. 

In Appendix \ref{sc:zeroAZ}, we demonstrate that for frequency-independent Hermitian Hamiltonians the above procedure leads to the known results.

\subsection{\textit{K}-groups for frequency-dependent non-Hermitian Hamiltonians}

In this section, we show the results for the classification of frequency-dependent non-Hermitian Hamiltonians. These Hamiltonians belong to one of the 54 symmetry classes listed in Table \ref{tab:symClasses}. According to Eq.\ (\ref{eq:splitK}), the \textit{K}-groups for such Hamiltonians $H(\mathbf{k},\omega)$ in $d$ spatial or momentum dimensions consist of three parts,
\begin{equation}
K(S^d \times S^1) = K(S^d) \oplus K(S^1) \oplus K(S^{d+1}) ,
\label{eq:splitK.1}
\end{equation}
where $K(S^d)$ describes the topology in momentum space \cite{Kitaev_2009, Ryu_2010, Budich_2013, TRSPHS}, $K(S^1)$ refers to frequency space, and $K(S^{d+1})$ to combined momentum-frequency space. For the three cases of a point gap, a real line gap, and an imaginary line gap, the classification problem for $H(\mathbf{k},\omega)$ differently maps to the classification of Hermitian Hamiltonians with related symmetries, as described in Sec.\ \ref{sc:flatt}. Applying dimensional reduction, as described in Sec.\ \ref{sc:Dred}, the three \textit{K}-groups in Eq.\ (\ref{eq:splitK.1}) can be constructed.

The \textit{K}-groups for $d=1,2,3$ and all symmetry classes are shown in Table \ref{tab:rePoint} for a point gap, in Table \ref{tab:reLineRe} for a real line gap, and in Table \ref{tab:reLineIm} for an imaginary line gap. The columns for the frequency dependence are independent of the spatial symmetry $d$ but are repeated for clarity. (In the case of quantum dots, where real space and thus momentum space consists of a single point, this is the only contribution.) According to Eq.\ (\ref{eq:splitK.1}), the full \textit{K}-group is the direct sum of the three contribution from momentum, frequency, and their combination. Three examples that illustrate the procedure leading to Tables \ref{tab:rePoint}--\ref{tab:reLineIm} are presented below. Results for larger $d$ can be found in the same way.


The tables generalize the results in Ref.\ \cite{PhysRevX.9.041015}, where only the momentum dependence and the corresponding 38 symmetry classes were considered. The well-known results for the tenfold way \cite{Schnyder_2008, Schnyder_2009, Kitaev_2009, Ryu_2010, Ludwig_2015} are contained in Table \ref{tab:reLineRe} for a real line gap \footnote{We do not employ the notation ``$2\mathbb{Z}$'' which is sometimes used to indicate that a naturally defined topological invariant is even. $2\mathbb{Z}$ is isomorphic to $\mathbb{Z}$.}. Note that the entries in Table \ref{tab:reLineRe} for the momentum dependence are the same for symmetry classes that only differ by a dagger.
As noted in Sec.\ \ref{sc:introduction}, the case of a real line gap applies to the effective Hamiltonian extracted from the retarded Green's function of an interacting closed systems. For open systems, all three cases are possible.

Already for the momentum dependence \cite{PhysRevX.9.041015}, direct sums $\mathbb{Z}_2 \oplus \mathbb{Z}_2$ and $\mathbb{Z} \oplus \mathbb{Z}$ appear. Now, more complicated groups occur due to the combination of three contributions, i.e., because of including the frequency dependence. However, all groups are isomorphic to direct sums of potentially multiple copies of the groups $\mathbb{Z}$ and $\mathbb{Z}_2$. For example, for the class $\mathrm{AI}$ with $d=1$ and a point gap we find $\mathbb{Z} \oplus \mathbb{Z}_2 \oplus \mathbb{Z}_2$. Moreover, a novel feature appears for the class $\mathrm{AIII} + \mathrm{SLS}_+$ in odd dimensions $d$ and a line gap of either type: We find the \textit{K}-group $\mathbb{Z} \oplus \mathbb{Z} \oplus \mathbb{Z} \oplus \mathbb{Z}$ for the combined contribution alone. The derivation is sketched in Sec.\ \ref{sub.AIIISLSp}. The origin of the quadruple $\mathbb{Z}$ can be traced back to the frequency dependence of $\text{CS}$ and $\text{SLS}$, because of which dimensional reduction leads to a zero-dimensional Hamiltonian with two independent conventional unitary symmetries.


\subsubsection{Class $\text{A}$ in three dimensions with a point gap}
\label{sub:Ad3}

As the first example, we consider the case of class $\mathrm{A}$ for $d=3$ spatial dimensions, assuming a point gap.
Class $\mathrm{A}$ has no symmetries. Flattening the Hamiltonian $H(\mathbf{k},\omega)$ following Sec.\ \ref{sc:flatt}, we obtain a Hermitian Hamiltonian
\begin{equation}
\tilde{H}(\mathbf{k},\omega) \equiv \left(\begin{array}{cc}
    0 & U(\mathbf{k},\omega) \\
    U^{\dagger}(\mathbf{k},\omega) & 0
  \end{array}\right)
\end{equation}
with the additional symmetry $\Sigma$ described by
\begin{equation}
\tilde{U}_{\Sigma}\, \tilde{H}(\mathbf{k},\omega)\, \tilde{U}_{\Sigma}^\dagger
  = -\tilde{H}(\mathbf{k},\omega) ,
\end{equation}
where $\Sigma^2=1$.

We start with the contribution $K(S^1)$ from the frequency dependence. We write the Hamiltonian in the form of Eq.\ (\ref{eq:stdForm}), ignoring the momentum dependence,
\begin{equation}
H(\theta) = H_1 \sin\theta + H_0 \cos\theta ,
\label{eq:Ad3.Hreduction}
\end{equation}
where $\theta$ of course parameterizes the frequency. The two matrices $H_1$ and $H_0$ anticommute and square to the identity by construction. Furthermore, $H_1$ and $H_0$ anticommute with $\Sigma$.

Hence, the dimensionally-reduced Hamiltonian $H_1$ is subject to two constraints, namely it has to anticommute with $\Sigma$ and with $H_0$. In analogy to the standard Altland--Zirnbauer classes, see Appendix \ref{sc:zeroAZ}, we choose the anticommuting generators
\begin{align}
e_1 &= \Sigma , \\
e_2 &= H_0 ,
\end{align}
forming the Clifford algebra $\Cl_2$. The generator
\begin{equation}
e_0 = H_1
\end{equation}
extends the algebra,
\begin{equation}
\Cl_{2}=\left\{e_{1}, e_2\right\}
   \rightarrow \Cl_{2+1}=\left\{e_0, e_{1}, e_2 \right\} ,
\end{equation}
this defines the classifying space $\mathcal{C}_{2}\approx\mathcal{C}_{0}$. The resulting \textit{K}-group is $\mathbb{Z}$.

The second contribution $K(S^3)$ stems from the momentum dependence. We have to reduce the dimension from $d=3$ to $d=0$ in three steps. The result is an algebra generated by the symmetry operator $\Sigma$, the zero-dimensional Hamiltonian $H_1$, and three generators $H_0$, $H_0'$, and $H_0''$, which appeared due to the dimensional reduction. All of these generators anticommute and square to the identity. We choose $e_1 = \Sigma$, $e_2 = H_0$, $e_3 = H_0'$, and $e_4 = H_0''$ and add the generator $e_0 = H_1$, which leads to the extension $\Cl_{4} \rightarrow \Cl_{5}$. The classifying space is $\mathcal{C}_4 \approx \mathcal{C}_0$, implying that this part also contributes a $\mathbb{Z}$ group.

The last contribution $K(S^4)$ stems from the combined momentum and frequency dependence. Upon dimensional reduction, there arise three generators $H_0$, $H_0'$, and $H_0''$ from the momentum directions and one generator $H_0'''$ from the frequency direction. The corresponding extension is $\Cl_5 \rightarrow \Cl_6$. The classifying space is $\mathcal{C}_5 \approx \mathcal{C}_{1}$, leading to a trivial \textit{K}-group.

In conclusion, the entries for class $\text{A}$ and dimension $d=3$ in Table \ref{tab:rePoint} are $\mathbb{Z}$ for the momentum part, $\mathbb{Z}$ for the frequency part, and $0$ for the combined part. Using Eq.\ (\ref{eq:splitK}), we obtain the full group $\mathbb{Z} \oplus \mathbb{Z}$.

\subsubsection{Class $\text{AIII}$ in one dimension with  a point gap}

As a second example, we consider class $\mathrm{AIII}$ for $d=1$, assuming a point gap. We will see that an additional doubling of the group occurs. Class $\mathrm{AIII}$ has chiral symmetry $\text{CS}$. Flattening $H(k,\omega)$, we obtain a Hermitian Hamiltonian $\tilde{H}(k,\omega)$ with chiral symmetry,
\begin{equation}
\tilde{U}_\mathrm{CS}\, \tilde{H}(k,\omega)\, \tilde{U}_\mathrm{CS}^\dagger
  = - \tilde{H}(k,-\omega) ,
\end{equation}
and the additional symmetry
\begin{equation}
\tilde{U}_{\Sigma}\, \tilde{H}(k,\omega)\, \tilde{U}_{\Sigma}^\dagger
  = - \tilde{H}(k,\omega) .
\end{equation}
$\mathrm{CS}$ and $\Sigma$ square to the identity and anticommute.

The contribution from the frequency dependence requires one reduction according to Eq.\ (\ref{eq:Ad3.Hreduction}). The new aspect is that under $\text{CS}$, $\omega$ changes sign so that $\theta$ maps to $\pi-\theta$ and $\cos\theta$ changes sign. Hence, $H_0$ does not change sign under $\text{CS}$, i.e., $H_0$ and $\tilde{U}_\mathrm{CS}$ commute. $H_1$ retains $\mathrm{CS}$ and $\Sigma$ and thus anticommutes with $\tilde{U}_\mathrm{CS}$ and $\tilde{U}_\Sigma$, as well as with $H_0$. Since $H_0$ and $\tilde{U}_\mathrm{CS}$ commute they cannot both be generators of a Clifford algebra. However, $H_1$ commutes with their product $H_0 \tilde{U}_\mathrm{CS}$, which describes a conventional unitary symmetry. The unitary transformation that diagonalizes $H_0 \tilde{U}_\mathrm{CS}$ block diagonalizes $H_1$, resulting in two sectors corresponding to the two eigenvalues $\pm 1$ of $H_0 \tilde{U}_\mathrm{CS}$. $\Sigma$ and $H_0$ also commute with $H_0 \tilde{U}_\mathrm{CS}$ and are simultaneously block diagonalized. Therefore, the two blocks of $H_1$ are zero-dimensional Hamiltonians that anticommute with corresponding blocks of $\Sigma$ and $H_0$. We choose $e_1 = \Sigma$ and $e_2 = H_0$ and add $e_0 = H_1$ (or more correctly their diagonal blocks). The corresponding extension is $\Cl_2 \rightarrow \Cl_3$ with the classifying space $\mathcal{C}_2 \approx \mathcal{C}_0$ and group $\mathbb{Z}$. Since we have two sectors, the \textit{K}-group for the frequency dependence is doubled to $\mathbb{Z} \oplus \mathbb{Z}$.

The contribution from the momentum dependence also requires one reduction, which produces matrices $H_0$ and $H_1$ like in Sec.\ \ref{sub:Ad3}. But now $\theta$ is a momentum component and thus does not change under $\text{CS}$. Hence, $H_1$ inherits the symmetries $\Sigma$ and $\text{CS}$. We take $e_1 = \Sigma$, $e_2 = \mathrm{CS}$, and $e_3 = H_0$ and add $e_0 = H_1$. The corresponding extension is $\Cl_3 \rightarrow \Cl_4$ with the classifying space $\mathcal{C}_3 \approx \mathcal{C}_1$ and trivial \textit{K}-group.

The contribution from the combined momentum and frequency dependence requires two reductions according to Eq.\ (\ref{eq:Ad3.Hreduction}). Let us first take $\theta$ to represent the frequency. By the same argument as above, $H_1$ splits into two sectors and anticommutes with two matrices, $\Sigma$ and $H_0$. For each sector, reduction in the momentum direction generates a new matrix $H_0'$. The symmetry of $H_1$ is not changed. We choose $e_1 = \Sigma$, $e_2 = H_0$, and $e_3 = H_0'$ and add $e_0 = H_1$. The corresponding extension is $\Cl_3 \rightarrow \Cl_4$ with the classifying space $\mathcal{C}_3 \approx \mathcal{C}_1$ and trivial \textit{K}-group. The doubling of a trivial group because of the two sectors does not change it.

\subsubsection{Class $\text{AIII} + \text{SLS}_+$ in one dimension with a real line gap}
\label{sub.AIIISLSp}

The third example, the class $\text{AIII} + \text{SLS}_+$ for $d=1$ with a real line gap, involves a class that is only relevant for non-Hermitian Hamiltonians. Since we are considering a real line gap, the flattened Hamiltonian $\tilde{H}(k,\omega)$ is Hermitian. The Hamiltonian satisfies $\mathrm{CS}$ and $\mathrm{SLS}$, i.e., $\tilde{H}(k,\omega)$ anticommutes with two unitary matrices $U_\mathrm{CS}$ and $U_\mathrm{SLS}$, see Table \ref{tab:SymW}, which themselves commute.

The frequency contribution requires one reduction according to Eq.\ (\ref{eq:Ad3.Hreduction}). Under $\mathrm{CS}$ and $\mathrm{SLS}$, $\omega$ and $\cos\theta$ change sign. Thus $H_0$ does not change sign under either transformation and $H_0$ commutes with $U_\mathrm{CS}$ and $U_\mathrm{SLS}$, which themselves also commute. $H_1$ anticommutes with all three and commutes with the product of any two. Two products, say $H_0 U_\mathrm{CS}$ and $H_0 U_\mathrm{SLS}$, are independent. We can diagonalize them simultaneously and the corresponding transformation block diagonalizes $H_1$, leading to four blocks according to the four possible combinations of eigenvalues $\pm 1$. $H_0$ commutes with $H_0 U_\mathrm{CS}$ and $H_0 U_\mathrm{SLS}$ and is simultaneously block diagonalized. We choose $e_1 = H_0$ and add $e_0 = H_1$ (or more correctly their diagonal blocks). The extension is $\Cl_1 \rightarrow \Cl_2$ with classifying space $\mathcal{C}_1$ and trivial \textit{K}-group.

The momentum contribution also requires one reduction. $\theta$ does not change under $\mathrm{CS}$ or $\mathrm{SLS}$. $H_1$ thus inherits the anticommutation with $U_\mathrm{CS}$ and $U_\mathrm{SLS}$. $\mathrm{CS}$ and $\mathrm{SLS}$ commute and thus cannot both be generators of a Clifford algebra. Their product leads to a conventional unitary symmetry $\mathrm{CS}\times\mathrm{SLS}$, which can be used to block diagonalize $H_1$ and $H_0$ into two blocks. We choose (the diagonal blocks of) $e_1 = H_0$ and $e_0 = H_1$. The extension is $\Cl_1 \rightarrow \Cl_2$ with classifying space $\mathcal{C}_1$ and trivial \textit{K}-group.

The combined contribution can be obtained by first reducing the frequency dependence. As noted above, this leads to four sectors with Hamiltonians $H_1$ that anticommute with $H_0$. For each sector, reduction in the momentum direction generates a new $H_0'$ without changing the symmetry of $H_1$. We choose (the diagonal blocks of) $e_1 = H_0$, $e_2 = H_0'$, and $e_0 = H_1$. The extension is $\Cl_2 \rightarrow \Cl_3$ with classifying space $\mathcal{C}_2 \approx \mathcal{C}_0$ and group $\mathbb{Z}$. It is quadrupled because of the four sectors, leading to $\mathbb{Z} \oplus \mathbb{Z} \oplus \mathbb{Z} \oplus \mathbb{Z}$.

\begin{table*}
\caption{\label{tab:rePoint}
\textit{K}-groups for frequency-dependent non-Hermitian Hamiltonians in $d=1,2,3$ spatial dimensions for the case of a \emph{point gap}. For each value of $d$, the three columns show the \textit{K}-groups describing the momentum dependence (green, left column), the frequency dependence (blue, middle column), and the combined momentum and frequency dependence (red, right column).}
\begin{ruledtabular}

\end{ruledtabular}
\end{table*}

\section{Example for winding in the frequency direction}
\label{sc:example}

In this section, we consider a simple model for an open quantum system that shows nontrivial winding in the frequency direction. The model is zero dimensional in space ($d=0$). As illustrated in Fig.\ \ref{pic:model}, it consists of three orbitals for spin-$1/2$ fermions with energies $-\epsilon$, $0$, and $+\epsilon$. Only fermions in the spin eigenstate $|{\uparrow}\rangle$ satisfying $\sigma_z |{\uparrow}\rangle = |{\uparrow}\rangle$ can hop between the left and center orbitals, whereas only fermions in the spin eigenstate $|{\rightarrow}\rangle$ satisfying $\sigma_x |{\rightarrow}\rangle = |{\rightarrow}\rangle$ can hop between the right and center orbitals. $\sigma_i$ are Pauli matrices. The left and right orbitals are coupled to a bath that leads to loss, which is implemented by a constant damping rate $\Gamma>0$. The center orbital is coupled to another bath that causes energy gain, which is implemented by a constant negative damping rate $\Gamma_0<0$. We note that gain is not essential for our results in that we could remove the gain term at the expense of treating a point gap away from the point $E=0$ in the following.

\begin{figure}
\includegraphics[height=4.5cm]{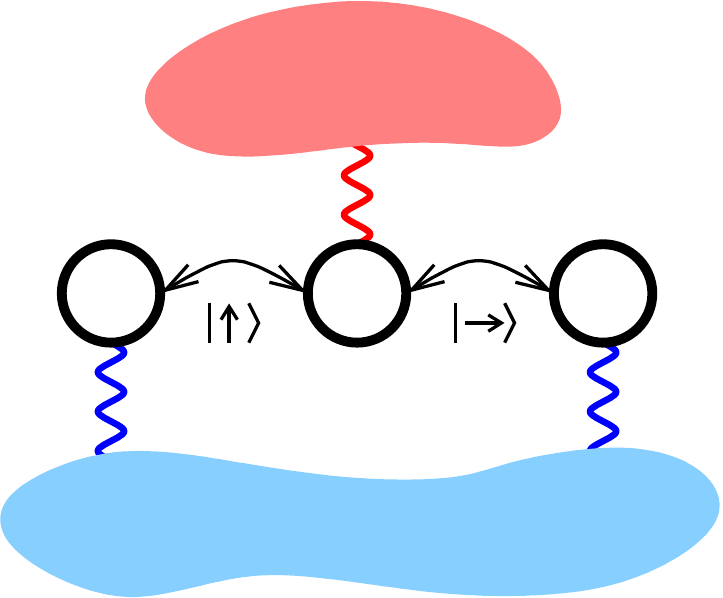}
\caption{\label{pic:model}Schematic of illustrative model consisting of three spin-full orbitals with spin-dependent hopping and coupled to baths.}
\end{figure}

We start by treating the isolated system without gain or loss. It is described by the Hamiltonian
\begin{equation}
H = H_0 + V_L + V_R ,
\end{equation}
where
\begin{equation}
H_0 = \epsilon \sum_\sigma c_{L\sigma}^\dagger c_{L\sigma}
  - \epsilon \sum_\sigma c_{R\sigma}^\dagger c_{R\sigma}
\end{equation}
describes the on-site energies and
\begin{align}
V_L &= v\, c_{L\uparrow}^\dagger d_\uparrow + \mathrm{H.c.} , \\
V_R &= v\, \frac{c_{R\uparrow}^\dagger + c_{R\downarrow}^\dagger}{\sqrt{2}}\,
  \frac{d_\uparrow + d_\downarrow}{\sqrt{2}} + \mathrm{H.c.}
\end{align}
describe the hopping. Here, $c_{l\sigma}^\dagger$ creates a fermion in orbital $l=L,R$ with spin $\sigma$ and $d_\sigma^\dagger$ creates a fermion in the center orbital with spin~$\sigma$.

Since the Hamiltonian is bilinear we can obtain the retarded Green's function $G^R_{l\sigma,l'\sigma'}(\omega)$ in closed form. Here, $l=L,0,R$, with $l=0$ corresponding to the center orbital. The equation of motion for the local Green's function for the center orbital reads as~\cite{BruusFlensberg}
\begin{align}
&(\omega + i0^+)\, G^R_{0\sigma,0\sigma'}(\omega)
  - v\, \delta_{\sigma\uparrow}\, G^R_{L\uparrow,0\sigma'}(\omega) \nonumber \\
&\quad{} - \frac{v}{2}\, \sum_{\sigma''} G^R_{R\sigma'',0\sigma'}(\omega)
  = \delta_{\sigma\sigma'} .
\label{eq:model.G0}
\end{align}
We also need the equation of motion for the nonlocal Green's functions $G^R_{l\sigma,0\sigma'}(\omega)$, which reads as
\begin{align}
&(\omega - \epsilon_n + i0^+)\, G^R_{n\sigma,0\sigma'}(\omega)
  - v\, \delta_{nL} \delta_{\sigma\uparrow}\, G^R_{0\uparrow,0\sigma'}(\omega) \nonumber \\
&\quad{} - \frac{v}{2}\, \delta_{nR} \sum_{\sigma''} G^R_{0\sigma'',0\sigma'}(\omega) = 0 ,
\label{eq:model.GLR}
\end{align}
where $\epsilon_L = \epsilon$ and $\epsilon_R = -\epsilon$. Solving Eq.\ (\ref{eq:model.GLR}) for the nonlocal Green's functions and inserting the result into Eq.\ (\ref{eq:model.G0}), we obtain
\begin{align}
&(\omega + i0^+)\, G^R_{0\sigma,0\sigma'}(\omega)
  - \delta_{\sigma\uparrow}\, \frac{v^2}{\omega - \epsilon + i0^+}\,
    G^R_{0\uparrow,0\sigma'}(\omega) \nonumber \\
&\quad{} - \frac{v^2/4}{\omega + \epsilon + i0^+} \sum_{\sigma''} G^R_{0\sigma'',0\sigma'}(\omega)
  = \delta_{\sigma\sigma'} .
\end{align}
Writing the Green's function as a $2\times 2$ matrix acting on spin space, this equation becomes
\begin{align}
&(\omega + i0^+)\, G^R_{00}(\omega)
  - |{\uparrow}\rangle\langle{\uparrow}|\, \frac{v^2}{\omega - \epsilon + i0^+}\,
    G^R_{00}(\omega) \nonumber \\
&\quad{} - |{\rightarrow}\rangle\langle{\rightarrow}|\, \frac{v^2}{\omega + \epsilon + i0^+}\,
    G^R_{00}(\omega) = I ,
\end{align}
where $I$ is the identity matrix. The solution, with suppressed identity matrices, reads as
\begin{equation}
G^R_{00}(\omega) = \left[ \omega - H^R_\mathrm{eff}(\omega) \right]^{-1} ,
\end{equation}
with the effective Hamiltonian
\begin{equation}
H^R_\mathrm{eff}(\omega)
  = |{\uparrow}\rangle\langle{\uparrow}|\, \frac{v^2}{\omega - \epsilon + i0^+}
  + |{\rightarrow}\rangle\langle{\rightarrow}|\, \frac{v^2}{\omega + \epsilon + i0^+}
  - i0^+ .
\end{equation}
Next, we introduce gain and loss by replacing the infinitesimal imaginary terms by finite ones:
\begin{equation}
H^R_\mathrm{eff}(\omega)
  = |{\uparrow}\rangle\langle{\uparrow}|\, \frac{v^2}{\omega - \epsilon + i\Gamma}
  + |{\rightarrow}\rangle\langle{\rightarrow}|\, \frac{v^2}{\omega + \epsilon + i\Gamma}
  - i\Gamma_0 ,
\label{eq:model.Heff}
\end{equation}
where $\Gamma>0$ (loss) and $\Gamma_0<0$ (gain).

One easily checks that the effective Hamiltonian in Eq.\ (\ref{eq:model.Heff}) satisfies the symmetries $\mathrm{TRS}^\dagger$ with the matrix $U_{\mathrm{TRS}^\dagger} = I$, $\mathrm{PHS}^\dagger$ with the matrix~\footnote{This matrix describes a rotation by $\pi$ about the $[101]$ direction in spin space, which interchanges the $x$ and $z$ components.}
\begin{equation}
U_{\mathrm{PHS}^\dagger} = \exp\left( - i\pi\, \frac{\sigma_1 + \sigma_3}{2\sqrt{2}} \right) ,
\end{equation}
and $\mathrm{CS}$ with the same matrix $U_\mathrm{CS} = U_{\mathrm{PHS}^\dagger}$. The anti-unitary symmetry transformations for $\mathrm{TRS}^\dagger$ and $\mathrm{PHS}^\dagger$ both square to $+I$. Table \ref{tab:symClasses} shows that the model thus belongs to class $\mathrm{BDI}^\dagger$.

\begin{figure}
\includegraphics[width=\columnwidth]{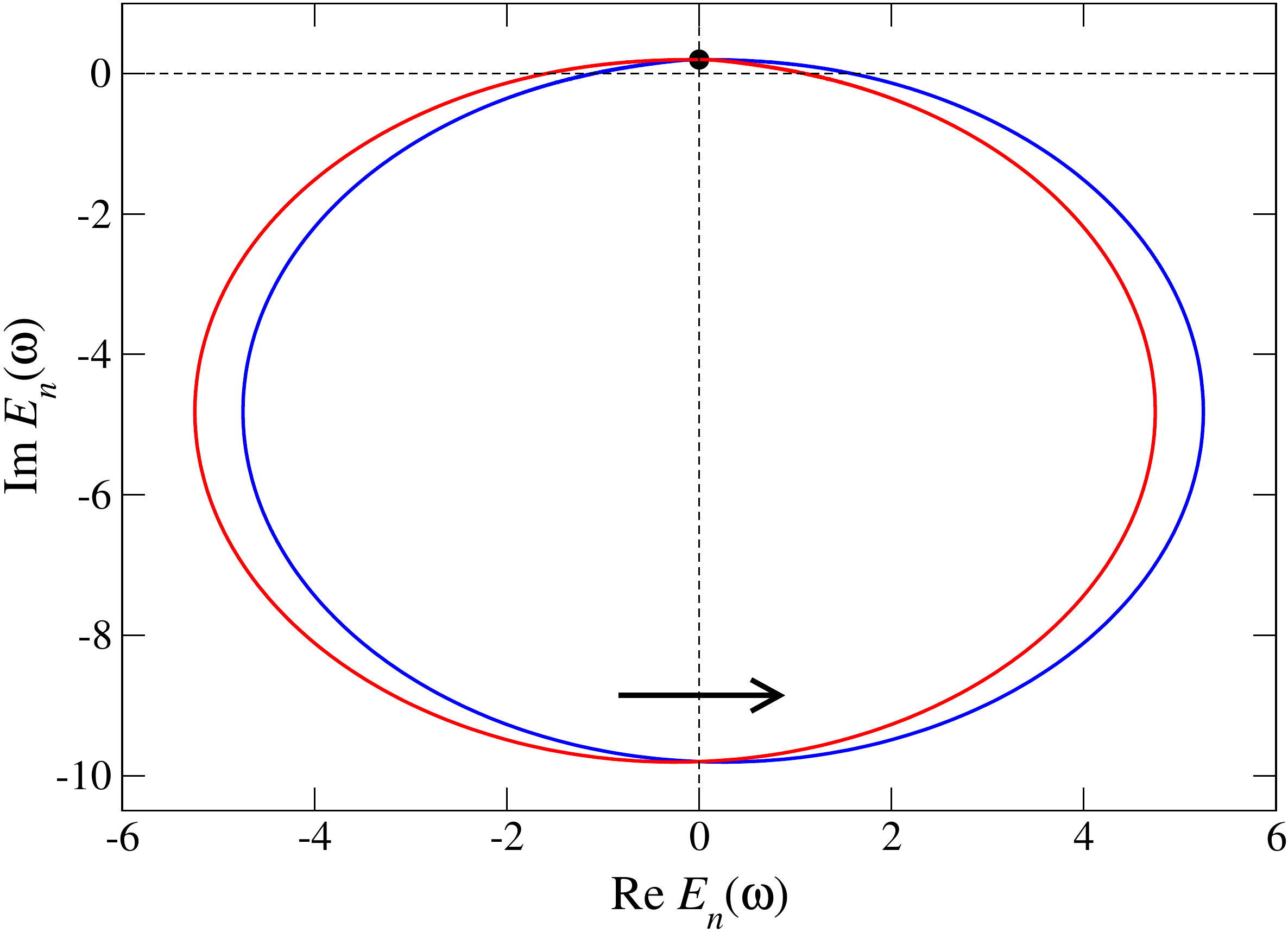}
\caption{\label{pic:model.eigenvalues}Parametric plot of the eigenvalues $E_1(\omega)$ (blue curve) and $E_2(\omega)$ (red curve) of the effective Hamiltonian $H^R_\mathrm{eff}(\omega)$ in the complex plane. The frequency $\omega$ runs from $-\infty$, where $E_1 = E_2 = -i\Gamma_0$ (black dot), to $+\infty$, with the same values of $E_1$ and $E_2$. Both curves are traversed counterclockwise, as indicated by the arrow. The parameters are $\epsilon = 1$, $v = 1$, $\Gamma_0 = -0.2$, and $\Gamma = 0.1$.}
\end{figure}

The complex eigenvalues of $H^R_\mathrm{eff}(\omega)$ are plotted in Fig.\ \ref{pic:model.eigenvalues}. Both eigenvalues orbit the zero point in the positive direction once, as $\omega$ increases from $-\infty$ to $+\infty$. Hence, the model has a point gap. According to Table \ref{tab:rePoint}, the \textit{K}-group describing the frequency winding for class $\mathrm{BDI}^\dagger$ is $\mathbb{Z}$. The invariant is easily identified as a winding number
\begin{equation}
w = \frac{1}{2\pi i} \int d\omega\, \frac{\partial}{\partial\omega}
  \ln \det H^R_\mathrm{eff}(\omega)
\end{equation}
The integrand can be obtained analytically. Evaluating the integral numerically, we obtain $w=2$. In fact, the winding number can equivalently be written as~\cite{PhysRevX.8.031079}
\begin{equation}
w = \frac{1}{2\pi} \sum_n \int d\omega\, \frac{\partial}{\partial\omega} \arg E_n(\omega) .
\end{equation}
With this, one finds $w=2$ directly by inspecting Fig.~\ref{pic:model.eigenvalues}.

The crucial ingredient for the nontrivial topology of this model is the incompatibility of the hopping operators $V_L$ and $V_R$. Together with the different on-site energies of the left and right orbitals, $V_L$ and $V_R$ generate two incompatible self-energy corrections that peak at different frequencies $\omega$, see Eq.\ (\ref{eq:model.Heff}). This causes the eigenvalues to describe closed loops in the complex energy plane.

\section{Summary and conclusions}
\label{sc:concl}

Non-Hermitian Hamiltonians that depend on frequency as well as on momentum arise, for example, as effective Hamiltonians that are in one-to-one correspondence to single-particle Green's functions. They are thus well defined also for interacting systems. In this paper, we have focused on the additional topological features that result from the frequency dependence. Our results rely on the assumption that the frequency axis can be compactified to a circle $S^1$, which requires the effective Hamiltonian and thus the self-energy to have the same limit for $\omega\to\pm\infty$. This holds for example if the effect of interactions becomes weak at high energies.

We have first presented a rigorous definition and calculation of the group structure formed by stable equivalence classes of matrix-valued functions form $S^{d+1}$ with additional symmetry and gap constraints. Important theorems have been reformulated in the transparent language of Hamiltonians instead of vector bundles. An essential point is that adding flat bands allows decomposing nontrivial parts of a Hamiltonians into irreducible parts forming a group. The overall \textit{K}-group consists of three parts, one from the pure momentum dependence, one from the frequency dependence, and one from the combination of the two. The frequency dependence thus adds two contribution: a contribution that describes possible winding of the effective Hamiltonian in the frequency direction and a contribution that describes winding in the $(d+1)$-dimensional momentum-frequency space. All groups are found to be isomorphic to direct sums of potentially multiple copies of the groups $\mathbb{Z}$ and $\mathbb{Z}_2$.
An example with nontrivial winding in the frequency direction is given by Ramos \textit{et al.}\ \cite{Ramos_2021} in the context of an effective \emph{Hermitian} Hamiltonian resulting from the quantum Langevin equation.

Frequency and non-Hermiticity turn out to lead to $54$ symmetry classes, which are listed in Table \ref{tab:symClasses}. This 54-fold way has to be contrasted to the tenfold way \cite{Zirnbauer, PhysRevB.55.1142} for Hermitian Hamiltonians with only momentum dependence and the 38-fold way \cite{PhysRevX.9.041015} for non-Hermitian Hamiltonians with momentum dependence. The lack on Hermiticity also means that the spectrum is complex, which requires a reconsideration of the meaning of a spectral gap. We have considered the cases of a point gap, a real line gap, and an imaginary line gap \cite{PhysRevX.9.041015}. We have applied a flattening procedure, which maps the problem to one concerning Hermitian Hamiltonians for each of the three gap types, and dimensional reduction to obtain the group structure for each symmetry class, gap type, and spatial dimensions $d=1,2,3$. The dimension-reduction procedure differs from the one used earlier \cite{Teo_2010, PhysRevB.90.165114} to allow us to deal with symmetry transformations that invert frequency. The results are presented in Tables \ref{tab:rePoint}--\ref{tab:reLineIm}, which contain the tenfold-way classification \cite{Schnyder_2008, Schnyder_2009, Kitaev_2009, Ryu_2010, Ludwig_2015} as well as the classification in terms of 38 symmetry classes of only momentum-dependent non-Hermitian Hamiltonians~\cite{PhysRevX.9.041015}.

This work opens several directions for future research. Our results establish the possible existence and nature of topological invariants for many cases but do not say how these can be calculated. Expressions for these invariants are called for. Clearly, one would also like to find models and experimental realizations belonging to entries in the tables. Cases where the frequency dependence and the non-Hermiticity are essential are particularly interesting. Moreover, what are the observable consequences of the nontrivial \textit{K}-groups? From our experience with topological insulators and fully gapped superconductors, one would of course primarily look for surface states. Such states at surfaces in real space have a clear meaning but our Hamiltonians also have a frequency argument, which suggests to consider end states in time. We speculate that sudden switching (quenches) between different symmetry classes could lead to nontrivial transient effects. At a more general level, there is a wide field of study of possible topological phase transitions.

Furthermore, it is clear that the effective Hamiltonian obtained from the single-particle Green's function does not contain everything that there is to know about an interacting system. One can think of applying analogous ideas to classify higher-order correlation and response functions.

\begin{acknowledgments}
The authors thank J. Budich and S. Kobayashi for useful discussions.
Financial support by the Deutsche Forschungsgemeinschaft through Collaborative Research Center SFB 1143, project A04, project id 247310070, and the W\"urzburg-Dresden Cluster of Excellence ct.qmat, EXC 2147, project id 390858490, is gratefully acknowledged.
\end{acknowledgments}

\appendix

\section{Symmetry transformations of the frequency}
\label{app:SymFreq}

In this appendix, we consider the possible sign change of the frequency argument $\omega$ of the Green's function $G^R(\mathbf{k},\omega)$. We consider noninteracting systems and assume, as always, translational invariance on a lattice. The Hamiltonian on Fock space is then bilinear and can be written as
\begin{equation}
\mathcal{H}_0 = \sum_\mathbf{k} \sum_{nn'} a^\dagger_{\mathbf{k}n} H_{nn'}(\mathbf{k}) a_{\mathbf{k}n'} ,
\end{equation}
where $a_{\mathbf{k}n}$ and $a^\dagger_{\mathbf{k}n}$ are annihilation and creation operators for particles with fixed momentum $\mathbf{k}$ and further quantum numbers collected in $n$. The retarded Green's function is a matrix with components
\begin{equation}
G^R_{nn'}(\mathbf{k},\omega) = -i \int_0^\infty dt\, e^{i\omega t}
  \langle\mathrm{GS}| \big\{ a_{\mathbf{k}n}(t),
  a^\dagger_{\mathbf{k}n'}(0) \big\} |\mathrm{GS}\rangle ,
\end{equation}
where $|\mathrm{GS}\rangle$ is the ground state and we employ the Heisenberg picture. This is the general form. For the noninteracting case, the Green's function becomes
\begin{equation}
G^R(\mathbf{k},\omega) = \big[ \omega - H(\mathbf{k}) + i0^+ \big]^{-1} .
\end{equation}
Identity matrices are suppressed.

\subsection{Unitary symmetries}

Let $\mathcal{U}$ be a unitary operator on Fock space that describes an ordinary unitary point-group symmetry. This means that $\mathcal{U}$ acts on annihilation operators as
\begin{equation}
\mathcal{U} a_{\mathbf{k}n} \mathcal{U}^{-1} = \sum_{n'} U^\dagger_{nn'} a_{R\mathbf{k},n'}
  = \sum_{n'} U^*_{n'n} a_{R\mathbf{k},n'} ,
\end{equation}
where $U$ is a unitary matrix and $R \in \mathrm{O}(3)$ describes the action of the symmetry transformation on the momentum vector $\mathbf{k}$. It is useful to also write down the transformation rules for creation operators and for the inverse transformations:
\begin{align}
\mathcal{U} a^\dagger_{\mathbf{k}n} \mathcal{U}^{-1}
  &= \sum_{n'} U^T_{nn'} a^\dagger_{R\mathbf{k},n'}
  = \sum_{n'} U_{n'n} a_{R\mathbf{k},n'} , \\
\mathcal{U}^{-1} a_{\mathbf{k}n} \mathcal{U} &= \sum_{n'} U_{nn'} a_{R^{-1}\mathbf{k},n'} , \\
\mathcal{U}^{-1} a^\dagger_{\mathbf{k}n} \mathcal{U}
  &= \sum_{n'} U^*_{nn'} a^\dagger_{R^{-1}\mathbf{k},n'} .
\end{align}
To be a symmetry, $\mathcal{U}$ must satisfy $\mathcal{U} \mathcal{H} \mathcal{U}^{-1} = \mathcal{H}$ so that
\begin{equation}
U H(R^{-1}\mathbf{k}) U^\dagger = H(\mathbf{k}) .
\end{equation}
The Green's function then satisfies
\begin{equation}
U G^R(R^{-1}\mathbf{k},\omega) U^\dagger = G^R(\mathbf{k},\omega) .
\end{equation}
For the noninteracting case, this follows trivially but it holds in general. In the general proof, we have to assume that the ground state $|\mathrm{GS}\rangle$ is invariant under $\mathcal{U}$ up to a phase factor. This means that the result breaks down if the symmetry $\mathcal{U}$ is spontaneously broken.

\subsection{particle-hole symmetry}

On Fock space, the particle-hole transformation or charge conjugation $\mathcal{C}$ is \emph{unitary} \cite{Schnyder_2008, Ryu_2010}. However, it differs from an ordinary symmetry by its action on annihilation and creation operators:
\begin{align}
\mathcal{C} a_{\mathbf{k}n} \mathcal{C}^{-1} &= \sum_{n'} U^T_{C,nn'} a^\dagger_{-\mathbf{k},n'}
  = \sum_{n'} U_{C,n'n} a^\dagger_{-\mathbf{k},n'} ,
\label{C.a1} \\
\mathcal{C} a^\dagger_{\mathbf{k}n} \mathcal{C}^{-1}
  &= \sum_{n'} U^\dagger_{C,nn'} a_{-\mathbf{k},n'}
  = \sum_{n'} U^*_{C,n'n} a_{-\mathbf{k},n'} ,
\label{C.a2} \\
\mathcal{C}^{-1} a_{\mathbf{k}n} \mathcal{C} &= \sum_{n'} U_{C,nn'} a^\dagger_{-\mathbf{k},n'} , \\
\mathcal{C}^{-1} a^\dagger_{\mathbf{k}n} \mathcal{C}
  &= \sum_{n'} U^*_{C,nn'} a_{-\mathbf{k},n'} .
\end{align}
This is only possible for fermions. The reason is that for fermions, but not for bosons, creation and annihilation operators have the same algebraic properties and therefore the particle-hole transformation is possible.

To be a symmetry, $\mathcal{C}$ must satisfy $\mathcal{C} \mathcal{H} \mathcal{C}^{-1} = \mathcal{H}$. Note that there is no minus sign; this is a standard unitary symmetry on Fock space. Then for the noninteracting case we find that two conditions must be satisfied~\cite{Schnyder_2008, Ryu_2010, Schnyder_2012}:
\begin{align}
\sum_\mathbf{k} \Tr H(\mathbf{k}) &= 0 , \\
U_C H(-\mathbf{k})^* U_C^\dagger &= - H(\mathbf{k}) .
\end{align}
For Hermitian Hamiltoniuans, we can replace $H(-\mathbf{k})^*$ by $H(-\mathbf{k})^T$ as convenient. The Green's function satisfies
\begin{equation}
U_C G^R(-\mathbf{k},-\omega)^* U_C^\dagger = - G^R(\mathbf{k},\omega) .
\end{equation}

\subsection{Time-reversal symmetry}

Time reversal $\mathcal{T}$ is an \emph{anti-unitary} transformation on Fock space, i.e., $\mathcal{T} i \mathcal{T}^{-1} = -i$, which acts on annihilation and creation operators as
\begin{align}
\mathcal{T} a_{\mathbf{k}n} \mathcal{T}^{-1} &= \sum_{n'} U^\dagger_{T,nn'} a_{-\mathbf{k},n'}
  = \sum_{n'} U^*_{T,n'n} a_{-\mathbf{k},n'} , \\
\mathcal{T} a^\dagger_{\mathbf{k}n} \mathcal{T}^{-1}
  &= \sum_{n'} U^T_{T,nn'} a^\dagger_{-\mathbf{k},n'}
  = \sum_{n'} U_{T,n'n} a^\dagger_{-\mathbf{k},n'} , \\
\mathcal{T}^{-1} a_{\mathbf{k}n} \mathcal{T} &= \sum_{n'} U^*_{T,nn'} a_{-\mathbf{k},n'} , \\
\mathcal{T}^{-1} a^\dagger_{\mathbf{k}n} \mathcal{T}
  &= \sum_{n'} U_{T,nn'} a^\dagger_{-\mathbf{k},n'} .
\end{align}
To be a symmetry, $\mathcal{T}$ must satisfy $\mathcal{T} \mathcal{H} \mathcal{T}^{-1} = \mathcal{H}$. This is a nonstandard symmetry since $\mathcal{T}$ is anti-unitary. Then for the noninteracting case we find~\cite{Schnyder_2008, Ryu_2010, Schnyder_2012}
\begin{equation}
U_T H(-\mathbf{k})^T U_T^\dagger = H(\mathbf{k}) .
\end{equation}
For Hermitian Hamiltonians, we can replace $H(-\mathbf{k})^T$ by $H(-\mathbf{k})^*$. The Green's function satisfies~\cite{PhysRevLett.105.256803}
\begin{equation}
U_T G^R(-\mathbf{k},\omega)^T U_T^\dagger = G^R(\mathbf{k},\omega) .
\end{equation}

\subsection{Chiral symmetry}

Chiral or sublattice symmetry is implemented by $\mathcal{S} = \mathcal{CT}$, which is anti-unitary, $\mathcal{S} i \mathcal{S}^{-1} = -i$. All results follow from the previous two sections. Chiral symmetry acts on annihilation and creation operators as
\begin{align}
\mathcal{S} a_{\mathbf{k}n} \mathcal{S}^{-1} &= \sum_{n'} U^T_{S,nn'} a^\dagger_{\mathbf{k}n'}
  = \sum_{n'} U_{S,n'n} a^\dagger_{\mathbf{k}n'} , \\
\mathcal{S} a^\dagger_{\mathbf{k}n} \mathcal{S}^{-1}
  &= \sum_{n'} U^\dagger_{S,nn'} a_{\mathbf{k}n'}
  = \sum_{n'} U^*_{S,n'n} a_{\mathbf{k}n'} , \\
\mathcal{S}^{-1} a_{\mathbf{k}n} \mathcal{S} &= \sum_{n'} U^*_{S,nn'} a^\dagger_{\mathbf{k}n'} , \\
\mathcal{S}^{-1} a^\dagger_{\mathbf{k}n} \mathcal{S}
  &= \sum_{n'} U_{S,nn'} a_{\mathbf{k}n'} .
\end{align}
Here, $U_S = U_C U_T^*$. To be a symmetry, $\mathcal{S}$ must satisfy $\mathcal{S} \mathcal{H} \mathcal{S}^{-1} = H$. Then for the noninteracting case we find~\cite{Schnyder_2008, Ryu_2010, Schnyder_2012}
\begin{align}
\sum_\mathbf{k} \Tr H(\mathbf{k}) &= 0 , \\
U_S H(\mathbf{k}) U_S^\dagger &= - H(\mathbf{k}) .
\end{align}
The Green's function satisfies
\begin{equation}
U_S G^R(\mathbf{k},-\omega)^\dagger U_S^\dagger = - G^R(\mathbf{k},\omega) .
\end{equation}
This can also be written as
\begin{equation}
U_S G^A(\mathbf{k},-\omega) U_S^\dagger = - G^R(\mathbf{k},\omega) .
\end{equation}

\section{Zero-dimensional Altland-Zirnbauer classes}
\label{sc:zeroAZ}

In this appendix, we recall the construction of classifying spaces for frequency-independent Hermitian Hamiltonians within one of the Altland-Zirnbauer (AZ) classes \cite{Kitaev_2009, PhysRevB.85.085103, PhysRevB.88.125129}. Such Hamiltonians satisfy a subset of the symmetries from Table \ref{tab:TabSymHer}, also compare Table \ref{tab:SymW}. For Hermitian Hamiltonians, we of course have $H=H^\dagger$ and $H^T=H^*$. The definition of the classes and the required symmetry operations can be found for example in Refs.~\cite{Ryu_2010, Ludwig_2015}.

\begin{table}[hbt]
\caption{\label{tab:TabSymHer}Symmetry operations for frequency-independent Hermitian Hamiltonians. $K$ denotes complex conjugation.}
\begin{ruledtabular}
\begin{tabular}{lll}
TRS & $T\, H(\mathbf{k})\, T^{-1} = H(-\mathbf{k})$ & $T = U_\mathrm{TRS}K$ \\
PHS & $C\, H(\mathbf{k})\, C^{-1} = -H(-\mathbf{k})$ & $C = U_\mathrm{PHS}K$ \\
SLS & $S\, H(\mathbf{k})\, S^{-1} = -H(\mathbf{k})$ & $S = TC = U_\mathrm{TRS} U_\mathrm{PHS}^*$
%
%
\end{tabular}
\end{ruledtabular}
\end{table}

The goal is to construct the classifying space for Hamiltonians $H$ with certain of these symmetries as described by one of the AZ classes, in $D=0$ dimensions. The generators of these symmetries realize a complex or real Clifford algebra $\{e_1,\ldots\}$. The classifying space is obtained from the extension of this Clifford algebra by another generator $e_0$, which is judiciously chosen as the Hamiltonian $H$ or as $H$ multiplied by the imaginary unit.
The complex case is straightforward and the results are shown in Table~\ref{tab:0DimComplAZ}.

\begin{table}[hbt]
\caption{\label{tab:0DimComplAZ}
Classifying spaces for the complex AZ classes in $D=0$ dimensions~\cite{PhysRevB.88.125129}.}
\begin{ruledtabular}
\begin{tabular}{llll}
    Class   &  Generators           & Extension                 & Classifying space  \\ \hline
    A       &  $e_0=H$              & $\Cl_0 \rightarrow \Cl_1$   & $\mathcal{C}_0$ \\
    AIII    &  $e_0=H$, $e_1=S$     & $\Cl_1 \rightarrow \Cl_2$   & $\mathcal{C}_1$
  \end{tabular}
\end{ruledtabular}
\end{table}

In the real case, there are three possibilities:
If only TRS is present we have $e_1=T$ and $e_2=T J$ satisfying $e_1^2 = e_2^2 = \epsilon_T = \pm 1$ and extend the algebra by $e_0 = J H$ with $e_0^2 = -1$. Here, $J$ is a real matrix representation of the imaginary unit.
If only PHS is present we have $e_1 = C$ and $e_2=C J$ satisfying $e_1^2 = e_2^2 = \epsilon_C = \pm 1$ and extend the algebra by $e_0 = H$ with $e_0^2 = +1$.
If both symmetries are present we have $e_1 = C$, $e_2 = C J$, and $e_3 = TC J$ satisfying $e_1^2 = e_2^2 = \epsilon_C = \pm 1$ and $e_3^2 = -\epsilon_T \epsilon_C$ and extend the algebra by $e_0 = H$ with $e_0^2 = +1$.
This choice of generators leads to the results presented in Table \ref{tab:0DimRealAZ}.
In the last three rows, Bott periodicity $\mathcal{R}_p \approx \mathcal{R}_{p+8}$ is used.

\begin{table}[hbt]
\caption{\label{tab:0DimRealAZ}
Classifying spaces for the real AZ classes in $D=0$ dimensions~\cite{PhysRevB.88.125129}.}
\begin{ruledtabular}
  \begin{tabular}{l l l l}
    Class   &  $(\epsilon_{T},\epsilon_{C})$    & Extension                         & Classifying space  \\ \hline
    AI      &  $(1,0)$                          & $\Cl_{0,2} \rightarrow \Cl_{1,2}$   & $\mathcal R_0$ \\
    BDI     &  $(1,1)$                          & $\Cl_{1,2} \rightarrow \Cl_{1,3}$   & $\mathcal R_1$ \\
    D       &  $(0,1)$                          & $\Cl_{0,2} \rightarrow \Cl_{0,3}$   & $\mathcal R_2$ \\
    DIII    &  $(-1,1)$                         & $\Cl_{0,3} \rightarrow \Cl_{0,4}$   & $\mathcal R_3$ \\
    AII     &  $(-1,0)$                         & $\Cl_{2,0} \rightarrow \Cl_{3,0}$   & $\mathcal R_4$ \\
    CII     &  $(-1,-1)$                        & $\Cl_{3,0} \rightarrow \Cl_{3,1}$   & $\mathcal R_{-3} \approx \mathcal R_5$ \\
    C       &  $(0,-1)$                         & $\Cl_{2,0} \rightarrow \Cl_{2,1}$   & $\mathcal R_{-2} \approx \mathcal R_6$ \\
    CI      &  $(1,-1)$                         & $\Cl_{2,1} \rightarrow \Cl_{2,2}$   & $\mathcal R_{-1} \approx \mathcal R_7$ 
  \end{tabular}
\end{ruledtabular}
\end{table}

\bibliography{Kotz}

\begin{thebibliography}{64}%
\makeatletter
\providecommand \@ifxundefined [1]{%
 \@ifx{#1\undefined}
}%
\providecommand \@ifnum [1]{%
 \ifnum #1\expandafter \@firstoftwo
 \else \expandafter \@secondoftwo
 \fi
}%
\providecommand \@ifx [1]{%
 \ifx #1\expandafter \@firstoftwo
 \else \expandafter \@secondoftwo
 \fi
}%
\providecommand \natexlab [1]{#1}%
\providecommand \enquote  [1]{``#1''}%
\providecommand \bibnamefont  [1]{#1}%
\providecommand \bibfnamefont [1]{#1}%
\providecommand \citenamefont [1]{#1}%
\providecommand \href@noop [0]{\@secondoftwo}%
\providecommand \href [0]{\begingroup \@sanitize@url \@href}%
\providecommand \@href[1]{\@@startlink{#1}\@@href}%
\providecommand \@@href[1]{\endgroup#1\@@endlink}%
\providecommand \@sanitize@url [0]{\catcode `\\12\catcode `\$12\catcode
  `\&12\catcode `\#12\catcode `\^12\catcode `\_12\catcode `\%12\relax}%
\providecommand \@@startlink[1]{}%
\providecommand \@@endlink[0]{}%
\providecommand \url  [0]{\begingroup\@sanitize@url \@url }%
\providecommand \@url [1]{\endgroup\@href {#1}{\urlprefix }}%
\providecommand \urlprefix  [0]{URL }%
\providecommand \Eprint [0]{\href }%
\providecommand \doibase [0]{https://doi.org/}%
\providecommand \selectlanguage [0]{\@gobble}%
\providecommand \bibinfo  [0]{\@secondoftwo}%
\providecommand \bibfield  [0]{\@secondoftwo}%
\providecommand \translation [1]{[#1]}%
\providecommand \BibitemOpen [0]{}%
\providecommand \bibitemStop [0]{}%
\providecommand \bibitemNoStop [0]{.\EOS\space}%
\providecommand \EOS [0]{\spacefactor3000\relax}%
\providecommand \BibitemShut  [1]{\csname bibitem#1\endcsname}%
\let\auto@bib@innerbib\@empty
\bibitem [{\citenamefont {Zirnbauer}(1996)}]{Zirnbauer}%
  \BibitemOpen
  \bibfield  {author} {\bibinfo {author} {\bibfnamefont {M.~R.}\ \bibnamefont
  {Zirnbauer}},\ }\bibfield  {title} {\bibinfo {title} {Riemannian symmetric
  superspaces and their origin in random‐matrix theory},\ }\href
  {https://doi.org/10.1063/1.531675} {\bibfield  {journal} {\bibinfo  {journal}
  {Journal of Mathematical Physics}\ }\textbf {\bibinfo {volume} {37}},\
  \bibinfo {pages} {4986} (\bibinfo {year} {1996})}\BibitemShut {NoStop}%
\bibitem [{\citenamefont {Altland}\ and\ \citenamefont
  {Zirnbauer}(1997)}]{PhysRevB.55.1142}%
  \BibitemOpen
  \bibfield  {author} {\bibinfo {author} {\bibfnamefont {A.}~\bibnamefont
  {Altland}}\ and\ \bibinfo {author} {\bibfnamefont {M.~R.}\ \bibnamefont
  {Zirnbauer}},\ }\bibfield  {title} {\bibinfo {title} {Nonstandard symmetry
  classes in mesoscopic normal-superconducting hybrid structures},\ }\href
  {https://doi.org/10.1103/PhysRevB.55.1142} {\bibfield  {journal} {\bibinfo
  {journal} {Phys. Rev. B}\ }\textbf {\bibinfo {volume} {55}},\ \bibinfo
  {pages} {1142} (\bibinfo {year} {1997})}\BibitemShut {NoStop}%
\bibitem [{\citenamefont {Schnyder}\ \emph {et~al.}(2008)\citenamefont
  {Schnyder}, \citenamefont {Ryu}, \citenamefont {Furusaki},\ and\
  \citenamefont {Ludwig}}]{Schnyder_2008}%
  \BibitemOpen
  \bibfield  {author} {\bibinfo {author} {\bibfnamefont {A.~P.}\ \bibnamefont
  {Schnyder}}, \bibinfo {author} {\bibfnamefont {S.}~\bibnamefont {Ryu}},
  \bibinfo {author} {\bibfnamefont {A.}~\bibnamefont {Furusaki}},\ and\
  \bibinfo {author} {\bibfnamefont {A.~W.~W.}\ \bibnamefont {Ludwig}},\
  }\bibfield  {title} {\bibinfo {title} {Classification of topological
  insulators and superconductors in three spatial dimensions},\ }\href
  {https://doi.org/10.1103/PhysRevB.78.195125} {\bibfield  {journal} {\bibinfo
  {journal} {Phys. Rev. B}\ }\textbf {\bibinfo {volume} {78}},\ \bibinfo
  {pages} {195125} (\bibinfo {year} {2008})}\BibitemShut {NoStop}%
\bibitem [{\citenamefont {Schnyder}\ \emph {et~al.}(2009)\citenamefont
  {Schnyder}, \citenamefont {Ryu}, \citenamefont {Furusaki},\ and\
  \citenamefont {Ludwig}}]{Schnyder_2009}%
  \BibitemOpen
  \bibfield  {author} {\bibinfo {author} {\bibfnamefont {A.~P.}\ \bibnamefont
  {Schnyder}}, \bibinfo {author} {\bibfnamefont {S.}~\bibnamefont {Ryu}},
  \bibinfo {author} {\bibfnamefont {A.}~\bibnamefont {Furusaki}},\ and\
  \bibinfo {author} {\bibfnamefont {A.~W.~W.}\ \bibnamefont {Ludwig}},\
  }\bibfield  {title} {\bibinfo {title} {Classification of {Topological}
  {Insulators} and {Superconductors}},\ }\href
  {https://doi.org/10.1063/1.3149481} {\bibfield  {journal} {\bibinfo
  {journal} {AIP Conf. Proc.}\ }\textbf {\bibinfo {volume} {1134}},\ \bibinfo
  {pages} {10} (\bibinfo {year} {2009})}\BibitemShut {NoStop}%
\bibitem [{\citenamefont {Kitaev}(2009)}]{Kitaev_2009}%
  \BibitemOpen
  \bibfield  {author} {\bibinfo {author} {\bibfnamefont {A.}~\bibnamefont
  {Kitaev}},\ }\bibfield  {title} {\bibinfo {title} {Periodic table for
  topological insulators and superconductors},\ }\href
  {https://doi.org/10.1063/1.3149495} {\bibfield  {journal} {\bibinfo
  {journal} {AIP Conf. Proc.}\ }\textbf {\bibinfo {volume} {1134}},\ \bibinfo
  {pages} {22} (\bibinfo {year} {2009})}\BibitemShut {NoStop}%
\bibitem [{\citenamefont {Ryu}\ \emph {et~al.}(2010)\citenamefont {Ryu},
  \citenamefont {Schnyder}, \citenamefont {Furusaki},\ and\ \citenamefont
  {Ludwig}}]{Ryu_2010}%
  \BibitemOpen
  \bibfield  {author} {\bibinfo {author} {\bibfnamefont {S.}~\bibnamefont
  {Ryu}}, \bibinfo {author} {\bibfnamefont {A.~P.}\ \bibnamefont {Schnyder}},
  \bibinfo {author} {\bibfnamefont {A.}~\bibnamefont {Furusaki}},\ and\
  \bibinfo {author} {\bibfnamefont {A.~W.~W.}\ \bibnamefont {Ludwig}},\
  }\bibfield  {title} {\bibinfo {title} {Topological insulators and
  superconductors: tenfold way and dimensional hierarchy},\ }\href
  {https://doi.org/10.1088/1367-2630/12/6/065010} {\bibfield  {journal}
  {\bibinfo  {journal} {New J. Phys.}\ }\textbf {\bibinfo {volume} {12}},\
  \bibinfo {pages} {065010} (\bibinfo {year} {2010})}\BibitemShut {NoStop}%
\bibitem [{\citenamefont {Ludwig}(2015)}]{Ludwig_2015}%
  \BibitemOpen
  \bibfield  {author} {\bibinfo {author} {\bibfnamefont {A.~W.~W.}\
  \bibnamefont {Ludwig}},\ }\bibfield  {title} {\bibinfo {title} {Topological
  phases: classification of topological insulators and superconductors of
  non-interacting fermions, and beyond},\ }\href
  {https://doi.org/10.1088/0031-8949/2015/t168/014001} {\bibfield  {journal}
  {\bibinfo  {journal} {Physica Scripta}\ }\textbf {\bibinfo {volume} {T168}},\
  \bibinfo {pages} {014001} (\bibinfo {year} {2015})}\BibitemShut {NoStop}%
\bibitem [{\citenamefont {Fruchart}\ and\ \citenamefont
  {Carpentier}(2013)}]{Fruchart_2013}%
  \BibitemOpen
  \bibfield  {author} {\bibinfo {author} {\bibfnamefont {M.}~\bibnamefont
  {Fruchart}}\ and\ \bibinfo {author} {\bibfnamefont {D.}~\bibnamefont
  {Carpentier}},\ }\bibfield  {title} {\bibinfo {title} {An introduction to
  topological insulators},\ }\href {https://doi.org/10.1016/j.crhy.2013.09.013}
  {\bibfield  {journal} {\bibinfo  {journal} {Comptes Rendus Phys.}\ }\textbf
  {\bibinfo {volume} {14}},\ \bibinfo {pages} {779} (\bibinfo {year}
  {2013})}\BibitemShut {NoStop}%
\bibitem [{\citenamefont {Chiu}\ \emph {et~al.}(2016)\citenamefont {Chiu},
  \citenamefont {Teo}, \citenamefont {Schnyder},\ and\ \citenamefont
  {Ryu}}]{RevModPhys.88.035005}%
  \BibitemOpen
  \bibfield  {author} {\bibinfo {author} {\bibfnamefont {C.}~\bibnamefont
  {Chiu}}, \bibinfo {author} {\bibfnamefont {J.~C.~Y.}\ \bibnamefont {Teo}},
  \bibinfo {author} {\bibfnamefont {A.~P.}\ \bibnamefont {Schnyder}},\ and\
  \bibinfo {author} {\bibfnamefont {S.}~\bibnamefont {Ryu}},\ }\bibfield
  {title} {\bibinfo {title} {Classification of topological quantum matter with
  symmetries},\ }\href {https://doi.org/10.1103/RevModPhys.88.035005}
  {\bibfield  {journal} {\bibinfo  {journal} {Rev. Mod. Phys.}\ }\textbf
  {\bibinfo {volume} {88}},\ \bibinfo {pages} {035005} (\bibinfo {year}
  {2016})}\BibitemShut {NoStop}%
\bibitem [{\citenamefont {{von Klitzing}}\ \emph {et~al.}(1980)\citenamefont
  {{von Klitzing}}, \citenamefont {Dorda},\ and\ \citenamefont
  {Pepper}}]{PhysRevLett.45.494}%
  \BibitemOpen
  \bibfield  {author} {\bibinfo {author} {\bibfnamefont {K.}~\bibnamefont {{von
  Klitzing}}}, \bibinfo {author} {\bibfnamefont {G.}~\bibnamefont {Dorda}},\
  and\ \bibinfo {author} {\bibfnamefont {M.}~\bibnamefont {Pepper}},\
  }\bibfield  {title} {\bibinfo {title} {New {Method} for {High}-{Accuracy}
  {Determination} of the {Fine}-{Structure} {Constant} {Based} on {Quantized}
  {Hall} {Resistance}},\ }\href {https://doi.org/10.1103/PhysRevLett.45.494}
  {\bibfield  {journal} {\bibinfo  {journal} {Phys. Rev. Lett.}\ }\textbf
  {\bibinfo {volume} {45}},\ \bibinfo {pages} {494} (\bibinfo {year}
  {1980})}\BibitemShut {NoStop}%
\bibitem [{\citenamefont {{von Klitzing}}(1986)}]{RevModPhys.58.519}%
  \BibitemOpen
  \bibfield  {author} {\bibinfo {author} {\bibfnamefont {K.}~\bibnamefont {{von
  Klitzing}}},\ }\bibfield  {title} {\bibinfo {title} {The quantized {Hall}
  effect},\ }\href {https://doi.org/10.1103/RevModPhys.58.519} {\bibfield
  {journal} {\bibinfo  {journal} {Rev. Mod. Phys.}\ }\textbf {\bibinfo {volume}
  {58}},\ \bibinfo {pages} {519} (\bibinfo {year} {1986})}\BibitemShut
  {NoStop}%
\bibitem [{\citenamefont {Laughlin}(1981)}]{PhysRevB.23.5632}%
  \BibitemOpen
  \bibfield  {author} {\bibinfo {author} {\bibfnamefont {R.~B.}\ \bibnamefont
  {Laughlin}},\ }\bibfield  {title} {\bibinfo {title} {Quantized {Hall}
  conductivity in two dimensions},\ }\href
  {https://doi.org/10.1103/PhysRevB.23.5632} {\bibfield  {journal} {\bibinfo
  {journal} {Phys. Rev. B}\ }\textbf {\bibinfo {volume} {23}},\ \bibinfo
  {pages} {5632} (\bibinfo {year} {1981})}\BibitemShut {NoStop}%
\bibitem [{\citenamefont {Essin}\ and\ \citenamefont
  {Gurarie}(2011)}]{PhysRevB.84.125132}%
  \BibitemOpen
  \bibfield  {author} {\bibinfo {author} {\bibfnamefont {A.~M.}\ \bibnamefont
  {Essin}}\ and\ \bibinfo {author} {\bibfnamefont {V.}~\bibnamefont
  {Gurarie}},\ }\bibfield  {title} {\bibinfo {title} {Bulk-boundary
  correspondence of topological insulators from their respective {Green}'s
  functions},\ }\href {https://doi.org/10.1103/PhysRevB.84.125132} {\bibfield
  {journal} {\bibinfo  {journal} {Phys. Rev. B}\ }\textbf {\bibinfo {volume}
  {84}},\ \bibinfo {pages} {125132} (\bibinfo {year} {2011})}\BibitemShut
  {NoStop}%
\bibitem [{\citenamefont {Ho\v{r}ava}(2005)}]{PhysRevLett.95.016405}%
  \BibitemOpen
  \bibfield  {author} {\bibinfo {author} {\bibfnamefont {P.}~\bibnamefont
  {Ho\v{r}ava}},\ }\bibfield  {title} {\bibinfo {title} {Stability of {Fermi}
  {Surfaces} and \textit{{K}} {Theory}},\ }\href
  {https://doi.org/10.1103/PhysRevLett.95.016405} {\bibfield  {journal}
  {\bibinfo  {journal} {Phys. Rev. Lett.}\ }\textbf {\bibinfo {volume} {95}},\
  \bibinfo {pages} {016405} (\bibinfo {year} {2005})}\BibitemShut {NoStop}%
\bibitem [{\citenamefont {Zhao}\ and\ \citenamefont
  {Wang}(2013)}]{PhysRevLett.110.240404}%
  \BibitemOpen
  \bibfield  {author} {\bibinfo {author} {\bibfnamefont {Y.~X.}\ \bibnamefont
  {Zhao}}\ and\ \bibinfo {author} {\bibfnamefont {Z.~D.}\ \bibnamefont
  {Wang}},\ }\bibfield  {title} {\bibinfo {title} {Topological {Classification}
  and {Stability} of {Fermi} {Surfaces}},\ }\href
  {https://doi.org/10.1103/PhysRevLett.110.240404} {\bibfield  {journal}
  {\bibinfo  {journal} {Phys. Rev. Lett.}\ }\textbf {\bibinfo {volume} {110}},\
  \bibinfo {pages} {240404} (\bibinfo {year} {2013})}\BibitemShut {NoStop}%
\bibitem [{\citenamefont {Matsuura}\ \emph {et~al.}(2013)\citenamefont
  {Matsuura}, \citenamefont {Chang}, \citenamefont {Schnyder},\ and\
  \citenamefont {Ryu}}]{Matsuura_2013}%
  \BibitemOpen
  \bibfield  {author} {\bibinfo {author} {\bibfnamefont {S.}~\bibnamefont
  {Matsuura}}, \bibinfo {author} {\bibfnamefont {P.-Y.}\ \bibnamefont {Chang}},
  \bibinfo {author} {\bibfnamefont {A.~P.}\ \bibnamefont {Schnyder}},\ and\
  \bibinfo {author} {\bibfnamefont {S.}~\bibnamefont {Ryu}},\ }\bibfield
  {title} {\bibinfo {title} {Protected boundary states in gapless topological
  phases},\ }\href {https://doi.org/10.1088/1367-2630/15/6/065001} {\bibfield
  {journal} {\bibinfo  {journal} {New J. Phys.}\ }\textbf {\bibinfo {volume}
  {15}},\ \bibinfo {pages} {065001} (\bibinfo {year} {2013})}\BibitemShut
  {NoStop}%
\bibitem [{\citenamefont {Kobayashi}\ \emph {et~al.}(2014)\citenamefont
  {Kobayashi}, \citenamefont {Shiozaki}, \citenamefont {Tanaka},\ and\
  \citenamefont {Sato}}]{PhysRevB.90.024516}%
  \BibitemOpen
  \bibfield  {author} {\bibinfo {author} {\bibfnamefont {S.}~\bibnamefont
  {Kobayashi}}, \bibinfo {author} {\bibfnamefont {K.}~\bibnamefont {Shiozaki}},
  \bibinfo {author} {\bibfnamefont {Y.}~\bibnamefont {Tanaka}},\ and\ \bibinfo
  {author} {\bibfnamefont {M.}~\bibnamefont {Sato}},\ }\bibfield  {title}
  {\bibinfo {title} {Topological {Blount}'s theorem of odd-parity
  superconductors},\ }\href {https://doi.org/10.1103/PhysRevB.90.024516}
  {\bibfield  {journal} {\bibinfo  {journal} {Phys. Rev. B}\ }\textbf {\bibinfo
  {volume} {90}},\ \bibinfo {pages} {024516} (\bibinfo {year}
  {2014})}\BibitemShut {NoStop}%
\bibitem [{\citenamefont {Qi}\ \emph {et~al.}(2008)\citenamefont {Qi},
  \citenamefont {Hughes},\ and\ \citenamefont {Zhang}}]{PhysRevB.78.195424}%
  \BibitemOpen
  \bibfield  {author} {\bibinfo {author} {\bibfnamefont {X.}~\bibnamefont
  {Qi}}, \bibinfo {author} {\bibfnamefont {T.~L.}\ \bibnamefont {Hughes}},\
  and\ \bibinfo {author} {\bibfnamefont {S.}~\bibnamefont {Zhang}},\ }\bibfield
   {title} {\bibinfo {title} {Topological field theory of time-reversal
  invariant insulators},\ }\href {https://doi.org/10.1103/PhysRevB.78.195424}
  {\bibfield  {journal} {\bibinfo  {journal} {Phys. Rev. B}\ }\textbf {\bibinfo
  {volume} {78}},\ \bibinfo {pages} {195424} (\bibinfo {year}
  {2008})}\BibitemShut {NoStop}%
\bibitem [{\citenamefont {Fidkowski}\ and\ \citenamefont
  {Kitaev}(2011)}]{PhysRevB.83.075103}%
  \BibitemOpen
  \bibfield  {author} {\bibinfo {author} {\bibfnamefont {L.}~\bibnamefont
  {Fidkowski}}\ and\ \bibinfo {author} {\bibfnamefont {A.}~\bibnamefont
  {Kitaev}},\ }\bibfield  {title} {\bibinfo {title} {Topological phases of
  fermions in one dimension},\ }\href
  {https://doi.org/10.1103/PhysRevB.83.075103} {\bibfield  {journal} {\bibinfo
  {journal} {Phys. Rev. B}\ }\textbf {\bibinfo {volume} {83}},\ \bibinfo
  {pages} {075103} (\bibinfo {year} {2011})}\BibitemShut {NoStop}%
\bibitem [{\citenamefont {Turner}\ \emph {et~al.}(2011)\citenamefont {Turner},
  \citenamefont {Pollmann},\ and\ \citenamefont {Berg}}]{1DEntange}%
  \BibitemOpen
  \bibfield  {author} {\bibinfo {author} {\bibfnamefont {A.~M.}\ \bibnamefont
  {Turner}}, \bibinfo {author} {\bibfnamefont {F.}~\bibnamefont {Pollmann}},\
  and\ \bibinfo {author} {\bibfnamefont {E.}~\bibnamefont {Berg}},\ }\bibfield
  {title} {\bibinfo {title} {Topological phases of one-dimensional fermions:
  {An} entanglement point of view},\ }\href
  {https://doi.org/10.1103/PhysRevB.83.075102} {\bibfield  {journal} {\bibinfo
  {journal} {Phys. Rev. B}\ }\textbf {\bibinfo {volume} {83}},\ \bibinfo
  {pages} {075102} (\bibinfo {year} {2011})}\BibitemShut {NoStop}%
\bibitem [{\citenamefont {Fidkowski}\ and\ \citenamefont
  {Kitaev}(2010)}]{PhysRevB.81.134509}%
  \BibitemOpen
  \bibfield  {author} {\bibinfo {author} {\bibfnamefont {L.}~\bibnamefont
  {Fidkowski}}\ and\ \bibinfo {author} {\bibfnamefont {A.}~\bibnamefont
  {Kitaev}},\ }\bibfield  {title} {\bibinfo {title} {Effects of interactions on
  the topological classification of free fermion systems},\ }\href
  {https://doi.org/10.1103/PhysRevB.81.134509} {\bibfield  {journal} {\bibinfo
  {journal} {Phys. Rev. B}\ }\textbf {\bibinfo {volume} {81}},\ \bibinfo
  {pages} {134509} (\bibinfo {year} {2010})}\BibitemShut {NoStop}%
\bibitem [{\citenamefont {Mahan}(2000)}]{Mahan}%
  \BibitemOpen
  \bibfield  {author} {\bibinfo {author} {\bibfnamefont {G.~D.}\ \bibnamefont
  {Mahan}},\ }\href {https://doi.org/10.1007/978-1-4757-5714-9} {\emph
  {\bibinfo {title} {Many-Particle Physics}}},\ \bibinfo {edition} {3rd}\ ed.\
  (\bibinfo  {publisher} {Springer},\ \bibinfo {year} {2000})\BibitemShut
  {NoStop}%
\bibitem [{\citenamefont {Altland}\ and\ \citenamefont
  {Simons}(2010)}]{altland_simons_2010}%
  \BibitemOpen
  \bibfield  {author} {\bibinfo {author} {\bibfnamefont {A.}~\bibnamefont
  {Altland}}\ and\ \bibinfo {author} {\bibfnamefont {B.~D.}\ \bibnamefont
  {Simons}},\ }\href {https://doi.org/10.1017/CBO9780511789984} {\emph
  {\bibinfo {title} {Condensed Matter Field Theory}}},\ \bibinfo {edition}
  {2nd}\ ed.\ (\bibinfo  {publisher} {Cambridge University Press},\ \bibinfo
  {year} {2010})\BibitemShut {NoStop}%
\bibitem [{\citenamefont {Wang}\ and\ \citenamefont
  {Zhang}(2012)}]{PhysRevX.2.031008}%
  \BibitemOpen
  \bibfield  {author} {\bibinfo {author} {\bibfnamefont {Z.}~\bibnamefont
  {Wang}}\ and\ \bibinfo {author} {\bibfnamefont {S.-C.}\ \bibnamefont
  {Zhang}},\ }\bibfield  {title} {\bibinfo {title} {Simplified {Topological}
  {Invariants} for {Interacting} {Insulators}},\ }\href
  {https://doi.org/10.1103/PhysRevX.2.031008} {\bibfield  {journal} {\bibinfo
  {journal} {Phys. Rev. X}\ }\textbf {\bibinfo {volume} {2}},\ \bibinfo {pages}
  {031008} (\bibinfo {year} {2012})}\BibitemShut {NoStop}%
\bibitem [{\citenamefont {Budich}\ \emph {et~al.}(2013)\citenamefont {Budich},
  \citenamefont {Trauzettel},\ and\ \citenamefont {Sangiovanni}}]{BTS13}%
  \BibitemOpen
  \bibfield  {author} {\bibinfo {author} {\bibfnamefont {J.~C.}\ \bibnamefont
  {Budich}}, \bibinfo {author} {\bibfnamefont {B.}~\bibnamefont {Trauzettel}},\
  and\ \bibinfo {author} {\bibfnamefont {G.}~\bibnamefont {Sangiovanni}},\
  }\bibfield  {title} {\bibinfo {title} {Fluctuation-driven topological {Hund}
  insulators},\ }\href {https://doi.org/10.1103/PhysRevB.87.235104} {\bibfield
  {journal} {\bibinfo  {journal} {Phys. Rev. B}\ }\textbf {\bibinfo {volume}
  {87}},\ \bibinfo {pages} {235104} (\bibinfo {year} {2013})}\BibitemShut
  {NoStop}%
\bibitem [{\citenamefont {Yoshida}\ \emph {et~al.}(2018)\citenamefont
  {Yoshida}, \citenamefont {Peters},\ and\ \citenamefont
  {Kawakami}}]{PhysRevB.98.035141}%
  \BibitemOpen
  \bibfield  {author} {\bibinfo {author} {\bibfnamefont {T.}~\bibnamefont
  {Yoshida}}, \bibinfo {author} {\bibfnamefont {R.}~\bibnamefont {Peters}},\
  and\ \bibinfo {author} {\bibfnamefont {N.}~\bibnamefont {Kawakami}},\
  }\bibfield  {title} {\bibinfo {title} {Non-{Hermitian} perspective of the
  band structure in heavy-fermion systems},\ }\href
  {https://doi.org/10.1103/PhysRevB.98.035141} {\bibfield  {journal} {\bibinfo
  {journal} {Phys. Rev. B}\ }\textbf {\bibinfo {volume} {98}},\ \bibinfo
  {pages} {035141} (\bibinfo {year} {2018})}\BibitemShut {NoStop}%
\bibitem [{\citenamefont {Kawabata}\ \emph
  {et~al.}(2019{\natexlab{a}})\citenamefont {Kawabata}, \citenamefont
  {Higashikawa}, \citenamefont {Gong}, \citenamefont {Ashida},\ and\
  \citenamefont {Ueda}}]{TRSPHS}%
  \BibitemOpen
  \bibfield  {author} {\bibinfo {author} {\bibfnamefont {K.}~\bibnamefont
  {Kawabata}}, \bibinfo {author} {\bibfnamefont {S.}~\bibnamefont
  {Higashikawa}}, \bibinfo {author} {\bibfnamefont {Z.}~\bibnamefont {Gong}},
  \bibinfo {author} {\bibfnamefont {Y.}~\bibnamefont {Ashida}},\ and\ \bibinfo
  {author} {\bibfnamefont {M.}~\bibnamefont {Ueda}},\ }\bibfield  {title}
  {\bibinfo {title} {Topological {Unification} of {Time-Reversal} and
  {Particle-Hole} {Symmetries} in {Non-Hermitian} {Physics}},\ }\href
  {https://doi.org/10.1038/s41467-018-08254-y} {\bibfield  {journal} {\bibinfo
  {journal} {Nat. Commun.}\ }\textbf {\bibinfo {volume} {10}} (\bibinfo {year}
  {2019}{\natexlab{a}})}\BibitemShut {NoStop}%
\bibitem [{\citenamefont {Bergholtz}\ and\ \citenamefont
  {Budich}(2019)}]{PhysRevResearch.1.012003}%
  \BibitemOpen
  \bibfield  {author} {\bibinfo {author} {\bibfnamefont {E.~J.}\ \bibnamefont
  {Bergholtz}}\ and\ \bibinfo {author} {\bibfnamefont {J.~C.}\ \bibnamefont
  {Budich}},\ }\bibfield  {title} {\bibinfo {title} {Non-{Hermitian} {Weyl}
  physics in topological insulator ferromagnet junctions},\ }\href
  {https://doi.org/10.1103/PhysRevResearch.1.012003} {\bibfield  {journal}
  {\bibinfo  {journal} {Phys. Rev. Res.}\ }\textbf {\bibinfo {volume} {1}},\
  \bibinfo {pages} {012003} (\bibinfo {year} {2019})}\BibitemShut {NoStop}%
\bibitem [{\citenamefont {Yoshida}\ \emph {et~al.}(2020)\citenamefont
  {Yoshida}, \citenamefont {Peters}, \citenamefont {Kawakami},\ and\
  \citenamefont {Hatsugai}}]{Yoshida_2020}%
  \BibitemOpen
  \bibfield  {author} {\bibinfo {author} {\bibfnamefont {T.}~\bibnamefont
  {Yoshida}}, \bibinfo {author} {\bibfnamefont {R.}~\bibnamefont {Peters}},
  \bibinfo {author} {\bibfnamefont {N.}~\bibnamefont {Kawakami}},\ and\
  \bibinfo {author} {\bibfnamefont {Y.}~\bibnamefont {Hatsugai}},\ }\bibfield
  {title} {\bibinfo {title} {Exceptional band touching for strongly correlated
  systems in equilibrium},\ }\href {https://doi.org/10.1093/ptep/ptaa059}
  {\bibfield  {journal} {\bibinfo  {journal} {Prog. Theor. Exp. Phys.}\
  }\textbf {\bibinfo {volume} {2020}},\ \bibinfo {pages} {12A109} (\bibinfo
  {year} {2020})}\BibitemShut {NoStop}%
\bibitem [{\citenamefont {Lessnich}\ \emph {et~al.}(2021)\citenamefont
  {Lessnich}, \citenamefont {Winter}, \citenamefont {Iraola}, \citenamefont
  {Vergniory},\ and\ \citenamefont {Valent\'{\i}}}]{Lessnich_2021}%
  \BibitemOpen
  \bibfield  {author} {\bibinfo {author} {\bibfnamefont {D.}~\bibnamefont
  {Lessnich}}, \bibinfo {author} {\bibfnamefont {S.~M.}\ \bibnamefont
  {Winter}}, \bibinfo {author} {\bibfnamefont {M.}~\bibnamefont {Iraola}},
  \bibinfo {author} {\bibfnamefont {M.~G.}\ \bibnamefont {Vergniory}},\ and\
  \bibinfo {author} {\bibfnamefont {R.}~\bibnamefont {Valent\'{\i}}},\
  }\bibfield  {title} {\bibinfo {title} {Elementary band representations for
  the single-particle {Green}'s function of interacting topological
  insulators},\ }\href {https://doi.org/10.1103/PhysRevB.104.085116} {\bibfield
   {journal} {\bibinfo  {journal} {Phys. Rev. B}\ }\textbf {\bibinfo {volume}
  {104}},\ \bibinfo {pages} {085116} (\bibinfo {year} {2021})}\BibitemShut
  {NoStop}%
\bibitem [{\citenamefont {Bergholtz}\ \emph {et~al.}(2021)\citenamefont
  {Bergholtz}, \citenamefont {Budich},\ and\ \citenamefont
  {Kunst}}]{RevModPhys.93.015005}%
  \BibitemOpen
  \bibfield  {author} {\bibinfo {author} {\bibfnamefont {E.~J.}\ \bibnamefont
  {Bergholtz}}, \bibinfo {author} {\bibfnamefont {J.~C.}\ \bibnamefont
  {Budich}},\ and\ \bibinfo {author} {\bibfnamefont {F.~K.}\ \bibnamefont
  {Kunst}},\ }\bibfield  {title} {\bibinfo {title} {Exceptional topology of
  non-{Hermitian} systems},\ }\href
  {https://doi.org/10.1103/RevModPhys.93.015005} {\bibfield  {journal}
  {\bibinfo  {journal} {Rev. Mod. Phys.}\ }\textbf {\bibinfo {volume} {93}},\
  \bibinfo {pages} {015005} (\bibinfo {year} {2021})}\BibitemShut {NoStop}%
\bibitem [{\citenamefont {Hatano}\ and\ \citenamefont
  {Nelson}(1996)}]{Hatano_1996}%
  \BibitemOpen
  \bibfield  {author} {\bibinfo {author} {\bibfnamefont {N.}~\bibnamefont
  {Hatano}}\ and\ \bibinfo {author} {\bibfnamefont {D.~R.}\ \bibnamefont
  {Nelson}},\ }\bibfield  {title} {\bibinfo {title} {Localization transitions
  in non-hermitian quantum mechanics},\ }\href
  {https://doi.org/10.1103/PhysRevLett.77.570} {\bibfield  {journal} {\bibinfo
  {journal} {Phys. Rev. Lett.}\ }\textbf {\bibinfo {volume} {77}},\ \bibinfo
  {pages} {570} (\bibinfo {year} {1996})}\BibitemShut {NoStop}%
\bibitem [{\citenamefont {Bender}(2007)}]{Bender_2007}%
  \BibitemOpen
  \bibfield  {author} {\bibinfo {author} {\bibfnamefont {C.~M.}\ \bibnamefont
  {Bender}},\ }\bibfield  {title} {\bibinfo {title} {Making sense of
  non-{Hermitian} {Hamiltonians}},\ }\href
  {https://doi.org/10.1088/0034-4885/70/6/R03} {\bibfield  {journal} {\bibinfo
  {journal} {Rep. Prog. Phys.}\ }\textbf {\bibinfo {volume} {70}},\ \bibinfo
  {pages} {947} (\bibinfo {year} {2007})}\BibitemShut {NoStop}%
\bibitem [{\citenamefont {Rotter}(2009)}]{Rotter_2009}%
  \BibitemOpen
  \bibfield  {author} {\bibinfo {author} {\bibfnamefont {I.}~\bibnamefont
  {Rotter}},\ }\bibfield  {title} {\bibinfo {title} {A non-{Hermitian}
  {Hamilton} operator and the physics of open quantum systems},\ }\href
  {https://doi.org/10.1088/1751-8113/42/15/153001} {\bibfield  {journal}
  {\bibinfo  {journal} {J. Phys. A: Math. Theor.}\ }\textbf {\bibinfo {volume}
  {42}},\ \bibinfo {pages} {153001} (\bibinfo {year} {2009})}\BibitemShut
  {NoStop}%
\bibitem [{\citenamefont {Longhi}\ \emph {et~al.}(2015)\citenamefont {Longhi},
  \citenamefont {Gatti},\ and\ \citenamefont {{Della Valle}}}]{Longhi_2015}%
  \BibitemOpen
  \bibfield  {author} {\bibinfo {author} {\bibfnamefont {S.}~\bibnamefont
  {Longhi}}, \bibinfo {author} {\bibfnamefont {D.}~\bibnamefont {Gatti}},\ and\
  \bibinfo {author} {\bibfnamefont {G.}~\bibnamefont {{Della Valle}}},\
  }\bibfield  {title} {\bibinfo {title} {Robust light transport in
  non-{Hermitian} photonic lattices},\ }\href@noop {} {\bibfield  {journal}
  {\bibinfo  {journal} {Sci. Rep.}\ }\textbf {\bibinfo {volume} {5}},\ \bibinfo
  {pages} {13376} (\bibinfo {year} {2015})}\BibitemShut {NoStop}%
\bibitem [{\citenamefont {Gong}\ \emph {et~al.}(2018)\citenamefont {Gong},
  \citenamefont {Ashida}, \citenamefont {Kawabata}, \citenamefont {Takasan},
  \citenamefont {Higashikawa},\ and\ \citenamefont {Ueda}}]{PhysRevX.8.031079}%
  \BibitemOpen
  \bibfield  {author} {\bibinfo {author} {\bibfnamefont {Z.}~\bibnamefont
  {Gong}}, \bibinfo {author} {\bibfnamefont {Y.}~\bibnamefont {Ashida}},
  \bibinfo {author} {\bibfnamefont {K.}~\bibnamefont {Kawabata}}, \bibinfo
  {author} {\bibfnamefont {K.}~\bibnamefont {Takasan}}, \bibinfo {author}
  {\bibfnamefont {S.}~\bibnamefont {Higashikawa}},\ and\ \bibinfo {author}
  {\bibfnamefont {M.}~\bibnamefont {Ueda}},\ }\bibfield  {title} {\bibinfo
  {title} {Topological phases of non-hermitian systems},\ }\href
  {https://doi.org/10.1103/PhysRevX.8.031079} {\bibfield  {journal} {\bibinfo
  {journal} {Phys. Rev. X}\ }\textbf {\bibinfo {volume} {8}},\ \bibinfo {pages}
  {031079} (\bibinfo {year} {2018})}\BibitemShut {NoStop}%
\bibitem [{\citenamefont {McDonald}\ \emph {et~al.}(2018)\citenamefont
  {McDonald}, \citenamefont {Pereg-Barnea},\ and\ \citenamefont
  {Clerk}}]{McDonald_2018}%
  \BibitemOpen
  \bibfield  {author} {\bibinfo {author} {\bibfnamefont {A.}~\bibnamefont
  {McDonald}}, \bibinfo {author} {\bibfnamefont {T.}~\bibnamefont
  {Pereg-Barnea}},\ and\ \bibinfo {author} {\bibfnamefont {A.~A.}\ \bibnamefont
  {Clerk}},\ }\bibfield  {title} {\bibinfo {title} {Phase-{Dependent} {Chiral}
  {Transport} and {Effective} {Non}-{Hermitian} {Dynamics} in a {Bosonic}
  {Kitaev}-{Majorana} {Chain}},\ }\href
  {https://doi.org/10.1103/PhysRevX.8.041031} {\bibfield  {journal} {\bibinfo
  {journal} {Phys. Rev. X}\ }\textbf {\bibinfo {volume} {8}},\ \bibinfo {pages}
  {041031} (\bibinfo {year} {2018})}\BibitemShut {NoStop}%
\bibitem [{\citenamefont {Ramos}\ \emph {et~al.}(2021)\citenamefont {Ramos},
  \citenamefont {Garc\'{\i}a-Ripoll},\ and\ \citenamefont
  {Porras}}]{Ramos_2021}%
  \BibitemOpen
  \bibfield  {author} {\bibinfo {author} {\bibfnamefont {T.}~\bibnamefont
  {Ramos}}, \bibinfo {author} {\bibfnamefont {J.~J.}\ \bibnamefont
  {Garc\'{\i}a-Ripoll}},\ and\ \bibinfo {author} {\bibfnamefont
  {D.}~\bibnamefont {Porras}},\ }\bibfield  {title} {\bibinfo {title}
  {Topological input-output theory for directional amplification},\ }\href
  {https://doi.org/10.1103/PhysRevA.103.033513} {\bibfield  {journal} {\bibinfo
   {journal} {Phys. Rev. A}\ }\textbf {\bibinfo {volume} {103}},\ \bibinfo
  {pages} {033513} (\bibinfo {year} {2021})}\BibitemShut {NoStop}%
\bibitem [{\citenamefont {McDonald}\ \emph {et~al.}(2022)\citenamefont
  {McDonald}, \citenamefont {Hanai},\ and\ \citenamefont
  {Clerk}}]{McDonald_2022}%
  \BibitemOpen
  \bibfield  {author} {\bibinfo {author} {\bibfnamefont {A.}~\bibnamefont
  {McDonald}}, \bibinfo {author} {\bibfnamefont {R.}~\bibnamefont {Hanai}},\
  and\ \bibinfo {author} {\bibfnamefont {A.~A.}\ \bibnamefont {Clerk}},\
  }\bibfield  {title} {\bibinfo {title} {Nonequilibrium stationary states of
  quantum non-{Hermitian} lattice models},\ }\href
  {https://doi.org/10.1103/PhysRevB.105.064302} {\bibfield  {journal} {\bibinfo
   {journal} {Phys. Rev. B}\ }\textbf {\bibinfo {volume} {105}},\ \bibinfo
  {pages} {064302} (\bibinfo {year} {2022})}\BibitemShut {NoStop}%
\bibitem [{\citenamefont {Zhang}\ \emph {et~al.}(2022)\citenamefont {Zhang},
  \citenamefont {Denner}, \citenamefont {Bzdu\v{s}ek}, \citenamefont {Sentef},\
  and\ \citenamefont {Neupert}}]{PhysRevB.106.L121102}%
  \BibitemOpen
  \bibfield  {author} {\bibinfo {author} {\bibfnamefont {S.-B.}\ \bibnamefont
  {Zhang}}, \bibinfo {author} {\bibfnamefont {M.~M.}\ \bibnamefont {Denner}},
  \bibinfo {author} {\bibfnamefont {T.}~\bibnamefont {Bzdu\v{s}ek}}, \bibinfo
  {author} {\bibfnamefont {M.~A.}\ \bibnamefont {Sentef}},\ and\ \bibinfo
  {author} {\bibfnamefont {T.}~\bibnamefont {Neupert}},\ }\bibfield  {title}
  {\bibinfo {title} {Symmetry breaking and spectral structure of the
  interacting hatano-nelson model},\ }\href
  {https://doi.org/10.1103/PhysRevB.106.L121102} {\bibfield  {journal}
  {\bibinfo  {journal} {Phys. Rev. B}\ }\textbf {\bibinfo {volume} {106}},\
  \bibinfo {pages} {L121102} (\bibinfo {year} {2022})}\BibitemShut {NoStop}%
\bibitem [{\citenamefont {Roccati}\ \emph {et~al.}(2023)\citenamefont
  {Roccati}, \citenamefont {Bello}, \citenamefont {Gong}, \citenamefont {Ueda},
  \citenamefont {Ciccarello}, \citenamefont {Chenu},\ and\ \citenamefont
  {Carollo}}]{roccati2023hermitian}%
  \BibitemOpen
  \bibfield  {author} {\bibinfo {author} {\bibfnamefont {F.}~\bibnamefont
  {Roccati}}, \bibinfo {author} {\bibfnamefont {M.}~\bibnamefont {Bello}},
  \bibinfo {author} {\bibfnamefont {Z.}~\bibnamefont {Gong}}, \bibinfo {author}
  {\bibfnamefont {M.}~\bibnamefont {Ueda}}, \bibinfo {author} {\bibfnamefont
  {F.}~\bibnamefont {Ciccarello}}, \bibinfo {author} {\bibfnamefont
  {A.}~\bibnamefont {Chenu}},\ and\ \bibinfo {author} {\bibfnamefont
  {A.}~\bibnamefont {Carollo}},\ }\href@noop {} {\bibinfo {title} {Hermitian
  and {Non}-{Hermitian} {Topology} from {Photon}-{Mediated} {Interactions}}}
  (\bibinfo {year} {2023}),\ \Eprint {https://arxiv.org/abs/2303.00762}
  {arXiv:2303.00762} \BibitemShut {NoStop}%
\bibitem [{\citenamefont {Karoudi}(1978)}]{Karoudi}%
  \BibitemOpen
  \bibfield  {author} {\bibinfo {author} {\bibfnamefont {M.}~\bibnamefont
  {Karoudi}},\ }\href@noop {} {\emph {\bibinfo {title} {K-Theory}}}\ (\bibinfo
  {publisher} {Springer},\ \bibinfo {year} {1978})\BibitemShut {NoStop}%
\bibitem [{\citenamefont {Fu}\ and\ \citenamefont {Kane}(2007)}]{Fu_2007a}%
  \BibitemOpen
  \bibfield  {author} {\bibinfo {author} {\bibfnamefont {L.}~\bibnamefont
  {Fu}}\ and\ \bibinfo {author} {\bibfnamefont {C.~L.}\ \bibnamefont {Kane}},\
  }\bibfield  {title} {\bibinfo {title} {Topological insulators with inversion
  symmetry},\ }\href {https://doi.org/10.1103/PhysRevB.76.045302} {\bibfield
  {journal} {\bibinfo  {journal} {Phys. Rev. B}\ }\textbf {\bibinfo {volume}
  {76}},\ \bibinfo {pages} {045302} (\bibinfo {year} {2007})}\BibitemShut
  {NoStop}%
\bibitem [{\citenamefont {Budich}\ and\ \citenamefont
  {Trauzettel}(2013)}]{Budich_2013}%
  \BibitemOpen
  \bibfield  {author} {\bibinfo {author} {\bibfnamefont {J.~C.}\ \bibnamefont
  {Budich}}\ and\ \bibinfo {author} {\bibfnamefont {B.}~\bibnamefont
  {Trauzettel}},\ }\bibfield  {title} {\bibinfo {title} {From the adiabatic
  theorem of quantum mechanics to topological states of matter},\ }\href
  {https://doi.org/10.1002/pssr.201206416} {\bibfield  {journal} {\bibinfo
  {journal} {phys. status solidi RRL}\ }\textbf {\bibinfo {volume} {7}},\
  \bibinfo {pages} {109} (\bibinfo {year} {2013})}\BibitemShut {NoStop}%
\bibitem [{Note1()}]{Note1}%
  \BibitemOpen
  \bibinfo {note} {If this assumption is not satisfied one can disregard the
  frequency dependence, which leads to the theory in Ref.~\cite
  {TRSPHS}.}\BibitemShut {Stop}%
\bibitem [{\citenamefont {Floquet}(1883)}]{ASENS_1883_2_12__47_0}%
  \BibitemOpen
  \bibfield  {author} {\bibinfo {author} {\bibfnamefont {G.}~\bibnamefont
  {Floquet}},\ }\bibfield  {title} {\bibinfo {title} {Sur les \'equations
  diff\'erentielles lin\'eaires \`a coefficients p\'eriodiques},\ }\href
  {https://doi.org/10.24033/asens.220} {\bibfield  {journal} {\bibinfo
  {journal} {Annales scientifiques de l'\'Ecole Normale Sup\'erieure}\ }\textbf
  {\bibinfo {volume} {2e s{\'e}rie, 12}},\ \bibinfo {pages} {47} (\bibinfo
  {year} {1883})}\BibitemShut {NoStop}%
\bibitem [{\citenamefont {Grifoni}\ and\ \citenamefont
  {Hänggi}(1998)}]{GRIFONI1998229}%
  \BibitemOpen
  \bibfield  {author} {\bibinfo {author} {\bibfnamefont {M.}~\bibnamefont
  {Grifoni}}\ and\ \bibinfo {author} {\bibfnamefont {P.}~\bibnamefont
  {Hänggi}},\ }\bibfield  {title} {\bibinfo {title} {Driven quantum
  tunneling},\ }\href
  {https://doi.org/https://doi.org/10.1016/S0370-1573(98)00022-2} {\bibfield
  {journal} {\bibinfo  {journal} {Phys. Rep.}\ }\textbf {\bibinfo {volume}
  {304}},\ \bibinfo {pages} {229} (\bibinfo {year} {1998})}\BibitemShut
  {NoStop}%
\bibitem [{\citenamefont {Eckardt}(2017)}]{RevModPhys.89.011004}%
  \BibitemOpen
  \bibfield  {author} {\bibinfo {author} {\bibfnamefont {A.}~\bibnamefont
  {Eckardt}},\ }\bibfield  {title} {\bibinfo {title} {Colloquium: Atomic
  quantum gases in periodically driven optical lattices},\ }\href
  {https://doi.org/10.1103/RevModPhys.89.011004} {\bibfield  {journal}
  {\bibinfo  {journal} {Rev. Mod. Phys.}\ }\textbf {\bibinfo {volume} {89}},\
  \bibinfo {pages} {011004} (\bibinfo {year} {2017})}\BibitemShut {NoStop}%
\bibitem [{\citenamefont {Husemoller}(1966)}]{Husemoller}%
  \BibitemOpen
  \bibfield  {author} {\bibinfo {author} {\bibfnamefont {D.}~\bibnamefont
  {Husemoller}},\ }\href@noop {} {\emph {\bibinfo {title} {Fibre {Bundles}}}}\
  (\bibinfo  {publisher} {Springer},\ \bibinfo {year} {1966})\BibitemShut
  {NoStop}%
\bibitem [{\citenamefont {Hatcher}(2002)}]{Split}%
  \BibitemOpen
  \bibfield  {author} {\bibinfo {author} {\bibfnamefont {A.}~\bibnamefont
  {Hatcher}},\ }\href@noop {} {\emph {\bibinfo {title} {Algebraic topology}}}\
  (\bibinfo  {publisher} {Cambridge University Press},\ \bibinfo {year}
  {2002})\BibitemShut {NoStop}%
\bibitem [{\citenamefont {Nash}(1991)}]{DiffTopandQFT}%
  \BibitemOpen
  \bibfield  {author} {\bibinfo {author} {\bibfnamefont {C.}~\bibnamefont
  {Nash}},\ }\href@noop {} {\emph {\bibinfo {title} {Differential topology and
  quantum field theory}}}\ (\bibinfo  {publisher} {Academic},\ \bibinfo {year}
  {1991})\BibitemShut {NoStop}%
\bibitem [{\citenamefont {Shiozaki}\ and\ \citenamefont
  {Sato}(2014)}]{PhysRevB.90.165114}%
  \BibitemOpen
  \bibfield  {author} {\bibinfo {author} {\bibfnamefont {K.}~\bibnamefont
  {Shiozaki}}\ and\ \bibinfo {author} {\bibfnamefont {M.}~\bibnamefont
  {Sato}},\ }\bibfield  {title} {\bibinfo {title} {Topology of crystalline
  insulators and superconductors},\ }\href
  {https://doi.org/10.1103/PhysRevB.90.165114} {\bibfield  {journal} {\bibinfo
  {journal} {Phys. Rev. B}\ }\textbf {\bibinfo {volume} {90}},\ \bibinfo
  {pages} {165114} (\bibinfo {year} {2014})}\BibitemShut {NoStop}%
\bibitem [{\citenamefont {Kawabata}\ \emph
  {et~al.}(2019{\natexlab{b}})\citenamefont {Kawabata}, \citenamefont
  {Shiozaki}, \citenamefont {Ueda},\ and\ \citenamefont
  {Sato}}]{PhysRevX.9.041015}%
  \BibitemOpen
  \bibfield  {author} {\bibinfo {author} {\bibfnamefont {K.}~\bibnamefont
  {Kawabata}}, \bibinfo {author} {\bibfnamefont {K.}~\bibnamefont {Shiozaki}},
  \bibinfo {author} {\bibfnamefont {M.}~\bibnamefont {Ueda}},\ and\ \bibinfo
  {author} {\bibfnamefont {M.}~\bibnamefont {Sato}},\ }\bibfield  {title}
  {\bibinfo {title} {Symmetry and {Topology} in {Non-Hermitian} {Physics}},\
  }\href {https://doi.org/10.1103/PhysRevX.9.041015} {\bibfield  {journal}
  {\bibinfo  {journal} {Phys. Rev. X}\ }\textbf {\bibinfo {volume} {9}},\
  \bibinfo {pages} {041015} (\bibinfo {year} {2019}{\natexlab{b}})}\BibitemShut
  {NoStop}%
\bibitem [{\citenamefont {Gurarie}(2011)}]{Gurarie_2011}%
  \BibitemOpen
  \bibfield  {author} {\bibinfo {author} {\bibfnamefont {V.}~\bibnamefont
  {Gurarie}},\ }\bibfield  {title} {\bibinfo {title} {Single-particle {Green}'s
  functions and interacting topological insulators},\ }\href
  {https://doi.org/10.1103/PhysRevB.83.085426} {\bibfield  {journal} {\bibinfo
  {journal} {Phys. Rev. B}\ }\textbf {\bibinfo {volume} {83}},\ \bibinfo
  {pages} {085426} (\bibinfo {year} {2011})}\BibitemShut {NoStop}%
\bibitem [{\citenamefont {Zhou}\ and\ \citenamefont
  {Lee}(2019)}]{PhysRevB.99.235112}%
  \BibitemOpen
  \bibfield  {author} {\bibinfo {author} {\bibfnamefont {H.}~\bibnamefont
  {Zhou}}\ and\ \bibinfo {author} {\bibfnamefont {J.~Y.}\ \bibnamefont {Lee}},\
  }\bibfield  {title} {\bibinfo {title} {Periodic table for topological bands
  with non-hermitian symmetries},\ }\href
  {https://doi.org/10.1103/PhysRevB.99.235112} {\bibfield  {journal} {\bibinfo
  {journal} {Phys. Rev. B}\ }\textbf {\bibinfo {volume} {99}},\ \bibinfo
  {pages} {235112} (\bibinfo {year} {2019})}\BibitemShut {NoStop}%
\bibitem [{\citenamefont {Stone}\ \emph {et~al.}(2010)\citenamefont {Stone},
  \citenamefont {Chiu},\ and\ \citenamefont {Roy}}]{Stone_2010}%
  \BibitemOpen
  \bibfield  {author} {\bibinfo {author} {\bibfnamefont {M.}~\bibnamefont
  {Stone}}, \bibinfo {author} {\bibfnamefont {C.}~\bibnamefont {Chiu}},\ and\
  \bibinfo {author} {\bibfnamefont {A.}~\bibnamefont {Roy}},\ }\bibfield
  {title} {\bibinfo {title} {Symmetries, dimensions and topological insulators:
  the mechanism behind the face of the {Bott} clock},\ }\href
  {https://doi.org/10.1088/1751-8113/44/4/045001} {\bibfield  {journal}
  {\bibinfo  {journal} {J. Phys. A: Math. Theor.}\ }\textbf {\bibinfo {volume}
  {44}},\ \bibinfo {pages} {045001} (\bibinfo {year} {2010})}\BibitemShut
  {NoStop}%
\bibitem [{\citenamefont {Teo}\ and\ \citenamefont {Kane}(2010)}]{Teo_2010}%
  \BibitemOpen
  \bibfield  {author} {\bibinfo {author} {\bibfnamefont {J.~C.~Y.}\
  \bibnamefont {Teo}}\ and\ \bibinfo {author} {\bibfnamefont {C.~L.}\
  \bibnamefont {Kane}},\ }\bibfield  {title} {\bibinfo {title} {Topological
  defects and gapless modes in insulators and superconductors},\ }\href
  {https://doi.org/10.1103/PhysRevB.82.115120} {\bibfield  {journal} {\bibinfo
  {journal} {Phys. Rev. B}\ }\textbf {\bibinfo {volume} {82}},\ \bibinfo
  {pages} {115120} (\bibinfo {year} {2010})}\BibitemShut {NoStop}%
\bibitem [{\citenamefont {Morimoto}\ and\ \citenamefont
  {Furusaki}(2013)}]{PhysRevB.88.125129}%
  \BibitemOpen
  \bibfield  {author} {\bibinfo {author} {\bibfnamefont {T.}~\bibnamefont
  {Morimoto}}\ and\ \bibinfo {author} {\bibfnamefont {A.}~\bibnamefont
  {Furusaki}},\ }\bibfield  {title} {\bibinfo {title} {Topological
  classification with additional symmetries from {Clifford} algebras},\ }\href
  {https://doi.org/10.1103/PhysRevB.88.125129} {\bibfield  {journal} {\bibinfo
  {journal} {Phys. Rev. B}\ }\textbf {\bibinfo {volume} {88}},\ \bibinfo
  {pages} {125129} (\bibinfo {year} {2013})}\BibitemShut {NoStop}%
\bibitem [{Note2()}]{Note2}%
  \BibitemOpen
  \bibinfo {note} {We do not employ the notation ``$2\protect \mathbb {Z}$''
  which is sometimes used to indicate that a naturally defined topological
  invariant is even. $2\protect \mathbb {Z}$ is isomorphic to $\protect \mathbb
  {Z}$.}\BibitemShut {Stop}%
\bibitem [{\citenamefont {Bruus}\ and\ \citenamefont
  {Flensberg}(2004)}]{BruusFlensberg}%
  \BibitemOpen
  \bibfield  {author} {\bibinfo {author} {\bibfnamefont {H.}~\bibnamefont
  {Bruus}}\ and\ \bibinfo {author} {\bibfnamefont {K.}~\bibnamefont
  {Flensberg}},\ }\href@noop {} {\emph {\bibinfo {title} {Many-body Quantum
  Theory in Condensed Matter Physics}}}\ (\bibinfo  {publisher} {Oxford
  University Press},\ \bibinfo {year} {2004})\BibitemShut {NoStop}%
\bibitem [{Note3()}]{Note3}%
  \BibitemOpen
  \bibinfo {note} {This matrix describes a rotation by $\pi $ about the $[101]$
  direction in spin space, which interchanges the $x$ and $z$
  components.}\BibitemShut {Stop}%
\bibitem [{\citenamefont {Schnyder}\ \emph {et~al.}(2012)\citenamefont
  {Schnyder}, \citenamefont {Brydon},\ and\ \citenamefont
  {Timm}}]{Schnyder_2012}%
  \BibitemOpen
  \bibfield  {author} {\bibinfo {author} {\bibfnamefont {A.~P.}\ \bibnamefont
  {Schnyder}}, \bibinfo {author} {\bibfnamefont {P.~M.~R.}\ \bibnamefont
  {Brydon}},\ and\ \bibinfo {author} {\bibfnamefont {C.}~\bibnamefont {Timm}},\
  }\bibfield  {title} {\bibinfo {title} {Types of topological surface states in
  nodal noncentrosymmetric superconductors},\ }\href
  {https://doi.org/10.1103/PhysRevB.85.024522} {\bibfield  {journal} {\bibinfo
  {journal} {Phys. Rev. B}\ }\textbf {\bibinfo {volume} {85}},\ \bibinfo
  {pages} {024522} (\bibinfo {year} {2012})}\BibitemShut {NoStop}%
\bibitem [{\citenamefont {Wang}\ \emph {et~al.}(2010)\citenamefont {Wang},
  \citenamefont {Qi},\ and\ \citenamefont {Zhang}}]{PhysRevLett.105.256803}%
  \BibitemOpen
  \bibfield  {author} {\bibinfo {author} {\bibfnamefont {Z.}~\bibnamefont
  {Wang}}, \bibinfo {author} {\bibfnamefont {X.-L.}\ \bibnamefont {Qi}},\ and\
  \bibinfo {author} {\bibfnamefont {S.-C.}\ \bibnamefont {Zhang}},\ }\bibfield
  {title} {\bibinfo {title} {Topological {Order} {Parameters} for {Interacting}
  {Topological} {Insulators}},\ }\href
  {https://doi.org/10.1103/PhysRevLett.105.256803} {\bibfield  {journal}
  {\bibinfo  {journal} {Phys. Rev. Lett.}\ }\textbf {\bibinfo {volume} {105}},\
  \bibinfo {pages} {256803} (\bibinfo {year} {2010})}\BibitemShut {NoStop}%
\bibitem [{\citenamefont {Wen}(2012)}]{PhysRevB.85.085103}%
  \BibitemOpen
  \bibfield  {author} {\bibinfo {author} {\bibfnamefont {X.}~\bibnamefont
  {Wen}},\ }\bibfield  {title} {\bibinfo {title} {Symmetry-protected
  topological phases in noninteracting fermion systems},\ }\href
  {https://doi.org/10.1103/PhysRevB.85.085103} {\bibfield  {journal} {\bibinfo
  {journal} {Phys. Rev. B}\ }\textbf {\bibinfo {volume} {85}},\ \bibinfo
  {pages} {085103} (\bibinfo {year} {2012})}\BibitemShut {NoStop}%
\end{thebibliography}%

\end{document}